\def\ps@headings{%
\def\@oddhead{\mbox{}\scriptsize\rightmark \hfil \thepage}%
\def\@evenhead{\scriptsize\thepage \hfil \leftmark\mbox{}}%
\def\@oddfoot{}%
\def\@evenfoot{}}
\newtheorem{theorem}{Theorem}
\newtheorem{lemma}{Lemma}
\newtheorem{corollary}{Corollary}
\newtheorem{definition}{Definition}
\newtheorem{constraint}{Constraint}
\newcommand{\conv}[1]{\operatorname{conv}\left(#1\right)}
\newcommand{\beqa}{\begin{eqnarray}}
\newcommand{\eeqa}{\end{eqnarray}}
\begin{document}
\title{Throughput-Optimal Multihop Broadcast on \\ Directed Acyclic Wireless Networks}
\author{
%\phantom{
%\IEEEauthorblockN{Abhishek Sinha, Georgios Paschos, 
%Chih-ping Li, and Eytan Modiano\\}
%\IEEEauthorblockA{Laboratory for Information and Decision Systems, Massachusetts Institute of Technology, Cambridge, MA 02139\\
%Email: \{sinhaa, gpasxos, cpli, modiano\}@mit.edu}
%}
\IEEEauthorblockN{Abhishek Sinha\IEEEauthorrefmark{1}, Georgios Paschos\IEEEauthorrefmark{2}, 
Chih-ping Li\IEEEauthorrefmark{3}, and Eytan Modiano\IEEEauthorrefmark{1}\\}
\IEEEauthorblockA{\IEEEauthorrefmark{1}Laboratory for Information and Decision Systems, Massachusetts Institute of Technology, Cambridge, MA 02139\\}
\IEEEauthorblockA{\IEEEauthorrefmark{2}Mathematical and Algorithmic Sciences Lab France Research Center, Huawei Technologies Co., Ltd.\\}
\IEEEauthorblockA{\IEEEauthorrefmark{3}Qualcomm Research, San Diego, CA\\
Email: \IEEEauthorrefmark{1}sinhaa@mit.edu,
\IEEEauthorrefmark{2}georgios.paschos@huawei.com,
\IEEEauthorrefmark{3}cpli@qti.qualcomm.com,
\IEEEauthorrefmark{1}modiano@mit.edu
}
\thanks{ Part of the paper appeared in the proceedings of INFOCOM, 2015, IEEE.}
%\thanks{ This work was supported by NSF Grant CNS-1217048,  ONR Grant N00014-12-1-0064, and  ARO MURI Grant W911NF-08-1-0238}
\thanks{ This work was supported by NSF Grant CNS-1217048,  ONR Grant N00014-12-1-0064, and  ARO MURI Grant W911NF-08-1-0238}
\thanks{\IEEEauthorrefmark{2} The work of G. Paschos was done while he was at MIT and affiliated with CERTH-ITI, and it was supported in part by the WiNC project of the Action: Supporting Postdoctoral Researchers, funded by national and Community funds (European Social Fund).}
\thanks{\IEEEauthorrefmark{3} The work of C.p.Li was done when he was a Postdoctoral scholar at LIDS, MIT.}
}
\maketitle

%%%%%%%%%%%%%%%%%%%%%%%%%%%%%%%%%%%%%%%%%%%%%%%%%%%%%%%%%%%%%%%%%%%%%%%%%%%%%%%%%%%%%%%%%

\begin{abstract}
We study the problem of efficiently broadcasting packets in multi-hop wireless networks.
%At each time slot, a network controller activates non-interfering links and forwards packets to all nodes at a common rate; the maximum rate is referred to as the broadcast capacity of the wireless network.
At each time slot the network controller activates a set of non-interfering links and forwards selected copies of packets on each activated link. A packet is considered jointly received only when \emph{all nodes} in the network have obtained a copy of it. The maximum rate of jointly received packets is referred to as the broadcast capacity of the network.
Existing policies achieve the broadcast capacity 
%in the wireless setting
by  balancing traffic over a set of spanning trees, 
which are difficult to maintain in a large and time-varying wireless network.
%whose maintenance is unsuitable for wireless.
We propose a new dynamic algorithm that achieves the broadcast capacity when the underlying network topology is a directed acyclic graph (DAG). This algorithm is decentralized, utilizes local queue-length information only and does not require the use of global topological structures such as spanning trees. The principal technical challenge inherent in the problem  is the absence of work-conservation principle due to the duplication of packets, which renders traditional queuing modelling inapplicable. We overcome this difficulty by studying relative packet deficits and imposing in-order delivery constraints to every node in the network.  
%neighbor node packet deficits (=differences in the number of distinct packets received) 
%spanning trees.
Although in-order packet delivery, in general, leads to degraded throughput in graphs containing cycles, we show that it is throughput optimal in DAGs and can be exploited to simplify the design and analysis of optimal algorithms. Our characterization leads to a polynomial time algorithm for computing the broadcast capacity of any wireless DAG under the primary interference constraints. Additionally, we propose a multiclass extension of our algorithm which can be effectively used for broadcasting in any network with arbitrary topology.
Simulation results show that the our algorithm has superior delay performance as compared to the tree-based approaches.
\end{abstract}
\section{Introduction and Related Work}
Broadcast refers to the fundamental network functionality of delivering data from a source node to all other nodes. 
For efficient broadcasting, we need to use appropriate packet replication and forwarding to eliminate unnecessary packet retransmissions.
 %by  eliminating unnecessary  retransmissions of data.  
%An alternative to broadcast is to use multiple  unicast sessions from the source to each node.
%However, the  use of broadcast yields superior performance since a number of unnecessary retransmissions are avoided.
This is especially important in power-constrained wireless systems which suffer from interference and collisions.
Broadcast applications include mission-critical military communications \cite{military},  
%disseminating software updates to smartphones \cite{smartphones}, 
live video streaming \cite{lstream}, and data dissemination in sensor networks \cite{akyildiz2002}.

The design of efficient wireless broadcast algorithms faces several challenges.
Wireless channels suffer from  interference, and a broadcast policy needs to activate non-interfering links at every time slot. Wireless network topologies undergo frequent changes, so that packet forwarding decisions must be made in an adaptive fashion. Existing dynamic multicast algorithms that balance traffic over spanning trees~\cite{swati} may be used for broadcasting, since broadcast is a special case of multicast.
These algorithms, however, are not suitable for wireless networks because enumerating all spanning trees is computationally prohibitive that needs to be performed repeatedly when the network topology changes with time.

In this paper, we study the fundamental problem of throughput optimal broadcasting in wireless networks.
%which in the special case of broadcast correspond to spanning trees.
%This is problematic in wireless; 
We consider a time-slotted system. At every slot, a scheduler decides which non-interfering wireless links to activate and which set of packets to forward over the activated links, so that all nodes receive packets at a common rate. The broadcast capacity is the maximum common reception rate of distinct packets over all scheduling policies.
%We then show that a specific policy has this property for any Directed Acyclic Graph.
 To the best of our knowledge, there does not exist any capacity-achieving scheduling policy for wireless broadcast without the use of spanning trees \footnote{Note that we exclude network-coding operations throughout the paper.}. The main contribution of this paper is to design provably optimal wireless broadcast algorithms that does not use spanning trees when the underlying topology is a DAG.

We start out with considering a rich class of scheduling policies $\Pi$ that perform arbitrary link activations and packet forwarding. We define the broadcast capacity $\lambda^*$ as the maximum common rate achievable over this policy class $\Pi$. We next enforce two constraints that lead to a smaller set of policies. First, we consider the subclass of policies $\Pi^{\text{in-order}}\subset \Pi$ that enforce the in-order delivery of packets. Second, we focus on the subset of policies $\Pi^* \subset \Pi^{\text{in-order}}$ that allows the reception of a packet by a node only if all its  incoming neighbours have received the packet.
%These constraints help us derive the dynamics of the constrained system and characterize its  performance. 
It is intuitively clear that the policies in the more structured class $\Pi^*$ are easier to describe and analyze, but may yield degraded throughput performance. We show the surprising result that when the underlying network topology is a directed acyclic graph (DAG), there is a control policy $\pi^* \in \Pi^*$ that achieves the broadcast capacity. In contrast, we prove the existence of a network containing a cycle in which no control policy in the policy-space $\Pi^{\text{in-order}}$ can achieve the broadcast capacity.
%there exists a policy $\pi^* \in \Pi^*$ which achieves the broadcast capacity of the network.
 
To enable the design of the optimal broadcast policy, we establish a \emph{queue-like dynamics} for the system-state, represented by relative packet deficits. This is non-trivial for the broadcast problem because explicit queueing structure is difficult to define in the network due to packet replications. We subsequently show that, the problem of achieving the broadcast capacity reduces to finding a scheduling policy \emph{stabilizing} the system, which can be accomplished by stochastic Lyapunov drift analysis techniques~\cite{tassiulas,neely2010stochastic}.  

In this paper, we make the following contributions:
\begin{itemize}
\item We define the broadcast capacity of a wireless network and show that it is characterized by an edge-capacitated graph $\widehat {\cal G}$ that arises from optimizing the time-averages of link activations. For integral-capacitated DAGs, the broadcast capacity is determined by the minimum in-degree of the graph $\widehat {\cal G}$, which is equal to the maximal number of edge-disjoint spanning trees.
\item We design a dynamic algorithm that utilizes local queue-length information to achieve the broadcast capacity of a wireless DAG network. This algorithm does not rely on spanning trees, has small computational complexity and is suitable for mobile networks with time-varying topology. This algorithm also yields a constructive proof of a version of Edmonds' disjoint tree-packing theorem \cite{edmonds} which is generalized to wireless activations but specialized to DAG topology. 
%Since it achieves the broadcast capacity, it optimizes throughput performance on a wireless network.
\item Based on our characterization of the broadcast capacity, we derive a polynomial-time algorithm to compute the broadcast capacity of any wireless DAG under primary interference constraints. 
\item We propose a randomized multiclass extension of our algorithm, which can be effectively used to do broadcast on wireless networks with arbitrary underlying topology.
\item  We demonstrate the superior delay performance of our DAG-policy, as compared to centralized tree-based algorithm \cite{swati}, via numerical simulations. We also explore the efficiency/complexity trade-off of our proposed multiclass extension through extensive simulations.
%We  demonstrate by simulations the delay performance of our policy and compare it to prior work \cite{swati}.
\end{itemize}
 
In the literature, a simple method for wireless broadcast is to use packet flooding~\cite{sasson2003probabilistic}.
The flooding approach, however, leads to redundant transmissions and collisions, known as \emph{broadcast storm} \cite{tseng2002broadcast}.
%This paper focuses on identifying the maximum performance of broadcast in a wireless network and proposing a method that achieves maximum performance by intelligently controlling  packet transmissions.
%For efficient broadcast, several copies of the same packet must exist in the network. This  complicates the description of the network state and makes it harder to deal with packets delivered out-of-order.
In the wired domain, it has been shown that forwarding \emph{useful} packets at random is optimal for broadcast \cite{massoulie2007randomized}; this approach does not extend to the wireless setting due to interference and the need for scheduling~\cite{Towsley2008}. Broadcast on wired networks can also be done using network coding~\cite{rate,Ho2005}. However, efficient link activation under network coding remains an open problem.

%For practical purposes we further  require the broadcast  solution to adapt to changes in data traffic intensity as well as to work well in large-scale systems.
%Therefore, we seek a dynamic, distributed and highly adaptive mechanism that achieves maximum performance in the wireless environment with low computational complexity.
%Broadcast can be done  using network coding .
%Network coding is a new technique  \cite{ahlswede2000network}  which  distributedly solves  the wired multicast problem \cite{deb2005network}. 
%As above, with network coding it is not clear how to activate wireless links.
%Additionally, network coding schemes suffer from large overhead and computational complexity \cite{langberg2009recent}, increased decoding delay and legacy issues, all of which make it less attractive for the wireless application domain that is the focus of this paper.
%We note that contrary to multicast, using network coding does not increase the information-theoretic broadcast capacity \cite{Li2004}.

The rest of the paper is organized as follows. 
Section \ref{System model1} introduces the wireless network model. In Section \ref{capacity_section}, we define the broadcast capacity of a wireless network and provide a useful upper bound from a cut-set consideration. In Section \ref{sec:algorithm}, we propose a dynamic broadcast policy that achieves the broadcast capacity in a DAG. In section \ref{sec:computation}, we propose an efficient algorithm for computing the broadcast capacity of any wireless DAG under primary interference constraints. Our DAG-broadcast algorithm is extended to networks with arbitrary topology in section \ref{cyclic_extension}. Illustrative simulation results are presented in Section \ref{sec:simulations}. Finally, we conclude our paper in section \ref{sec:conclusion}.
\section{The Wireless Network Model} \label{System model1}

We consider a time-slotted wireless network model represented by the tuple $\big(\mathcal{G}(V,E),\bm{c},\mathcal{S}\big)$, where $V$ is the set of nodes, $E$ is the set of directed links, $\bm{c} = (c_{e}, e\in E)$ is the capacity-vector of the links and $\mathcal{S}$ is the set of all feasible link-activation vectors, whose elements $\bm{s} = (s_{e}, e\in E) \in \mathcal{S}$ are binary vectors such that the links $e \in E$ with $s_{e}=1$ can be activated simultaneously. The structure of the activation set $\mathcal{S}$ depends on the underlying interference model. For example, under the \emph{primary} interference constraint (also known as \emph{node-exclusive interference constraint} \cite{joo2009greedy}), the set $\mathcal{S}$ consists of all binary vectors corresponding to matchings of the underlying graph $\mathcal{G}$~\cite{west2001introduction}, see Fig.~\ref{network}.
%As an example, see Fig.~\ref{network} for a wireless network under primary interference constraints. 
 In the case of a wired network, $\mathcal{S}$ is the set of all binary vectors since there is no interference. 
%In this paper, we consider a general wireless interference model and the activation set $\mathcal{S}$ is defined accordingly. 
In this paper we allow an arbitrary link-activation set $\mathcal{S}$, which captures different wireless interference models. 
    %Thus it is clear that $\mathcal{S}$ consists of a subset of vertices of the $|E|$ dimensional hypercube and hence, bounded.  \\
%In Fig.~\ref{network} below we provide an example of a wireless network and the feasible link activations under primary interference constraint.
Let $r\in V$ be the source node at which stochastic broadcast traffic is generated (or arrives externally). The number of packets generated at the node $r$ at slot $t$ is given by the random variable $A(t) \in \mathbb{Z}_+$, which is i.i.d. over slots with mean $\lambda$.
%At source $r$, $A(t)$ packets are generated in slot $t$. 
These packets are to be delivered efficiently to all other nodes in the network.

\begin{figure}[h!]
\centering
\subfigure[a wireless network]{
	\begin{overpic}[width=0.22\textwidth]{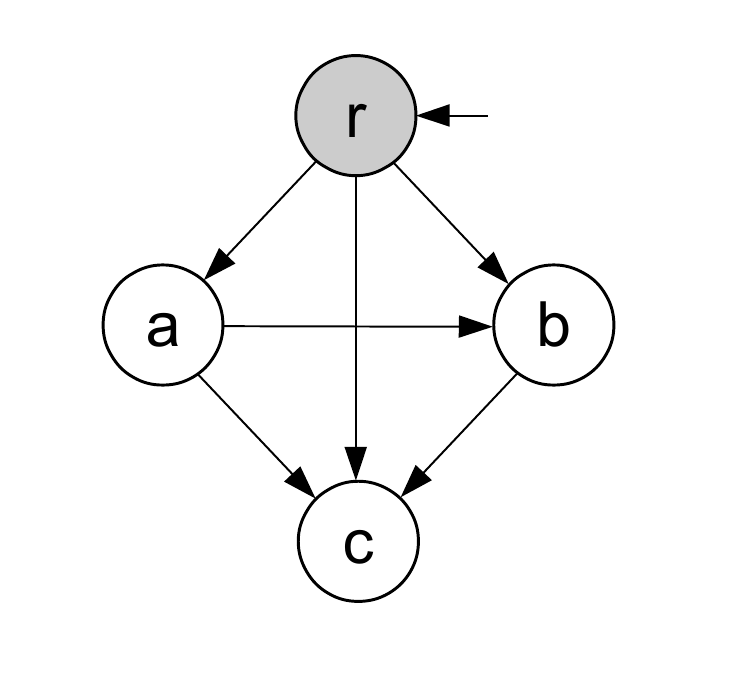}
	\put(69,77){$\lambda$}
	\end{overpic}
	\label{network-a}
}
\subfigure[activation vector $\bm{s}_1$]{
	\begin{overpic}[width=0.22\textwidth]{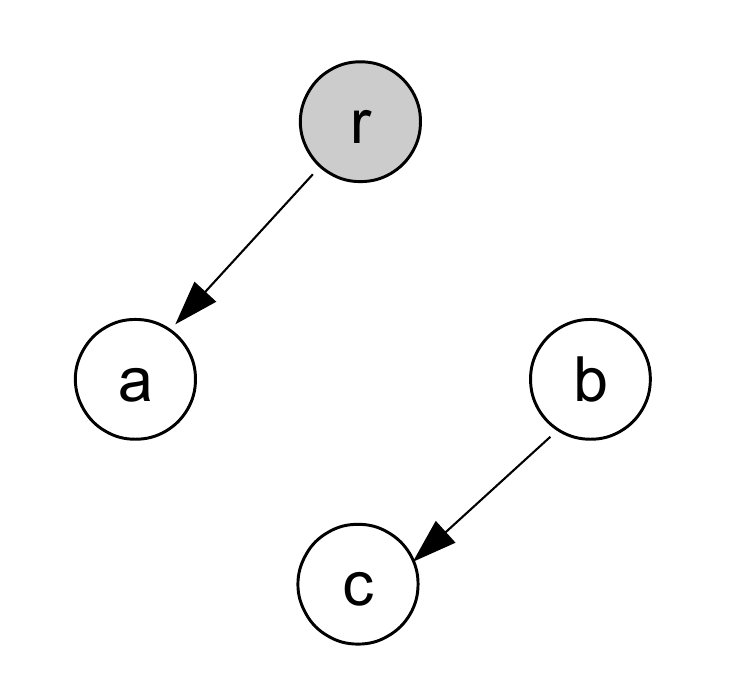}
	\end{overpic}
}
\subfigure[activation vector $\bm{s}_2$]{
  \begin{overpic}[width=0.22\textwidth]{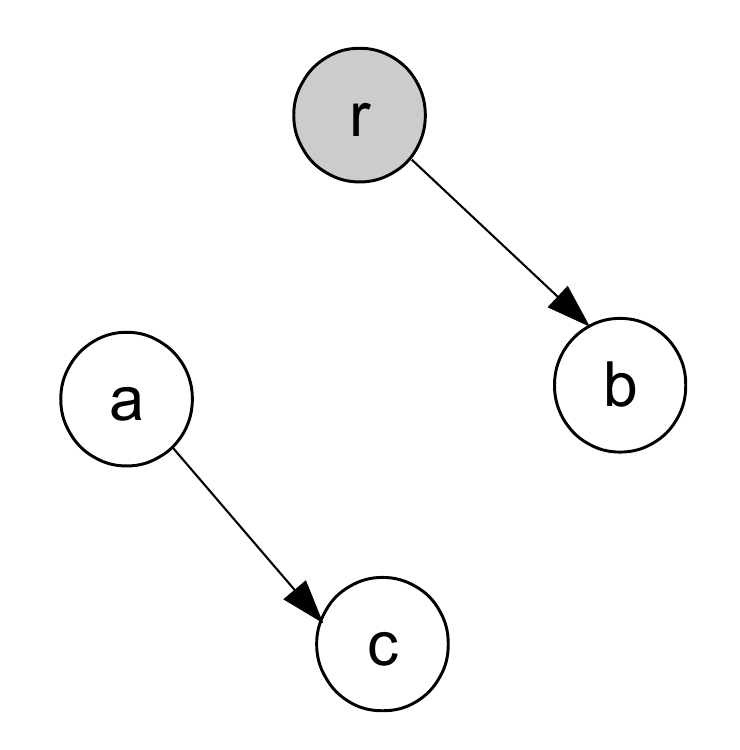}
  \end{overpic}
}
\subfigure[activation vector $\bm{s}_3$]{
  \begin{overpic}[width=0.22\textwidth]{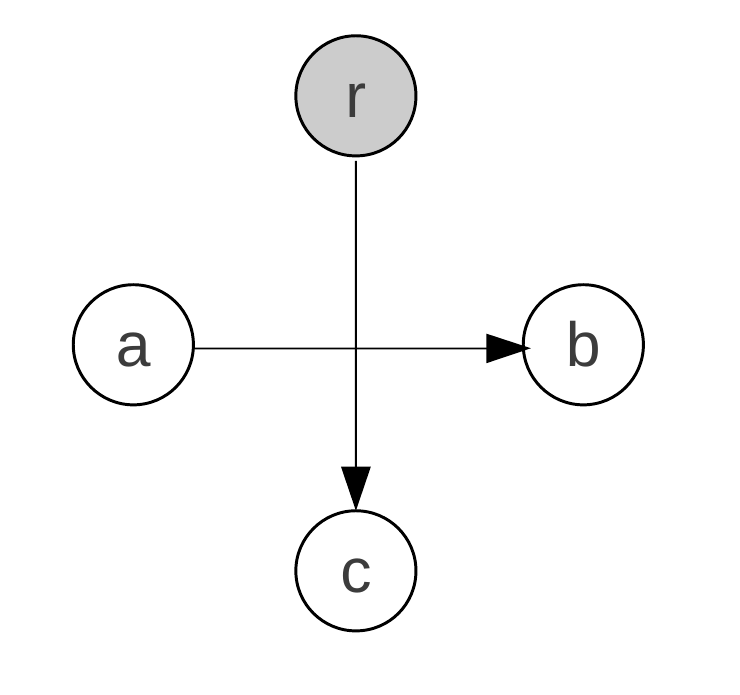}
  \end{overpic}
}
\caption{\small{A wireless network and its three feasible link activations under the primary interference constraint.}}
\label{network}
\end{figure}
\section{Wireless Broadcast Capacity} \label{capacity_section}

%We start with studying the broadcast capacity of a wireless network.
 Intuitively, the network supports a broadcast rate $\lambda$ if there exists a scheduling policy under which all network nodes can receive distinct packets at  rate $\lambda$. The broadcast capacity is the maximally supportable broadcast rate in the network.
% The broadcast capacity is also related to the class of scheduling policies that activate the wireless links and forward copies of packets.
  %$\lambda$ from the source node $r$, given that the packet arrival rate at source $r$ is $\lambda$. 
Formally, we consider a class $\Pi$ of scheduling policies where each policy $\pi\in\Pi$ consists of a sequence of actions $\{\pi_t\}_{t\geq 1}$ executed at every slot $t$. Each action $\pi_{t}$ comprises of two operations: (i) the scheduler activates a subset of links by choosing a  feasible activation vector  $\boldsymbol s(t)\in\mathcal{S}$; (ii) each node $i$ forwards a subset of packets (possibly empty) to node $j$ over an activated link $(i, j)\in \mathbbm{1}(\bm{s}(t)=1)$, subject to the link capacity constraint. The class $\Pi$ includes policies that use all past and future information, and may forward any subset of packets over a link.
  
Let $R_i^{\pi}(t)$ be the number of distinct packets received by node $i \in V$ from the beginning of time up to time $t$, under a policy $\pi\in \Pi$. The time average $\liminf_{T\to \infty} R^{\pi}_i(T)/T$ is  the rate of distinct packets received at  node $i$.
%$R_i^{\pi}(t)$ will play a key role in the analysis that follows.
 
\begin{definition}
%Let the random variable $R_i^{\pi}(t)$ be the number of distinct packets received by node $i \in V$ 
%from the beginning of time up to 
%time $t$, under a policy $\pi\in \Pi$. 
A policy $\pi$ is called a 
{``broadcast policy of rate $\lambda$''} 
if 
%if and only if 
all nodes receive distinct packets at rate $\lambda$, i.e.,
\begin{equation} \label{bcdef}
\min_{i\in V} \liminf\limits_{T\to \infty} \frac{1}{T} R^{\pi}_i(T)= \lambda, \quad \text{w. p. $1$,}
\end{equation}
where $\lambda$ is the packet arrival rate at the source node $r$.
%\footnote{We can use the following more rigorous condition in~\eqref{bcdef}:
%\begin{equation} \label{eq:703}
%\min_{i\in V} \liminf_{T\to\infty} \frac{1}{T} R_{i}^{\pi}(T) = \lambda, \quad \text{w. p. $1$},
%\end{equation}
%under which all results in this paper still hold.}
%note that a rigorous definition shall use infimum instead of the regular limit. In our technical report \cite{report}, we show  that all  policies that satisfy (\ref{bcdef}) with liminf also have limits, hence the two definitions are equivalent. 
\end{definition}

\begin{definition} \label{capacity_def}
The broadcast capacity $\lambda^*$ of a wireless network is  the supremum of all arrival rates $\lambda$ for which there exists a broadcast policy $\pi \in \Pi$ of rate $\lambda$.
\end{definition}

\subsection{An upper bound on broadcast capacity $\lambda^*$} \label{broadcast_ub_section}
We characterize the broadcast capacity $\lambda^*$ of a wireless network by proving a useful upper bound. This upper bound is understood as a necessary cut-set bound of an associated edge-capacitated  graph that  reflects the time-averaged behaviour of the wireless network. We first give an intuitive explanation of the bound, assuming that the involved limits exist. Then in the proof of Theorem \ref{broadcast_ub} we rigorously prove the bound by relaxing this assumption.
%To ease exposition, we will first perform an intuitive analysis where it is assumed that all limits under consideration exist \emph{almost surely}. Although this is not necessarily true for the class $\Pi$, the proof of Theorem \ref{broadcast_ub} will establish that indeed consideration of policies without the above constraint does not alter throughput optimality.
%We provide an intuitive explanation of the bound that will be formalized in Theorem \ref{broadcast_ub} as follows. 

Fix a policy $\pi\in\Pi$. Let $\beta_e^{\pi}$ be the fraction of time link $e\in E$ is activated under $\pi$; that is, we define the vector
\begin{equation} \label{eq:101}
\bm{\beta}^{\pi} = (\beta_e^{\pi}, e\in E) = \lim_{T\to \infty}\frac{1}{T}\sum_{t=1}^{T}\bm{s}^{\pi}(t),
\end{equation}
where $\bm{s}^{\pi}(t)$ is the link-activation vector under policy $\pi$ in slot $t$. 
The average flow rate over a link $e$ under the policy $\pi$ is upper bounded by the product of the link capacity and the fraction of time the link $e$ is activated, i.e., $c_{e} \beta_{e}^{\pi}$.
%It follows that the average flow rate over a link $e$ the policy $\pi$ can utilize is at most $c_{e} \beta_{e}^{\pi}$ by the link capacity constraint. 
Hence, we can define an edge-capacitated graph 
%$\widehat{\mathcal{G}}=(V, E, \widehat{\bm{c}})$ 
$\widehat{\mathcal{G}}^\pi=(V, E,(\widehat{c}_{e}))$ 
associated with policy $\pi$, where each directed link $e\in E$ has capacity $\widehat{c}_{e} = c_{e} \beta_{e}^{\pi}$; see Fig.~\ref{cut_figure} for an example of such an edge-capacitated graph.
Next, we provide a bound on the broadcast capacity by  maximizing the broadcast capacity on the ensemble of graphs $\widehat {\cal G}^\pi$ over all feasible vectors $\bm{\beta}^{\pi}$.

We define a \emph{proper cut} $U$ of the network graph $\widehat{\mathcal{G}}^\pi$ as a proper subset of the node set $V$ that contains  the source node $r$. Define the link subset
\begin{equation} \label{eq:604}
E_{U} = \{(i, j)\in E \mid i\in U, \ j\notin U\}.
\end{equation}
%Each proper cut $U$ is associated with a cut vector $\bm{u} = (u_{ij})$ defined by
%\begin{equation} \label{eq:106}
%u_{ij} = \begin{cases} c_{ij} & \text{if } (i,j) \in E : i\in U, j \notin U, \\ 0 &\text{otherwise.} \end{cases}
%\end{equation}
Since $U \subset V$, there exists a node $n\in V\setminus U$. 
 %$U$ always separates the source node $r$ and some other node $n$ in the network. 
 Consider the throughput of node $n$ under policy $\pi$. The max-flow min-cut theorem shows that the throughput of node $n$  cannot exceed the total link capacity $\sum_{e\in E_{U}} c_{e} \, \beta_{e}^{\pi}$ across the cut $U$. Since the achievable broadcast rate $\lambda^{\pi}$ of policy $\pi$ is an upper-bound on the throughput of all nodes, we have $\lambda^{\pi} \leq \sum_{e\in E_{U}} c_{e} \, \beta_{e}^{\pi}$. This inequality holds for all proper cuts $U$ and we have
\begin{equation} \label{eq:102}
\lambda^{\pi} \leq \min_{\text{$U$: a proper cut}}\, \sum_{e\in E_{U}} c_{e} \, \beta_{e}^{\pi}.
\end{equation}
Equation~\eqref{eq:102} holds for any policy $\pi\in\Pi$. Thus, the broadcast capacity $\lambda^{*}$ of the wireless network satisfies
\begin{align*}
\lambda^{*} = \sup_{\pi\in\Pi} \lambda^{\pi} &\leq \sup_{\pi\in\Pi} \min_{\text{$U$: a proper cut}}\, \sum_{e\in E_{U}} c_{e} \, \beta_{e}^{\pi} \\ 
&\leq \max_{\bm{\beta} \in \conv{\mathcal{S}}} \min_{\text{$U$: a proper cut}}\, \sum_{e\in E_{U}} c_{e}\, \beta_{e},
\end{align*}
where the last inequality holds because the vector $\bm{\beta}^{\pi}$ associated with any policy $\pi\in\Pi$ lies in the convex hull of the activation set $\mathcal{S}$. 
Our first theorem formalizes the above intuitive characterization of the broadcast capacity $\lambda^*$ of a wireless network.
\begin{theorem} \label{broadcast_ub}
The broadcast capacity $\lambda^{*}$ of a wireless network $\mathcal{G}(V,E,\bm{c})$ with activation set $\mathcal{S}$ is upper bounded as follows:
\begin{equation} \label{eq:602}
\lambda^{*} \leq \max_{\bm{\beta} \in \conv{\mathcal{S}}} \bigg(\min_{\text{\emph{$U$: a proper cut}}}\, \sum_{e\in E_{U}} c_{e}\, \beta_{e}\bigg).
\end{equation}
\end{theorem}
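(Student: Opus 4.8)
The plan is to fix an arbitrary broadcast policy $\pi\in\Pi$ of rate $\lambda$ and prove the scalar inequality $\lambda\le\max_{\bm\beta\in\conv{\mathcal{S}}}\min_{U}\sum_{e\in E_U}c_e\beta_e$; since $\lambda^*$ is the supremum of all rates admitting a broadcast policy, taking the supremum over $\lambda$ then yields \eqref{eq:602}. The informal derivation preceding the theorem already isolates the two ingredients: a per-cut counting bound on the number of packets a node can have received, and a passage to a time-averaged activation vector. The entire difficulty is to carry out the second step rigorously, without assuming that the limit defining $\bm{\beta}^{\pi}$ in \eqref{eq:101} exists.

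First I would work with the finite-horizon averages $\bar{\bm{s}}^{\pi}(T):=\frac1T\sum_{t=1}^{T}\bm{s}^{\pi}(t)$. Each such vector is a convex combination of points of $\mathcal{S}$ and hence lies in the polytope $\conv{\mathcal{S}}$, which is compact because $\mathcal{S}\subseteq\{0,1\}^{|E|}$ is finite. Therefore there is a subsequence $T_k\to\infty$ along which $\bar{\bm{s}}^{\pi}(T_k)\to\bm{\beta}$ for some $\bm{\beta}\in\conv{\mathcal{S}}$; crucially, this single limit point will be used simultaneously for all proper cuts.

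Next I would establish the cut bound. Fix a proper cut $U$ and any node $n\in V\setminus U$ (which exists since $U\subset V$ is proper). Every distinct packet that $n$ has received by slot $T$ was generated at the source $r\in U$, so the sequence of transmissions that delivered a copy to $n$ must traverse some link of $E_U$ at some slot $\le T$, and distinct packets yield distinct such transmission events; consequently
\[
R^{\pi}_{n}(T)\ \le\ \sum_{e\in E_U} c_e\sum_{t=1}^{T}s^{\pi}_{e}(t)\ =\ T\sum_{e\in E_U}c_e\,\bar{s}^{\pi}_{e}(T).
\]
Dividing by $T$, passing to the subsequence $\{T_k\}$, and using that $\pi$ has rate $\lambda$ (so $\liminf_{T}R^{\pi}_{n}(T)/T\ge\lambda$ for every node, in particular for $n$), one obtains
\begin{align*}
\lambda &\le \liminf_{T\to\infty}\frac{R^{\pi}_{n}(T)}{T} \le \liminf_{k\to\infty}\frac{R^{\pi}_{n}(T_k)}{T_k} \\
&\le \lim_{k\to\infty}\sum_{e\in E_U}c_e\,\bar{s}^{\pi}_{e}(T_k) = \sum_{e\in E_U}c_e\,\beta_e,
\end{align*}
where the second inequality is the elementary fact that the $\liminf$ of a sequence is at most its $\liminf$ along any subsequence, and the third is the cut bound above. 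Since $U$ was an arbitrary proper cut and $\bm{\beta}$ does not depend on $U$, this gives $\lambda\le\min_{U}\sum_{e\in E_U}c_e\beta_e\le\max_{\bm{\beta}'\in\conv{\mathcal{S}}}\min_{U}\sum_{e\in E_U}c_e\beta'_e$, and taking the supremum over all achievable broadcast rates $\lambda$ completes the proof.

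The main obstacle is exactly this handling of the time averages: one must extract a \emph{single} convergent subsequence of the activation averages so that one vector $\bm{\beta}\in\conv{\mathcal{S}}$ serves all cuts at once, and then chain the $\liminf$/limit steps in the correct order. The combinatorial claim that every packet received outside $U$ must have crossed $E_U$ is intuitively clear but should be phrased carefully, since packet replication allows a packet to cross a given cut several times — only the existence of at least one crossing per received packet is needed for the counting.
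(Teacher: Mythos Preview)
Your proposal is correct and follows essentially the same approach as the paper: bound each $R_n^\pi(T)$ by the activated capacity across an arbitrary proper cut, then use compactness of $\conv{\mathcal{S}}$ to extract a single limit point $\bm{\beta}$ of the empirical activation averages that works for all cuts simultaneously. Your organization is in fact slightly cleaner than the paper's---you extract the convergent subsequence of $\bar{\bm{s}}^\pi(T)$ once up front, whereas the paper first passes to the minimum over cuts and then extracts a subsequence (and a sub-subsequence) in its Lemma~\ref{conv_hull}; but the underlying ideas are identical.
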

\begin{IEEEproof}[Proof of Theorem~\ref{broadcast_ub}]
See Appendix~\ref{broadcast_ub_proof}.
\end{IEEEproof}

%\textbf{*The proof of Theorem~\ref{broadcast_ub} involves a cut-set bound. 
%Hence, it is possible to show that Theorem~\ref{broadcast_ub} remains true even if we allow network coding operations.
%This can be leveraged to show that the use of network coding does not increase broadcast performance in our setting, similar to  \cite{Li2004}.*}

\begin{figure} [t!] 
\centering
\begin{overpic}[width=0.23\textwidth]{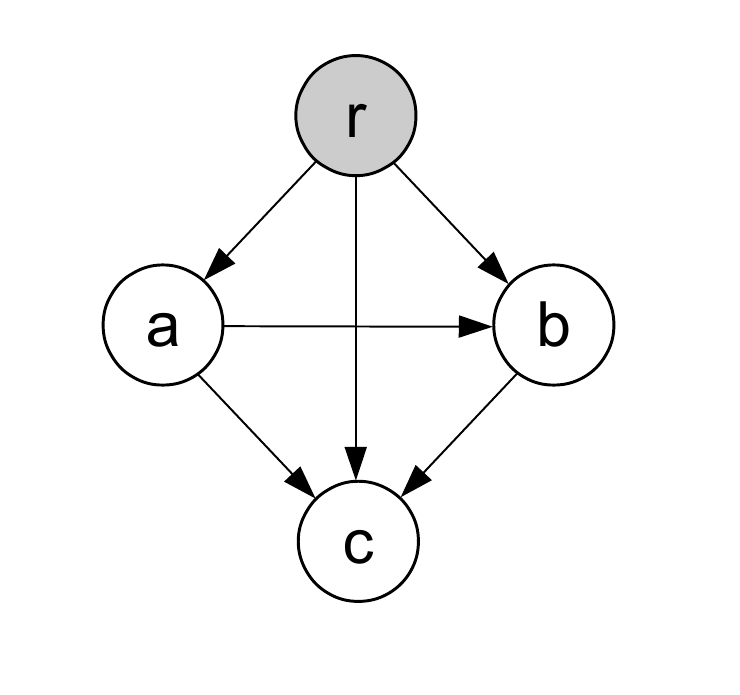}
  \put(25,66){\footnotesize $1/2$}
  \put(63,31){\footnotesize $1/2$}
  \put(52,53){\footnotesize $1/4$}
  \put(23,31){\footnotesize $1/4$}
    \put(62,66){\footnotesize $1/4$}
      \put(36,36){\footnotesize $1/4$}
  \end{overpic}
  \caption{\small{The edge-capacitated graph $\widehat{\mathcal{G}}^\pi$ for the wireless network with unit link capacities in Fig.~\ref{network} and under the time-average vector $\boldsymbol\beta^{\pi}=(1/2,1/4,1/4)$. The link weights are the capacities $c_{e} \beta_{e}^{\pi}$. The minimum proper cut in this graph has value $1/2$ (when $U = \{r,a,c\}$ or $\{r,b,c\}$). An upper bound on the  broadcast capacity is obtained by  maximizing this value over all vectors $\boldsymbol\beta^{\pi} \in\conv{\mathcal{S}}$.}}
  \label{cut_figure}
\end{figure}

\subsection{In-order packet delivery} \label{constraint1}

Studying the performance of any arbitrary broadcast policy $\pi \in \Pi$ is formidable because packets are replicated across the network and may be received out of order. 
To avoid unnecessary re-transmissions, the nodes must keep track of the identity of the received set of packets, which complicates the system state; because instead of the number of packets received, the system state is properly described here by the subset of packets received at each of the nodes.
%, \emph{resulting in an exponentially large network state space that prohibits the design and analysis of an efficient broadcast policy. Specifically, if we denote with $A(t)$  the number of packets generated at the source node $r$ by time $t$, then the network state in slot $t$ is represented by $\bm{S}(t) = (S_i(t), i\in V)$, where $S_i(t)$ is the collection of packets received by node $i$ up to time $t$ and takes values in the $2^{A(t)}$ possible subsets of the packets $ \{1,2,\ldots, A(t)\}$.}

To simplify the system state, we focus on the subset of policies $\Pi^{\text{in-order}}\subset \Pi$ that enforce the following constraint: 
\begin{constraint}[In-order packet delivery]\label{con:1}
A network node is allowed to receive a packet $p$  only if all packets $\{1, 2, \ldots, p-1\}$ have been received by that node. 
\end{constraint}
%\textbf{We could relax this constraint within a time slot.}
 In-order packet delivery is useful in live media streaming applications~\cite{lstream}, where buffering out-of-order packets incurs increased delay that degrades video quality. In-order packet delivery greatly simplifies the network state space.   
%The set of distinct packets received at a node are now completely described by the largest ID of the received packets. 
Let $R_i(t)$ be the number of distinct packets received by node $i$ by time $t$. 
 For policies in $\Pi^{\text{in-order}}$,  the set of received packets by time $t$ at node $i$ is $\{1,\dots, R_i(t)\}$.  
  Therefore, the network state in slot $t$ is given by the vector $\bm{R}(t)=\big(R_i(t), i\in V\big)$.
  %  Under any policy $\pi \in \Pi_{\text{in-order}}$, the network state in slot $t$ is completely represented by the vector $\bm{R}(t)=\big(R_i(t), i\in V\big)$, where $R_i(t)$ is the largest ID of the received packets by node $i$ by time $t$. Due to in-order packet delivery, $R_{i}(t)$ is also the total number of distinct packets received by node $i$, as in definition \ref{bcdef}.

In section \ref{sec:algorithm} we will prove that there exists a throughput-optimal broadcast policy in the space $\Pi^{\text{in-order}}$ when the underlying network topology is a DAG. Ironically, Lemma \eqref{in_order} shows that there exists a network containing a cycle in which any broadcast policy in the space $\Pi^{\text{in-order}}$ is \emph{not} throughput optimal. Hence the space $\Pi^{\text{in-order}}$ can not, in general, be extended beyond DAGs while preserving throughput optimality.  
\begin{lemma} \label{in_order}
Let ${\lambda^*}_{\text{in-order}}$ be the broadcast capacity of the policy subclass $\Pi^{\text{in-order}} \subset \Pi$ that enforces in-order packet delivery. There exists a network topology containing a directed cycle such that ${\lambda^*}_{\text{in-order}} < \lambda^{*}$.
\end{lemma}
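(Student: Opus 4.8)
The plan is to exhibit a small, explicit network containing a directed cycle, compute its broadcast capacity $\lambda^*$ using the characterization of Theorem~\ref{broadcast_ub} (or a direct argument), and then show that any policy obeying Constraint~\ref{con:1} is forced to waste capacity on one of the cycle edges, giving a strictly smaller achievable rate. The natural candidate is a network in which two nodes, say $a$ and $b$, each have a directed path to the other (forming a 2-cycle $a\to b$ and $b\to a$), together with a source $r$ that feeds both via separate links; the point of the cycle is that, under an optimal unconstrained policy, node $a$ should receive some packets "from the $r\to a$ side" and others "from the $b\to a$ side" in an interleaved order, while in-order delivery at $a$ forbids accepting packet $p+1$ along one route before packet $p$ has arrived along the other. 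Concretely I would take a topology like: $r\to a$, $r\to b$, $a\to b$, $b\to a$ (with node $a$, $b$ each needing to hear everything), unit link capacities, and primary interference so that at most a matching is active per slot; the interference structure is chosen so that the unconstrained optimum splits the packet stream, sending odd-indexed packets along $r\to a\to b$ and even-indexed packets along $r\to b\to a$.

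First I would pin down $\lambda^*$: using Theorem~\ref{broadcast_ub} I compute $\max_{\bm\beta\in\conv{\mathcal S}}\min_U \sum_{e\in E_U} c_e\beta_e$ for this graph, and separately construct an explicit (non-in-order) policy achieving that bound, so that $\lambda^*$ equals the cut value exactly. Second, and this is the core of the argument, I would bound $\lambda^*_{\text{in-order}}$ from above. Fix any $\pi\in\Pi^{\text{in-order}}$. Because deliveries at every node are prefixes $\{1,\dots,R_i(t)\}$, a packet $p$ can only be forwarded on edge $a\to b$ after node $a$ has received $\{1,\dots,p\}$, hence after node $a$ has received $p$ via one of its incoming edges. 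I would set up a conservation/accounting inequality: the number of distinct packets that must cross edge $a\to b$ plus the number that must cross edge $b\to a$ is bounded below in terms of the total received count, and — crucially — because of the in-order constraint the two cycle edges cannot both be "productive" simultaneously in the way the unconstrained splitting policy exploits. Formally I would argue that in-order delivery forces node $a$'s reception times and node $b$'s reception times to be essentially the same up to a constant, so the two cross-cut edges $a\to b$, $b\to a$ are effectively serializing the same prefix of packets rather than carrying disjoint halves, which reduces the effective cut capacity and yields $\lambda^*_{\text{in-order}}\le \lambda^* - \delta$ for an explicit $\delta>0$.

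I would formalize the middle step by a Lyapunov / flow-balance argument over a long horizon $[1,T]$: let $R(T)=\min_i R_i(T)$; total transmissions on each edge over $[1,T]$ are bounded by $c_e\beta_e T$; summing the in-order feasibility constraints across the cut separating $\{r\}\cup\{$one cycle node$\}$ from the other cycle node, and using that a packet forwarded on $a\to b$ at time $t$ must have been received at $a$ by $t$ (not merely injected), I derive that $R(T)\le (\text{cut value} - \delta)T + o(T)$. Dividing by $T$ and letting $T\to\infty$ gives the strict inequality.

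The main obstacle I anticipate is the second step: turning the intuitive statement "in-order delivery prevents the two cycle edges from carrying disjoint packet sets" into a rigorous capacity loss. One has to rule out clever policies that, for example, route almost everything one way and only occasionally use the reverse edge, or that exploit bursty arrivals; the cleanest route is probably to choose the interference set $\mathcal S$ so that the *only* way to meet the unconstrained cut bound is the strict odd/even interleaving, then show that interleaving is incompatible with Constraint~\ref{con:1} at the downstream node, and finally argue any deviation from interleaving strictly underutilizes some cut. Getting the constant $\delta$ and the $o(T)$ error terms to line up — i.e. making the argument robust to finite-horizon slack and to the $\liminf$ in the rate definition — is the part that will require the most care.
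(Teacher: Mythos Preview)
Your overall strategy --- exhibit a cycle and show that in-order delivery forces some cycle edge to be idle --- is exactly the paper's approach. However, your concrete instantiation does not work. On the three-node graph $\{r,a,b\}$ with edges $r\to a$, $r\to b$, $a\to b$, $b\to a$ under the \emph{primary} interference constraint, any two edges share a vertex, so every feasible activation is a single edge and $\sum_e\beta_e\le 1$. The min-cut upper bound then gives $\lambda^*\le 1/2$, and this is already achieved by the in-order policy that alternates between $r\to a$ and $a\to b$ (node $a$ and node $b$ each receive rate $1/2$, packets in order). Hence $\lambda^*_{\text{in-order}}=\lambda^*=1/2$ for your example and there is no gap. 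The ``odd packets on $r\to a\to b$, even packets on $r\to b\to a$'' scheme you describe as the unconstrained optimum is neither feasible (both legs of each path share a vertex with each other and with the other path) nor needed.

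The fix is easy: drop the interference constraint (take a \emph{wired} network). Then on your two-cycle, $\lambda^*=2$ via the odd/even split over the two spanning trees $r\to a\to b$ and $r\to b\to a$, while under any $\pi\in\Pi^{\text{in-order}}$ the edges $a\to b$ and $b\to a$ can never both carry a new packet in the same slot (that would require $R_a(t)>R_b(t)$ and $R_b(t)>R_a(t)$ simultaneously), so one of them is idle at least half the time on average; by the $a\leftrightarrow b$ symmetry this caps the receiver rate at $3/2<2$. This is precisely the mechanism the paper uses, except the paper works with a $3$-cycle $a\to b\to c\to a$ (plus $r$ feeding all three), observes that all three cycle edges being productive would force $R_a>R_b>R_c>R_a$, and then averages to get one edge idle at least a $1/3$ fraction of the time, yielding $\lambda^*_{\text{in-order}}\le 5/3<2$. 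Your vaguer ``Lyapunov / flow-balance over $[1,T]$'' plan is more machinery than needed: the argument is a one-line pigeonhole on the $R_i(t)$ ordering around the cycle, followed by time-averaging and symmetry, not a drift analysis.
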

\begin{IEEEproof}[Proof of Lemma~\ref{in_order}]
See Appendix \ref{in_order_proof}.
\end{IEEEproof}
We will return to the problem of broadcasting in networks with arbitrary topology in Section \ref{cyclic_extension}.
%Lemma~\ref{in_order} exploits the existence of a directed cycle to show that in-order packet delivery strictly reduces the broadcast capacity of the specific cyclic wireless network. 

\subsection{Achieving the broadcast capacity in a DAG} \label{sec:601}

%In the rest of the paper, we study optimal broadcast algorithms when the network topology is a DAG.
At this point we concentrate our attention to Directed Acyclic Graphs (DAGs). Graphs in this class are appealing for our analysis because they possess well-known topological ordering of the nodes \cite{west2001introduction}. For DAGs, the upper bound~\eqref{eq:602} on the broadcast capacity $\lambda^{*}$ in Theorem \ref{broadcast_ub} will be simplified further. For each receiver node $v\neq r$, consider the proper cut $U_v$ that separates the network from node $v$:
\begin{equation}\label{eq:pcutsrec}
U_v=V\setminus \{v\}.%, v\in V\setminus\{r\}
\end{equation}
Using these cuts $\{U_{v}, v\neq r\}$, we define another upper bound $\lambda_{\text{DAG}}$ on the broadcast capacity $\lambda^{*}$ as:
\begin{align}  \label{eq:DAGcap}
\lambda_{\text{DAG}}&\triangleq \max_{\bm{\beta}\in \conv{\mathcal{S}}}\min_{\{U_v, v\neq r\} }\, \sum_{e\in E_{U_{v}}} c_{e}\, \beta_{e}\\
&\geq \max_{\bm{\beta} \in \conv{\mathcal{S}}} \min_{\text{$U$: a proper cut}}\, \sum_{e\in E_{U}} c_{e}\, \beta_{e}\geq \lambda^{*}, \notag
\end{align}
where the first inequality uses the subset relation $\{U_v, v\neq r\}\subseteq \{\text{$U$: a proper cut}\}$ and the second inequality follows from Theorem \ref{broadcast_ub}. In Section \ref{sec:algorithm}, we will propose a dynamic policy that belongs to the policy class $\Pi_{\text{in-order}}$ and achieves the broadcast rate $\lambda_{\text{DAG}}$. Combining this result with~\eqref{eq:DAGcap}, we establish that the broadcast capacity of a DAG is given by
 \begin{align} 
 \lambda^*=\lambda_{\text{DAG}} &=\max_{\bm{\beta}\in \conv{\mathcal{S}}}\min_{\{U_v, v\neq r\} }\, \sum_{e\in E_{U_{v}}} c_{e}\, \beta_{e}, \label{eq:603} \\
 &= \max_{\bm{\beta} \in \conv{\mathcal{S}}} \min_{\text{$U$: a proper cut}}\, \sum_{e\in E_{U}} c_{e}\, \beta_{e}. 
 \nonumber \end{align}
This is achieved by a broadcast policy that uses in-order packet delivery. In other words, we show that imposing the in-order packet delivery constraint does not reduce the broadcast capacity when the underlying topology is a DAG.\\
From a computational point of view, the equality in Eqn. \eqref{eq:603} is attractive, because it implies that for computing the broadcast capacity of any wireless DAG, it is enough to consider only those cuts that separate a single (non-source) node from the source-side. Note that, there are only $|V|-1$ of such cuts, in contrast with the total number of cuts, which is exponential in the size of the network.  This fact will be exploited in section \ref{sec:computation} to develop a strongly poly-time algorithm for computing the broadcast capacity of any DAG under the primary interference constraints.

\section{DAG Broadcast Algorithm}\label{sec:algorithm}

In this section we design an optimal broadcast policy for wireless DAGs. We start by imposing an additional constraint that leads to a new subclass of policies $\Pi^*\subseteq \Pi^{\text{in-order}}$. As we will see, policies in $\Pi^*$ can be described in terms of relative packet deficits which constitute a simple dynamics.
%As we will see, $\Pi^*$ admits a simple network dynamics by means of relative packet deficits.
We analyze the dynamics of the minimum relative packet deficit at each node~$j$, where the minimization is over all incoming neighbours of $j$. This quantity plays the role of virtual queues in the system and we design a dynamic control policy that stabilizes them. The main result of this section is to show that this control policy achieves the broadcast capacity whenever the network topology is a DAG.
%\textbf{by creating proper system states that have simple dynamics over time and provide useful metrics so that the network broadcast capacity is achieved by optimizing the metrics.}

\subsection{System-state by means of packet deficits}

%We have shown that the network state can be represented by the vector $\bm{R}(t) = (R_{i}(t), i\in V)$ under the in-order packet delivery constraint, where $R_{i}(t)$ is the total number of packets received by node $i$ by time $t$. 
We showed in Section~\ref{constraint1} that, constrained to the policy-space $\Pi^{\text{in-order}}$, the system-state is completely represented by the vector $\bm{R}(t)$.
%We have shown that the network state can be represented by the vector $\bm{R}(t) = (R_{i}(t), i\in V)$ under the in-order packet delivery constraint, where $R_{i}(t)$ is the total number of packets received by node $i$ by time $t$. 
To simplify the system dynamics further, we restrict $\Pi^{\text{in-order}}$ further as follows. \\
\indent We say that node $i$ is an \emph{in-neighbor} of node $j$ iff there exists a directed link $(i, j)\in E$ in the underlying graph $\mathcal{G}$.
% space $\Pi_{\text{in-order}}$. 
%We consider the subset of policies in $\Pi_{\text{in-order}}$ such that
\begin{constraint}\label{con:2}
A packet $p$ is eligible for  transmission to node $j$ at a slot $t$ only if all the in-neighbours of  $j$ have received packet $p$ in some previous slot.
 \end{constraint}

%in other words, a packet $p$ cannot be transmitted to node $j$ unless all the in-neighbors of node $j$ have received the packet. 
We denote this new policy-class by $\Pi^{*} \subseteq \Pi^{\text{in-order}}$. We will soon show that it contains an optimal policy. Fig.~\ref{policy_fig} shows the relationship among different policy classes\footnote{We note that, if the network contains a directed cycle, then a deadlock might occur under a policy in $\Pi^{*}$ and may yield zero broadcast throughput. However, this problem does not arise when the underlying topology is a DAG.}.\\
%See Fig.~\ref{pi*_figure} for an example of the operations of a policy in $\Pi^{*}$.
%An efficient data broadcast policy shall observe the system state $\bm{R}(t)$ and make link activation and packet forwarding decisions in every slot $t$. Providing an explicit expression of how $R_{i}(t)$ is updated over time, however, is an important but challenging task. To come up with the proper system dynamics based on which an optimal broadcast policy can be designed, 
%\begin{figure} [h!] 
%\centering
%\begin{overpic}[width=0.39\textwidth]{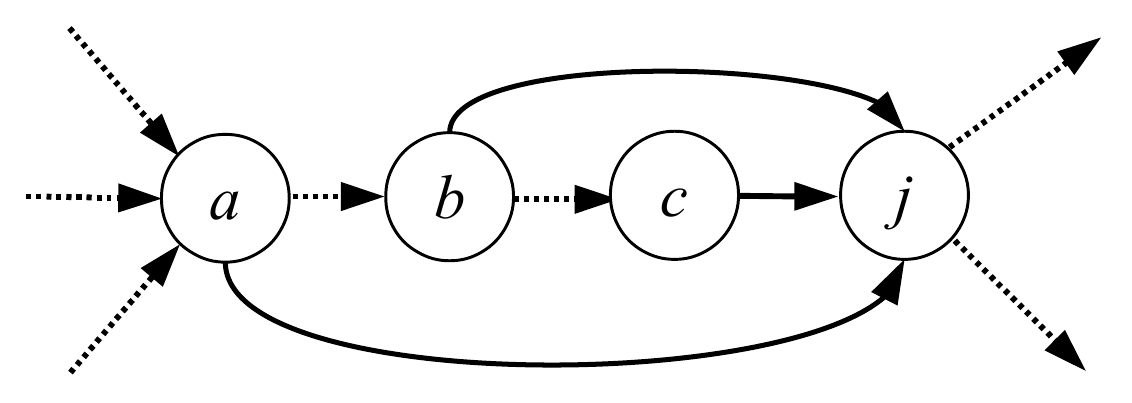}
%  \put(11,28){\footnotesize $R_a(t)=18$}
%  \put(27,8){\footnotesize $R_b(t)=15$}
%  \put(52,8){\footnotesize $R_c(t)=14$}
%  \put(88,17){\footnotesize $R_j(t)=10$}
%  \end{overpic}
%  \caption{\small Under a policy $\pi\in \Pi^*$, the set of packets available for transmission to node $j$ in slot $t$ is $\{11,12,13,14\}$,
%  which are available at all in-neighbors of node $j$.
%  %since packets $\{15,16,17, 18\}$ are not yet received by the in-neighbor $c$.
%  The in-neighbor of $j$ inducing the smallest packet deficit is $i^*_t=c$, and $X_{j}(t) = \min\{Q_{aj}(t), Q_{bj}(t), Q_{cj}(t)\} = 4$. }
%  \label{pi*_figure}
%\end{figure}
%We aim at constructing a data broadcast policy in $\Pi^{*}$ that achieves the broadcast capacity $\lambda^{*}$ in the general policy space $\Pi$ when the underlying network graph is a DAG.
Following properties of the system-states $\bm{R}(t)$ under a policy $\pi \in \Pi^{*}$ will be useful.
\begin{lemma} \label{Q_positivity_lemma}
For $j \neq r$, let $\text{In}(j)$ denote the set of in-neighbors of a node $j$ in the network. 
%Suppose that packets are indexed in the order of their arrival to the source node $r$ (i.e., packet $p$ is the $p$th packet that arrives at the source $r$). 
%Let $R_{i}(t)$ be the total number of distinct packets received by node $i$ by time $t$; node $i$ has received packets $\{1, 2, \ldots, R_{i}(t)\}$ by time $t$ due to the in-order delivery constraint. 
Under any policy $\pi\in\Pi^{*}$, we have:
\begin{enumerate}
\item[(1)] $R_{j}(t) \leq \min_{i\in \text{In}(j)} R_{i}(t)$ 
\item[(2)] The indices of packets $p$ that are eligible to be transmitted to the node $j$ at slot $t$ is given by
\[
\big\{p \mid R_{j}(t)+1 \leq p \leq \min_{i\in\text{In}(j)} R_{i}(t)\big\}.
\]
\end{enumerate}
\end{lemma}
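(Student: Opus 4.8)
The plan is to prove both parts by induction on the slot index $t$, exploiting the fact that under Constraint~\ref{con:2} a packet can enter node $j$ only after every in-neighbour of $j$ already holds it. Throughout, I will use the in-order property (Constraint~\ref{con:1}, inherited since $\Pi^{*}\subseteq\Pi^{\text{in-order}}$), which guarantees that the set of packets held at any node $k$ at time $t$ is exactly the contiguous block $\{1,\dots,R_{k}(t)\}$; this is what makes the scalar quantities $R_{k}(t)$ a faithful description of the state.

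For part (1), the base case $t=0$ is immediate because $R_{k}(0)=0$ for all $k$. For the inductive step, suppose $R_{j}(t)\leq \min_{i\in\text{In}(j)}R_{i}(t)$ and consider slot $t+1$. The quantity $R_{j}$ can only increase when $j$ receives one or more new packets; by Constraint~\ref{con:2}, any packet $p$ received by $j$ in slot $t$ must already have been received by every $i\in\text{In}(j)$ in some slot $\le t$, hence $p\le R_{i}(t)\le R_{i}(t+1)$ for all $i\in\text{In}(j)$. Combining this with in-order delivery (so the newly received packets are precisely $\{R_j(t)+1,\dots,R_j(t+1)\}$), we get $R_{j}(t+1)\le R_{i}(t+1)$ for every $i\in\text{In}(j)$, i.e. $R_{j}(t+1)\le \min_{i\in\text{In}(j)}R_{i}(t+1)$, which closes the induction. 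Part (2) is then a direct consequence: by in-order delivery the packets already at $j$ are $\{1,\dots,R_j(t)\}$, so an eligible packet must have index $\ge R_j(t)+1$; and by Constraint~\ref{con:2} it must have been received by every in-neighbour, hence have index $\le \min_{i\in\text{In}(j)}R_i(t)$. Conversely any index in that range satisfies both constraints, so the eligible set is exactly $\{p \mid R_j(t)+1\le p\le \min_{i\in\text{In}(j)}R_i(t)\}$; note part (1) guarantees this set is well-defined (possibly empty, never "negative").

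The only subtlety I anticipate is bookkeeping about the timing convention — whether "$R_{i}(t)$" counts packets received strictly before slot $t$ or through the end of slot $t$ — since Constraint~\ref{con:2} refers to reception in "some previous slot." I would fix the convention up front (say $R_i(t)$ counts receptions up to and including slot $t$, and eligibility in slot $t$ depends on $\bm R(t-1)$ or on what was received strictly earlier), and then the monotonicity $R_i(t)\le R_i(t+1)$ used above is automatic since receptions never vanish. No deeper obstacle is expected; the lemma is essentially an unwinding of the two constraints, and the induction is there only to propagate the in-neighbour inequality forward in time.
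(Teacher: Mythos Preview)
The paper states this lemma without proof, treating both parts as immediate consequences of Constraints~\ref{con:1} and~\ref{con:2}; there is no argument in the appendix to compare against. Your inductive proof is correct and is exactly the natural way to make the implicit reasoning explicit: the key observations---that in-order delivery makes $\{1,\dots,R_k(t)\}$ the packet set at node $k$, and that Constraint~\ref{con:2} forces any packet entering $j$ in slot $t$ to already lie in every in-neighbour's set---are precisely what the paper relies on tacitly when it later writes $Q_{ij}(t)\ge 0$ and derives the dynamics~\eqref{bnd2}. The timing convention you flag is resolved by the paper's update rule (Step~5 of Algorithm~\ref{DAG_algo}), which makes $R_j(t)$ the count at the \emph{start} of slot $t$; with that fixed, your induction goes through verbatim and the monotonicity $R_i(t)\le R_i(t+1)$ closes the step $R_j(t+1)\le \min_i R_i(t)\le \min_i R_i(t+1)$.
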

%Fig.~\ref{policy_fig} shows the relationship between different policy classes discussed in this paper.
\begin{figure} [htbp] 
\centering
\hspace*{1.5in}
\begin{overpic}[width=0.24\textwidth]{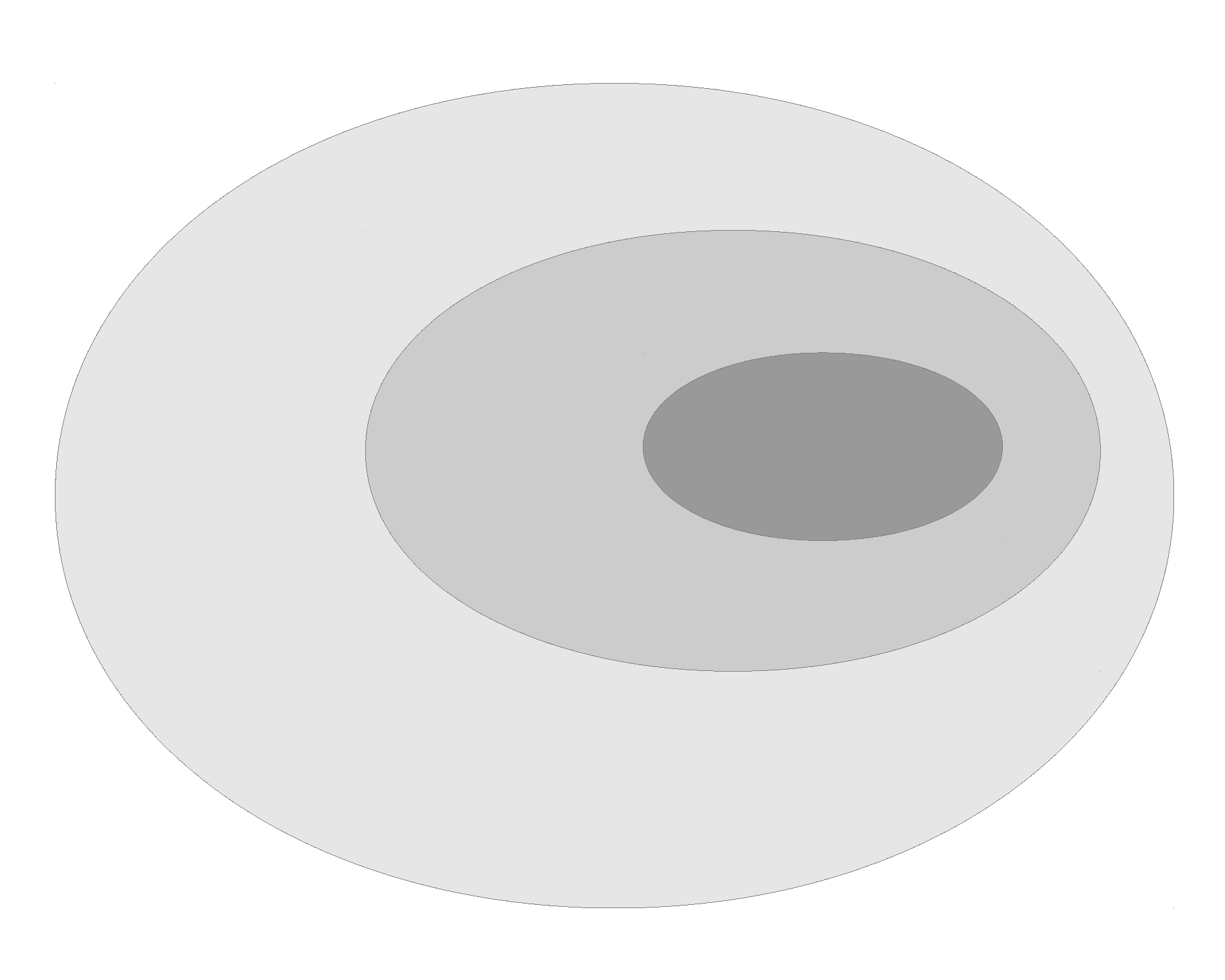}
\put(12,42){\scriptsize $\Pi$}
\put(33,42){\scriptsize  $\Pi^{\text{in-order}}$}
\put(71,42){\scriptsize  $\Pi^*$}
\put(61,38){\scriptsize  $\pi^*$}
\put(63,43){\circle*{2}}
\put(-90,64){\scriptsize  $\Pi$: all policies that perform}
\put(-81,57){\scriptsize   link activations and routing}
\put(-90,44){\scriptsize  $\Pi^{\text{in-order}}$: policies that enforce}
\put(-68,37){\scriptsize in-order packet delivery}
\put(-90,24){\scriptsize  $\Pi^*$: policies that allow reception}
\put(-78,17){\scriptsize only if all in-neighbors have}
\put(-78,10){\scriptsize received the specific packet}
 \end{overpic}
 \caption{\small Containment relationships among different policy classes.}
 \label{policy_fig}
%\end{minipage}
\end{figure}

We define the \emph{packet deficit} over a directed link $(i,j)\in E$ by $Q_{ij}(t) = R_i(t)-R_j(t)$. Under a policy in $\Pi^*$, $Q_{ij}(t)$ is always non-negative because, by part (1) of Lemma~\ref{Q_positivity_lemma}, we have
\[
Q_{ij}(t) = R_{i}(t) - R_{j}(t) \geq \min_{k\in\text{In}(j)} R_{k}(t) - R_{j}(t) \geq 0.
\]
The quantity $Q_{ij}(t)$ denotes the number of packets received by node $i$ but not by node~$j$, upto time $t$.
%;  $Q_{ij}(t)$ can be interpreted as an indicator of packet deficiencies at node $j$ as compared to the upstream node $i$. 
Intuitively, if all packet deficits $Q_{ij}(t)$ are bounded asymptotically, the total number of packets received by any node is not lagging far from the total number of packets generated at the source; hence, the broadcast throughput will be equal to the packet generation rate.
%Consider a path $\widehat{e} = (e_{1}, \ldots, e_{k}, \ldots)$ as a sequence of directed links $e_{k} \in E$ from the source node $r$ to a network node $j$ (where the ingress node of the first link $e_{1}$ is the source node $r$ and the egress node of the last link in $\widehat{e}$ is node $j$). Then, the quantity $R_{r}(t) - R_{j}(t) = \sum_{e \in\widehat{e}} Q_{e}(t)$ is the number of packets that have arrived at the source $r$ but yet to be received by node $j$. Roughly speaking, if all state variables $\{Q_{e}(t), e\in E\}$ are kept bounded over time, then the throughput of each node $j$, which is $\lim_{t\to\infty} R_{j}(t)/t$, satisfies
%\begin{equation} \label{eq:103}
%\lim_{t\to\infty} \frac{R_{j}(t)}{t} = \lim_{t\to\infty} \frac{R_{r}(t)}{t} - \lim_{t\to\infty} \sum_{e\in\widehat{e}} \frac{Q_{e}(t)}{t} = \lambda - 0 = \lambda.
%\end{equation}
%Equation~\eqref{eq:103} shows that we can achieve the broadcast throughput $\lambda$ by stabilizing the ``virtual queues'' $\{Q_{e}(t), e\in E\}$, given that the exogenous data arrival rate $\lambda$ is supportable.

To analyze  the system dynamics under a policy in $\Pi^{*}$, it is useful to define the \emph{minimum packet deficit} at node $j\neq r$ by
\begin{equation} \label{X_def}
X_j(t) = \min_{i\in \text{In}(j)} Q_{ij}(t).
\end{equation}
From part~(2) of Lemma~\ref{Q_positivity_lemma}, $X_j(t)$ is the maximum number of packets that node $j$ is allowed to receive from its in-neighbors at slot $t$. As an example, Fig.~\ref{pi*_figure} shows that the packet deficits at node $j$, relative to the upstream nodes $a$, $b$, and $c$, are $Q_{aj}(t)=8$, $Q_{bj}(t)=5$, and $Q_{cj}(t)=4$, respectively. Thus $X_{j}(t)=4$ and node $j$ is only allowed to receive four packets in slot $t$ due to Constraint \ref{con:2}.
%the second constraint that defines the policy space $\Pi^{*}$. 
We can rewrite $X_{j}(t)$ as
\begin{equation} \label{eq:115}
X_{j}(t) = Q_{i_{t}^{*}j}(t), \quad \text{where } i_{t}^{*} = \arg\min_{i \in \text{In}(j)} Q_{ij}(t),
\end{equation}
and the node $i_{t}^{*}$ is the in-neighbor of node $j$ from which node $j$ has the smallest packet deficit in slot $t$; ties are broken arbitrarily in deciding $i_{t}^{*}$.\footnote{We note that the minimizer $i_{t}^{*}$ is a function of the node $j$ and the time slot $t$; we slightly abuse the notation by neglecting $j$ to avoid clutter.} Our optimal broadcast policy will be described in terms of the minimum packet deficits $\{X_{j}(t)\}$.

\begin{figure} [h]
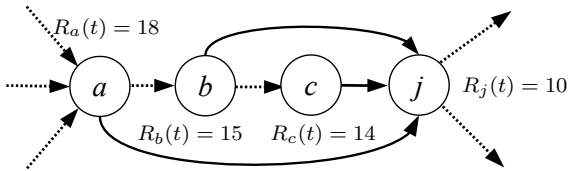
 
\centering
\begin{overpic}[width=0.39\textwidth]{pi_star_policy}
  \put(11,28){\footnotesize $R_a(t)=18$}
  \put(27,8){\footnotesize $R_b(t)=15$}
  \put(52,8){\footnotesize $R_c(t)=14$}
  \put(88,17){\footnotesize $R_j(t)=10$}
  \end{overpic}
  \caption{\small Under a policy $\pi\in \Pi^*$, the set of packets available for transmission to node $j$ in slot $t$ is $\{11,12,13,14\}$,
  which are available at all in-neighbors of node $j$.
  %since packets $\{15,16,17, 18\}$ are not yet received by the in-neighbor $c$.
  The in-neighbor of $j$ inducing the smallest packet deficit is $i^*_t=c$, and $X_{j}(t) = \min\{Q_{aj}(t), Q_{bj}(t), Q_{cj}(t)\} = 4$. }
  \label{pi*_figure}
\end{figure}

%\begin{figure} [h!] 
%\centering
%\begin{overpic}[width=0.39\textwidth]{Figures/Drawings/Q_ev_pic}
% \put(76,16){\footnotesize $j$}
% \put(19,16){\footnotesize $i_t^*$}
% \put(47.5,16){\footnotesize $k_2$}
% \put(50,30){\footnotesize $ \mu_{k_1j}(t)$}
% \put(55,20){\footnotesize $ \mu_{k_2j}(t)$}
% \put(50,4){\footnotesize $ \mu_{i_t^*j}(t)$}
% \put(0,20.5){\footnotesize $\mu_{l_2i_t^*}$}
% \put(11,30){\footnotesize $\mu_{l_1i_t^*}$}
% \put(12.5,7){\footnotesize $ R_{i_t^*}(t)$}
% \put(72,7.5){\footnotesize $ R_{j}(t)$}
%  \end{overpic}
%  \caption{Illustration of one-step evolution of $X_{j}(t)\equiv \min_{k\in \text{In}(j)}Q_{kj}(t)\equiv R_{i_t^*(j)}(t)-R_j(t) $ for a policy in class $\Pi^*$.}
%    \label{Q_ev_fig}
%\end{figure}

\subsection{The dynamics of the system variable $X_{j}(t)$}
We now analyze the dynamics of the system variables
\begin{equation} \label{eq:104}
X_j(t) = Q_{i_{t}^{*}j}(t) = R_{i_{t}^{*}}(t)-R_{j}(t)
\end{equation}
under a policy $\pi \in \Pi^*$. 
Define the service rate vector $\bm{\mu}(t) = (\mu_{ij}(t))_{(i, j)\in E}$ by
\[
\mu_{ij}(t) = \begin{cases} c_{ij} & \text{if $(i, j)\in E$ and the link $(i, j)$ is activated,} \\ 0 & \text{otherwise.} \end{cases}
\]
Equivalently, we may write $\mu_{ij}(t) = c_{ij} s_{ij}(t)$, and the number of packets forwarded over a link is constrained by the choice of the link-activation vector $\bm{s}(t)$. At node $j$,  the increase in the value of $R_j(t)$ depends on the identity of the received packets; in particular, node $j$ must receive distinct packets. Next, we clarify which packets are to be received by node $j$ at time $t$.
 
The number of available packets for reception at node $j$ is $\min\{X_{j}(t), \sum_{k\in V} \mu_{kj}(t)\}$, because: (i) $X_{j}(t)$ is the maximum number of packets node $j$ can receive from its in-neighbours subject to the Constraint \ref{con:2}; (ii) $\sum_{k\in V} \mu_{kj}(t)$ is the total incoming transmission rate at node $j$ under a given link-activation decision. 
%See Fig.~\ref{Q_ev_fig} for an illustration. 
To correctly derive the dynamics of $R_j(t)$, we consider the following efficiency requirement on policies in $\Pi^*$:
\begin{constraint}[Efficient forwarding]\label{con:3}
Given a service rate vector $\bm{\mu}(t)$, node $j$ pulls from the activated incoming links the following subset of packets (denoted by their indices)
\begin{equation} \label{eq:701}
\Big\{p \mid R_j(t)+1\leq p\leq R_j(t)+\min\{X_{j}(t), \sum_{k\in V} \mu_{kj}(t)\}\Big\},
\end{equation}
The specific subset of packets that are pulled over each incoming link are disjoint but otherwise arbitrary.\footnote{Due to Constraints \ref{con:1} and~\ref{con:2}, the packets in~\eqref{eq:701} have been received by all in-neighbors of node $j$.}
\end{constraint}
Constraint~\ref{con:3} requires that scheduling policies must avoid forwarding the same packet to a node over two different incoming links. Under certain interference models such as the primary interference model, at most one incoming link is activated at a node in a slot and Constraint \ref{con:3} is redundant.

In Eqn.~\eqref{eq:104}, the packet deficit $Q_{i_{t}^{*}j}(t)$ increases with $R_{i_{t}^{*}}(t)$ and decreases with $R_{j}(t)$, where $R_{i_{t}^{*}}(t)$ and $R_{j}(t)$ are both non-decreasing. Hence, we can upper-bound the increment of $Q_{i_{t}^{*}j}(t)$ 
 by the total capacity $\sum_{m\in V} \mu_{mi_{t}^{*}}(t)$ of the activated incoming links at node~$i_{t}^{*}$. 
 Also, we can express the decrement of $Q_{i_{t}^{*}j}(t)$  by the exact number of distinct packets received by node $j$ from its in-neighbours, and it is given by $\min\{X_{j}(t), \sum_{k\in V} \mu_{kj}(t)\}$ by Constraint \ref{con:3}. Consequently, the one-slot evolution of the variable $Q_{i_{t}^{*}j}(t)$ is given by\footnote{We emphasize that the node $i_{t}^{*}$ is defined in~\eqref{eq:115}, depends on the particular node $j$ and time $t$, and may be different from the node $i_{t+1}^{*}$.}
\begin{align}
Q_{i_t^*j}(t+1) &\ \leq \big(Q_{i_t^*j}(t) - \sum_{k\in V} \mu_{kj}(t)\big)^+  + \sum_{m\in V}\mu_{m i_t^*}(t) \notag \\
&\ = \big(X_j(t)  - \sum_{k\in V} \mu_{kj}(t)\big)^+ + \sum_{m\in V}\mu_{mi_t^*}(t), \label{eq:105}
\end{align}
where $(x)^{+} = \max(x, 0)$ and we recall that $X_j(t)=Q_{i_t^*j}(t)$. It follows that $X_{j}(t)$ evolves over slot $t$ according to
%\begin{equation} 
\begin{align}\label{bnd2}
 X_j(t+1) &\stackrel{(a)}{=} \min_{i\in \text{In}(j)} Q_{ij}(t+1) \stackrel{(b)}{\leq} Q_{i_t^*j}(t+1) \notag\\
&\stackrel{(c)}{\leq} \big(X_j(t)  - \sum_{k\in V} \mu_{kj}(t)\big)^+ + \sum_{m\in V} \mu_{mi_t^*}(t),
\end{align}
%\end{equation}
where the equality (a) follows the definition of $X_{j}(t)$, equality (b) follows because node $i_{t}^{*} \in \text{In}(j)$ and equality (c) follows from~Eqn. \eqref{eq:105}. In Eqn. ~\eqref{bnd2}, if $i_t^*=r$, we abuse the notation to define $\sum_{m\in V} \mu_{mr}(t) = A(t)$ for the source node $r$, where $A(t)$ is the number of exogenous packet generated at slot $t$.
%With slight abuse of notation, in~\eqref{bnd2} we define $\sum_{m\in V}\mu_{mr}(t)$ as the exogenous arrivals $A(t)$ at the source node $r$ in slot $t$.

\subsection{The optimal broadcast policy} \label{lyapunov}

%Every network node can receive the throughput that is equal to a feasible packet arrival rate $\lambda$ if the minimum deficiency variables $X_{j}(t) = \min_{i\in\text{In}j)} Q_{ij}(t) = \min_{i\in\text{In}(j)} R_{i}(t) - R_{j}(t)$ for all nodes $j\neq r$ are kept bounded over time. 
Our broadcast policy is designed to keep the minimum deficit process $\bm{X}(t)$ stable.
For this, we regard the variables $X_{j}(t)$ as virtual queues that follow the dynamics~\eqref{bnd2}. By performing drift analysis on the virtual queues $X_{j}(t)$, we propose the following max-weight-type broadcast policy $\pi^*$, described in Algorithm \ref{DAG_algo}. We have $\pi^* \in \Pi^{*}$ and it enforces the constraints \ref{con:1}, \ref{con:2}, and \ref{con:3}.
 %that enforces the two requirements: (i) in-order packet delivery; (ii) a network node can only receive packets that have been possessed by all its in-neighbors. 
 We will show that this policy achieves the broadcast capacity $\lambda^{*}$ of a wireless network over the general policy class $\Pi$ when the underlying topology is a DAG.

%\noindent \rule[0.05in]{3.5in}{0.01in}

\begin{algorithm} 
\caption{Optimal Broadcast Policy $\pi^{*}$ for a Wireless DAG:}\label{DAG_algo}
At each slot $t$, the network-controller observes the state-variables $\{R_{j}(t), j\in V\}$ and executes the following actions
\begin{algorithmic}[1]
%\textbf{Optimal Broadcast Policy $\pi^{*}$ over a Wireless DAG:}
%\\ \phantom{a}
%\REQUIRE . 
%At slot $t$, note the values $R_{j}(t), j\in V$.
%\textbf{Step 1:}
\STATE For each link $(i, j)\in E$, compute the deficit $Q_{ij}(t) = R_{i}(t) - R_{j}(t)$ and the set of nodes $K_{j}(t)\subset \text{out}(j)$ for which node $j$ is their deficit minimizer, given as follows 
\begin{equation} \label{eq:110}
K_{j}(t) \gets  \big\{k\in V\mid j = \arg\min_{m\in\text{In}(k)} Q_{mk}(t)\big\}.
\end{equation}
 The ties are broken arbitrarily (e.g., in favor of the highest indexed node) in finding the $\arg\min(\cdot)$ in Eqn.\eqref{eq:110}. \\
%\textbf{Step 2:} 
 \STATE Compute $X_{j}(t) = \min_{i\in\text{In}(j)} Q_{ij}(t)$ for $j\neq r$ and assign to link $(i, j)$ the weight
\begin{equation} \label{eq:111}
W_{ij}(t) \gets \big(X_{j}(t) - \sum_{k\in K_{j}(t)} X_{k}(t)\big)^{+}.
\end{equation}
%where $W_{ij}(t)$ is the minimum deficit of node $j$ minus that of all nodes for which node $j$ is the deficit minimizer. Intuitively, the term $W_{ij}(t)$ arises because while delivering a packet to node $j$ decreases $X_{j}(t)$ by one, it also increases $X_{k}(t)$ by one for all nodes for which node $j$ is an in-neighbor and the deficit minimizer. \\
%\textbf{Step 3:} 
 \STATE In slot $t$, choose the link-activation vector $\bm{s}(t) = (s_{e}(t), e\in E)$ such that
\begin{equation} \label{eq:601}
%\bm{s}(t)  \in \arg\max_{\bm{s} = (s_{e}, e\in E)\in \mathcal{S}} \sum_{e\in E} c_{e} s_{e} W_{e}(t)
\bm{s}(t)  \in \arg\max_{ \bm{s}\in \mathcal{S}} \sum_{e\in E} c_{e} s_{e} W_{e}(t).
\end{equation}
\STATE Every node $j\neq r$ uses activated incoming links to pull  packets $\{R_j(t)+1, \dots,R_j(t)+ \min\{\sum_ic_{ij} s_{ij}(t), X_{j}(t)\}\}$ from its in-neighbors according to Constraint \ref{con:3}.\\
% Over each link $(i, j)\in E$, node $i$ transmits the next $\min\{c_{ij} s_{ij}(t), X_{j}(t)\}$ packets to node $j$ according to the in-order delivery constraint.
%\textbf{what happens if two links reach the same node? I think every node should pull packets from incoming neighbors}
%\textbf{Step 4:} 
 \STATE The vector $(R_{j}(t), j\in V)$ is updated as follows:
\[
R_{j}(t+1) \gets \begin{cases} R_{j}(t) + A(t), & j = r, \\ R_{j}(t) + \min\{\sum_ic_{ij} s_{ij}(t), X_{j}(t)\}, & j\neq r, \end{cases}
\]
%and $R_{j}(0)=0$ for all $j\in V$.
\end{algorithmic}
\end{algorithm} 
 
%\noindent \rule[0.05in]{3.5in}{0.01in}

We illustrate the Algorithm \ref{DAG_algo} in an example in Fig.~\ref{algorithm_fig}. 
\begin{figure}[t!] 
\subfigure{
\label{fig:601}
\begin{overpic}[width=0.23\textwidth]{Network_2b}
\put(41,92){\scriptsize \textbf{Step 1}}
\put(57,80){\scriptsize $R_r(t)=10$}
\put(-1,61){\scriptsize $R_a(t)=3$}
\put(67,61){\scriptsize $R_b(t)=3$}
\put(58,21){\scriptsize $R_c(t)=2$}
\put(102,81){\scriptsize $Q_{ra}(t)=7$}
\put(102,71){\scriptsize $Q_{rb}(t)=7$}
\put(102,61){\scriptsize $Q_{rc}(t)=8$}
\put(102,51){\scriptsize $Q_{ab}(t)=0$}
\put(102,41){\scriptsize $Q_{ac}(t)=1$}
\put(102,31){\scriptsize $Q_{bc}(t)=1$}
\put(142,81){\scriptsize $K_{r}(t)=\{a\}$}
\put(142,71){\scriptsize $K_{a}(t)=\{b,c\}^*$}
\put(142,61){\scriptsize $K_{b}(t)=\{\emptyset\}$}
\put(142,51){\scriptsize $K_{c}(t)=\{\emptyset\}$}
\end{overpic}
}
\subfigure{
\label{fig:602}
\begin{overpic}[width=0.23\textwidth]{Network_2b}
\put(41,92){\scriptsize \textbf{Step 2}}
%\put(57,80){\scriptsize $R_r(t)=10$}
\put(-1,61){\scriptsize $X_a(t)=7$}
\put(85,71){\scriptsize $X_b(t)=0$}
\put(58,21){\scriptsize $X_c(t)=1$}
\put(70,81){\scriptsize $W_{ra}(t)=(X_a(t)-X_b(t)-X_c(t))^+=6$}
%\put(111,71){\scriptsize $=6$}
\put(120,71){\scriptsize $W_{rb}(t)=(X_b(t))^+=0$}
\put(120,61){\scriptsize $W_{rc}(t)=(X_c(t))^+=1$}
\put(120,51){\scriptsize $W_{ab}(t)=(X_b(t))^+=0$}
\put(120,41){\scriptsize $W_{ac}(t)=(X_c(t))^+=1$}
\put(120,31){\scriptsize $W_{bc}(t)=(X_c(t))^+=1$}
\end{overpic}
}
\subfigure{
\label{fig:603}
\begin{overpic}[width=0.23\textwidth]{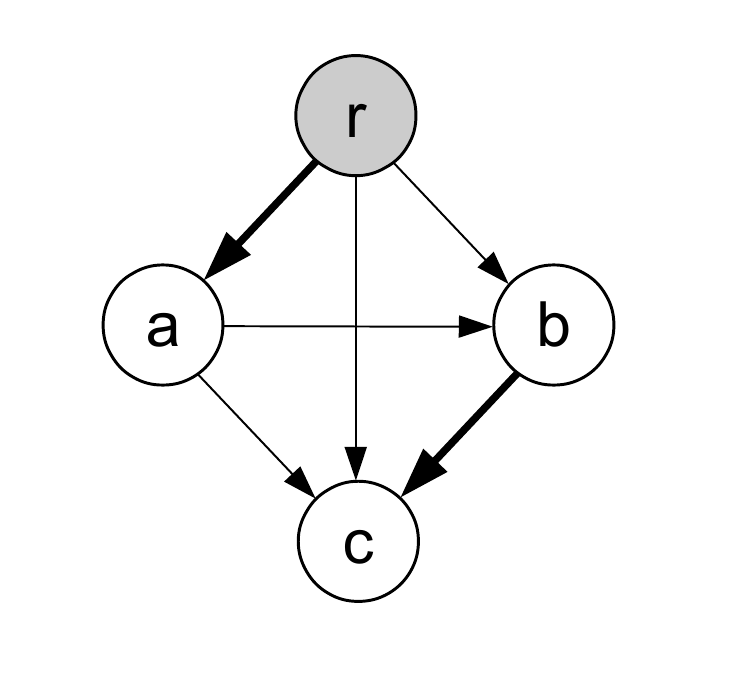}
\put(41,92){\scriptsize \textbf{Step 3}}
%\put(57,80){\scriptsize $R_r(t)=10$}
\put(-1,36){\scriptsize $R_a(t)=3$}
\put(58,16){\scriptsize $R_c(t)=2$}
\put(102,81){\scriptsize $\mathbf{s}_1$: $W_{ra}(t)+W_{bc}(t)=7$}
\put(102,71){\scriptsize $\mathbf{s}_2$: $W_{rb}(t)+W_{ac}(t)=1$}
\put(102,61){\scriptsize $\mathbf{s}_3$: $W_{rc}(t)+W_{ab}(t)=1$}
\put(102,51){\scriptsize Choose the link-activation vector $\mathbf{s}_1$}
\put(102,41){\scriptsize Forward the next packet \#4 over $(r,a)$}
\put(102,31){\scriptsize Forward the next packet \#3 over $(b,c)$}
\put(29,67){\footnotesize  \#4}
\put(63,29){\footnotesize  \#3}
\end{overpic}
}
\subfigure{
\label{fig:604}
\begin{overpic}[width=0.23\textwidth]{Network_2b}
\put(41,92){\scriptsize \textbf{Step 4}}
\put(57,80){\scriptsize $R_r(t+1)=11$}
\put(-6,65){\scriptsize $R_a(t+1)=4$}
\put(65,62){\scriptsize $R_b(t+1)=3$}
\put(58,21){\scriptsize $R_c(t+1)=3$}
\put(112,81){\scriptsize One packet arrives at the source}
 \end{overpic}
}
 \caption{\small Running the optimal broadcast policy $\pi^{*}$ in slot $t$ in a wireless network with unit-capacity links and under the primary interference constraint. Step 1: computing the deficits $Q_{ij}(t)$ and $K_{j}(t)$; a tie is broken in choosing node $a$ as the in-neighbor deficit minimizer for node $c$, hence $c\in K_{a}(t)$; node $b$ is also a deficit minimizer for node $c$. Step 2: computing $X_{j}(t)$ for $j\neq r$ and $W_{ij}(t)$. Step 3: finding the link activation vector that is a maximizer in~\eqref{eq:601} and forwarding the next in-order packets over the activated links. Step 4: a new packet arrives at the source node $r$ and the values of $\{R_{r}(t+1), R_{a}(t+1), R_{b}(t+1), R_{c}(t+1)\}$ are updated.
 }
 \label{algorithm_fig}
%\end{minipage}
\end{figure}
The next theorem demonstrates the optimality of the broadcast policy $\pi^{*}$.
\newpage 
\begin{theorem} \label{main_theorem}
If the underlying network graph $\mathcal{G}$ is a DAG, then for any exogenous packet arrival rate $\lambda <\lambda_{\text{DAG}}$, the broadcast policy $\pi^{*}$ yields
\[
\min_{i \in V} \liminf\limits_{T \to \infty}\frac{R_i^{\pi^{*}}(T)}{T} = \lambda, \quad \text{w.p. $1$,}
\]
where $\lambda_{\text{DAG}}$ is the upper bound on the broadcast capacity $\lambda^{*}$ in the general policy class $\Pi$, as shown in~\eqref{eq:DAGcap}. Consequently, the broadcast policy $\pi^{*}$ achieves the broadcast capacity $\lambda^{*}$ for any Directed Acyclic Graphs.
\end{theorem}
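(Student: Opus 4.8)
The plan is to prove stability of the virtual-queue process $\bm{X}(t)=(X_j(t),\ j\neq r)$ under $\pi^*$ via a quadratic Lyapunov drift argument, and then translate strong stability of $\bm{X}(t)$ into the throughput statement. Fix $\lambda<\lambda_{\text{DAG}}$. First I would set $L(\bm{X}(t))=\tfrac12\sum_{j\neq r}X_j^2(t)$ and compute an upper bound on the one-slot conditional drift $\Delta(t)=\mathbb{E}[L(\bm{X}(t+1))-L(\bm{X}(t))\mid\bm{X}(t)]$. Squaring the recursion~\eqref{bnd2}, using $((a-b)^++c)^2\le a^2+b^2+c^2-2a(b-c)+2bc$ and the boundedness of the per-slot service and arrivals (so all cross terms and squared increments are absorbed into a finite constant $B$), the key inequality becomes
\begin{align*}
\Delta(t)\le B-\sum_{j\neq r}X_j(t)\,\mathbb{E}\Big[\textstyle\sum_{k}\mu_{kj}(t)-\sum_{m}\mu_{m i_t^* j}(t)\,\Big|\,\bm{X}(t)\Big].
\end{align*}
The crucial algebraic manipulation is to re-index the double sum: the term $-X_j(t)\sum_m\mu_{m i_t^*}(t)$ charges the incoming service at node $i_t^*$ against node $j$, and node $i_t^*$ is precisely an in-neighbor of $j$ for which $j\in K_{i_t^*}(t)$ in the algorithm's notation. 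Collecting, for each node $i$ the total "credit" it receives is $\sum_{k\in K_i(t)}X_k(t)$, so the drift bound rearranges into
\begin{align*}
\Delta(t)\le B-\sum_{(i,j)\in E}\mathbb{E}\big[\mu_{ij}(t)\mid\bm{X}(t)\big]\Big(X_j(t)-\textstyle\sum_{k\in K_i(t)}X_k(t)\Big)-\lambda\Big(X_{a_r}(t)-\cdots\Big),
\end{align*}
where the source contributes the $A(t)$ term with mean $\lambda$; replacing $X_j-\sum_{k\in K_i}X_k$ by its positive part $W_{ij}(t)$ only increases the right side, and now the middle sum is exactly $\sum_e c_e s_e(t)W_e(t)$ in expectation, which $\pi^*$ maximizes over $\mathcal S$ in Step~3.

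The second step is to exhibit a good randomized stationary comparison policy. Since $\lambda<\lambda_{\text{DAG}}$, by the definition~\eqref{eq:DAGcap} there is $\bm\beta^\star\in\conv{\mathcal S}$ with $\sum_{e\in E_{U_v}}c_e\beta^\star_e\ge\lambda+\epsilon$ for every receiver cut $U_v=V\setminus\{v\}$, and $\bm\beta^\star$ is a convex combination of activation vectors in $\mathcal S$. Feeding the corresponding stationary random activation into the max-weight lower bound, and using that for a DAG every node $v\neq r$ has $E_{U_v}=\{(i,v):i\in\mathrm{In}(v)\}$ together with the telescoping identity $\sum_{(i,j)\in E}\mu_{ij}(X_j-\sum_{k\in K_i}X_k)=\sum_{j\neq r}X_j\big(\sum_i\mu_{ij}-\sum_m\mu_{mi_j^*}\big)$ (the same re-indexing as above, now run forward), one obtains $\Delta(t)\le B-\epsilon\sum_{j\neq r}X_j(t)$. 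Here is where acyclicity is essential: in a DAG the "credit" terms $\sum_m\mu_{m i^*}$ can be bounded against cut capacities of the nested cuts $U_v$ without circular dependencies, which is exactly what fails when a cycle is present (cf. Lemma~\ref{in_order}). Standard Foster--Lyapunov/Lemma-type arguments (e.g.\ \cite{neely2010stochastic}) then give $\limsup_{T\to\infty}\tfrac1T\sum_{t=1}^T\sum_{j\neq r}\mathbb{E}[X_j(t)]\le B/\epsilon<\infty$, i.e.\ strong stability, and in particular each $X_j(t)/t\to0$ and $Q_{i_t^*j}(t)/t\to 0$ w.p.~1 along the time average.

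Finally I would convert queue stability to the rate statement. For the source, $R_r(T)/T\to\lambda$ w.p.~1 by the SLLN for the i.i.d.\ arrivals. For a general node $j$, pick any directed path $r=v_0\to v_1\to\cdots\to v_\ell=j$ in the DAG; then $R_r(t)-R_j(t)=\sum_{m=1}^{\ell}Q_{v_{m-1}v_m}(t)$, and each term is sandwiched, by part~(1) of Lemma~\ref{Q_positivity_lemma} and the definition of $X$, between $0$ and a finite sum of the stable processes $X_{v_m}(t)$ plus bounded slack; since each $X_{v_m}(t)=o(t)$ in the Cesàro sense and $R_j$ is monotone, dividing by $T$ and taking $\liminf$ gives $\liminf_T R_j(T)/T=\lambda$ w.p.~1. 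Taking the minimum over $i\in V$ yields the theorem, and combined with the upper bound $\lambda^*\le\lambda_{\text{DAG}}$ from~\eqref{eq:DAGcap} and letting $\epsilon\downarrow0$, $\pi^*$ achieves $\lambda^*$ on any DAG.

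The main obstacle I anticipate is the bookkeeping in the drift computation — specifically, making the re-indexing between the "per-node-$j$" form of the recursion~\eqref{bnd2} and the "per-link" max-weight objective~\eqref{eq:601} rigorous, including the subtlety that the minimizer $i_t^*$ (and hence the sets $K_j(t)$) changes from slot to slot, and handling the $(\cdot)^+$ truncations so that dropping or adding them is always in the favorable direction. A secondary technical point is justifying the almost-sure (not just in-mean) conclusions, which requires a second-moment bound on $X_j(t)$ or an appeal to the rate-stability machinery rather than bare Foster--Lyapunov.
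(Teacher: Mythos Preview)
Your overall strategy matches the paper's: quadratic Lyapunov drift on the $X_j$'s, comparison with a stationary randomized activation, then conversion of strong stability into the rate statement. However, there are two concrete gaps where the argument as written would fail.

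First, in your Step~2, simply activating links according to $\bm\beta^\star$ does \emph{not} produce a negative drift. The per-node term in the drift is $X_j(t)\big(\sum_m\mu_{m i_t^*}(t)-\sum_k\mu_{kj}(t)\big)$, and under your randomized policy the expected incoming rate at $j$ is at least $\lambda+\epsilon$, but so is the expected incoming rate at $i_t^*$ whenever $i_t^*\neq r$; the difference need not be negative, and no ``telescoping identity'' fixes this. The paper closes this gap by using a topological ordering $r=v_1,v_2,\ldots,v_{|V|}$ of the DAG and \emph{thinning} the incoming activations at $v_l$ by an extra Bernoulli with parameter $q_l$ chosen so that the expected incoming rate at $v_l$ is exactly $\lambda+\epsilon l/|V|$ (and $\lambda$ at $r$). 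Since every in-neighbor of $j$ precedes $j$ in the ordering, the rate into $i_t^*$ is then strictly smaller than the rate into $j$ by at least $\epsilon/|V|$, yielding $\Delta^{\pi^*}(\bm X(t))\le B|V|-\tfrac{2\epsilon}{|V|}\sum_{j\neq r}X_j(t)$. Your appeal to ``nested cuts $U_v$ without circular dependencies'' does not supply this mechanism; it is the graduated-rate construction along the topological order, not the cut bounds alone, that makes acyclicity do work here.

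Second, in your Step~3, ``pick any directed path'' from $r$ to $j$ does not give the sandwich you claim: along an arbitrary path the deficits satisfy $Q_{v_{m-1}v_m}(t)\ge X_{v_m}(t)$ (by the definition of $X$ as a minimum), not $\le$, so you cannot bound $R_r(T)-R_j(T)$ above by a sum of $X$'s. The paper instead constructs the path by, at each step, moving to the in-neighbor that \emph{attains} the minimum deficit, so that $Q_{u_{i+1}u_i}(T)=X_{u_i}(T)$ exactly; then $R_r(T)-R_j(T)=\sum_i X_{u_i}(T)\le\sum_{j\neq r}X_j(T)$, which is $o(T)$ almost surely by rate stability (Theorem~2.8 in \cite{neely2010stochastic}), and the SLLN for arrivals finishes the argument.
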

\begin{IEEEproof}[Proof of Theorem~\ref{main_theorem}]
See Appendix \ref{main_theorem_proof}.
\end{IEEEproof}

\subsection{Number of disjoint spanning trees in a DAG} \label{sec:disjoint}
Theorem~\ref{main_theorem} provides an interesting combinatorial result that relates the number of disjoint spanning trees in a DAG to the in-degrees of its nodes.
\begin{lemma} \label{lem:701}
Consider a directed acyclic graph $G=(V, E)$ that is rooted at a node $r$, has unit-capacity links, and possibly contains parallel edges. The maximum number $k^{*}$ of edge-disjoint spanning trees in $G$ is given by
\[
k^*=\min_{v\in V\setminus \{r\}}d_{\text{in}}(v),
\]
where $d_{\text{in}}(v)$ denotes the in-degree of the node $v$.
\end{lemma}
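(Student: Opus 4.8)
The plan is to derive Lemma~\ref{lem:701} as a corollary of Theorem~\ref{main_theorem} by specializing the wireless model to a wired, unit-capacity DAG. First I would set up the reduction: take the given DAG $G=(V,E)$, set all link capacities $c_e=1$, and choose the interference-free activation set $\mathcal{S} = \{\bm 1\}$ (or equivalently $\mathcal{S}$ = all binary vectors), so that $\conv{\mathcal{S}}$ forces $\bm\beta = \bm 1$ and the only admissible time-average is to activate every edge every slot. Under this choice, the cut values $\sum_{e\in E_{U_v}} c_e \beta_e$ in~\eqref{eq:DAGcap} become exactly the number of edges crossing the cut $U_v = V\setminus\{v\}$, which is precisely the in-degree $d_{\text{in}}(v)$. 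Hence $\lambda_{\text{DAG}} = \min_{v\neq r} d_{\text{in}}(v)$, and by Theorem~\ref{main_theorem} the broadcast capacity of this wired DAG equals $\min_{v\neq r} d_{\text{in}}(v)$. This immediately gives the upper bound $k^* \leq \min_{v\neq r} d_{\text{in}}(v)$: any edge-disjoint spanning-tree packing of size $k$ yields a broadcast scheme of rate $k$ (pump one distinct packet per slot down each of the $k$ trees), so $k^* \le \lambda^* = \min_{v\neq r} d_{\text{in}}(v)$; and since in-degrees are integers this is $k^*\le \min_v d_{\text{in}}(v)$ with no rounding slack.

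The substantive direction is $k^* \geq \min_{v\neq r} d_{\text{in}}(v)$, i.e. constructing that many edge-disjoint spanning trees. For this I would invoke the \emph{constructive} nature of Algorithm~\ref{DAG_algo}, as flagged in the introduction ("yields a constructive proof of a version of Edmonds' disjoint tree-packing theorem"). Concretely: set $k = \min_{v\neq r} d_{\text{in}}(v)$, feed the deterministic arrival stream $A(t) = k$ for every slot into the wired DAG, and run $\pi^*$. Since $\lambda = k = \lambda_{\text{DAG}}$ is exactly at capacity in the integral case, one shows (by the drift argument underlying Theorem~\ref{main_theorem}, applied with the deterministic at-capacity input and using integrality of capacities and arrivals) that the deficit variables $X_j(t)$ stay bounded — in fact the schedule can be taken periodic — so that over a long horizon every node receives the full prefix $\{1,\dots,Nk\}$ of packets after $N$ slots. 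Then, partitioning the packet stream into blocks of size $k$ and tracing, for each packet index $p$ in a block, the set of edges over which copies of $p$ travel, one obtains $k$ edge-disjoint connected subgraphs spanning $V$; contracting each to a spanning tree (dropping redundant edges) gives the $k$ edge-disjoint spanning trees. The acyclicity of $G$ is what guarantees the forwarding is deadlock-free and that each packet's trajectory forms a spanning out-structure rooted at $r$.

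The main obstacle I anticipate is making the "periodic at-capacity schedule" argument rigorous: Theorem~\ref{main_theorem} is stated for $\lambda < \lambda_{\text{DAG}}$ with a probabilistic (Lyapunov) guarantee, whereas the tree-packing needs an \emph{exact} at-capacity ($\lambda = k$) schedule with \emph{bounded} deficits and a clean combinatorial extraction of trees. Bridging this likely requires either (i) a separate finite-horizon/flow argument — e.g. invoking max-flow integrality on $\widehat{\mathcal G}$ to get, for each receiver $v$, an integral flow of value $k$ from $r$ to $v$, then applying a known augmenting-path argument à la Edmonds to merge these into disjoint trees — or (ii) a careful limiting argument letting $\lambda \uparrow k$ along the $\pi^*$ dynamics and using integrality to snap to an exact packing. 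I would present option (i) as the core of the lower-bound proof if the at-capacity behavior of $\pi^*$ is not separately established, and option (ii) (leveraging the already-proved Theorem~\ref{main_theorem}) as the "constructive" remark. A secondary technical point to check: handling parallel edges, which is harmless since each parallel copy is simply a distinct element of $E$ contributing $1$ to the relevant in-degree and to $\mathcal{S}$.
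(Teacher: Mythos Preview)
Your setup (wired reduction, $\bm\beta=\bm 1$, singleton cuts $U_v$ giving $d_{\text{in}}(v)$) is correct and matches the paper. But the paper's proof of the lower bound $k^*\ge \min_v d_{\text{in}}(v)$ takes a very different, much shorter route than your main plan: it simply \emph{cites} Edmonds' theorem as a black box, which gives $k^*=\min_{U}\sum_{e\in E_U} c_e$ directly, and then uses Theorem~\ref{main_theorem} (via the equality~\eqref{eq:603}) only to collapse the minimum over \emph{all} proper cuts to the minimum over the singleton cuts $U_v=V\setminus\{v\}$. No trees are ever extracted from the dynamics of $\pi^*$; the ``constructive'' remark in the introduction is not cashed out in the proof of this lemma.

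Your primary plan --- run $\pi^*$ at $\lambda=k$ and read off $k$ edge-disjoint spanning trees from periodic packet trajectories --- is a genuinely different and more ambitious argument, and the obstacle you flag is real: Theorem~\ref{main_theorem} is proved only for $\lambda<\lambda_{\text{DAG}}$, and nothing in the paper establishes bounded deficits or periodicity at exact capacity, so this step is currently a gap, not a proof. Your fallback option~(i) is essentially the paper's actual argument (invoke Edmonds), so if you go that way you are back to the paper's proof. Two minor notes: the upper bound $k^*\le \min_v d_{\text{in}}(v)$ is immediate combinatorially (each edge-disjoint spanning tree consumes one distinct in-edge at every $v$), so routing it through broadcast capacity is unnecessary; and the equality $\min_U |E_U|=\min_v d_{\text{in}}(v)$ for DAGs also has a one-line direct proof (pick the topologically first node in $V\setminus U$), so even the appeal to Theorem~\ref{main_theorem} in the paper's own proof is a stylistic choice rather than a necessity.
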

\begin{IEEEproof}[Proof of Lemma~\ref{lem:701}]
See Appendix~\ref{pf:701}.
\end{IEEEproof}

\section{Efficient Algorithm for Computing the Broadcast Capacity of a DAG} \label{sec:computation}
In this section we exploit Eqn. \eqref{eq:603} and develop an LP to compute the broadcast capacity of any wireless DAG network under the primary interference constraints. Although this LP has exponentially many constraints, using a well-known separation oracle, it can be solved in strongly polynomial time via the ellipsoid algorithm \cite{bertsimas1997introduction}.\\
Under the primary interference constraint, the set of feasible activations of the graphs are \emph{matchings} \cite{west2001introduction}. For a subset of edges $E'\subset E$, let $\chi^{E'} \in \{0,1\}^{|E|}$ where $\chi^{E'}(e)=1$ if $e \in E'$ and is zero otherwise. Let us define
\begin{eqnarray} 
\mathcal{P}_\text{matching}(\mathcal{G})= \textbf{convexhull}(\{\chi^{M}|M \text{ is a matching in } G\})
\end{eqnarray}
We have the following classical result by Edmonds \cite{schrijver2003combinatorial}.
\begin{theorem}
The set $\mathcal{P}_\text{matching}(\mathcal{G})$ is characterized by the set of all $\bm{\beta} \in \mathbb{R}^{|E|}$ such that : 
\begin{eqnarray}
\beta_e &\geq& 0 \hspace{10pt} \forall e \in E  \label{match_poly}\\
\sum_{e \in \delta_{\text{in}}(v)\cup\delta_{\text{out}}(v)} \beta_e &\leq& 1 \hspace{10pt} \forall v \in V \nonumber\\
\sum_{e\in E[U]}\beta_e &\leq& \frac{|U|-1}{2}; \hspace{10pt} U\subset V, |U| \text{ odd} \nonumber 
\end{eqnarray}
Here $E[U]$ is the set of edge (ignoring their directions) with both end points in U, $\delta_\text{in}(u)$ ($\delta_\text{out}(u)$) denotes the set of all incoming (outgoing) edges to (from) the vertex $u \in V$.  
\end{theorem}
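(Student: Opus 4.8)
The statement is Edmonds' matching polytope theorem, and the plan is the classical induction on the size of the graph. The inclusion $\mathcal{P}_\text{matching}(\mathcal{G}) \subseteq P$, where $P$ is the polytope defined by \eqref{match_poly}, is the trivial direction: it suffices to verify that the incidence vector $\chi^{M}$ of a single matching $M$ obeys all three families of inequalities — nonnegativity is clear, each vertex is met at most once so the degree inequalities hold, and $M$ restricted to $G[U]$ has at most $\lfloor |U|/2\rfloor = (|U|-1)/2$ edges whenever $|U|$ is odd — and then to take convex hulls of both sides.

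For the reverse inclusion it is enough, since $P$ is bounded, to show that every extreme point $\bm\beta$ of $P$ is the incidence vector of a matching, i.e. is $\{0,1\}$-valued. I would suppose not, and choose a graph $\mathcal{G}=(V,E)$ minimizing $|V|+|E|$ for which $P$ has a non-integral extreme point $\bm\beta$. A short sequence of reductions removes the degenerate cases: a vertex incident to no edge, or an edge $e$ with $\beta_e=0$, can be deleted to produce a smaller counterexample; and an edge with $\beta_e=1$ forces, via the degree inequalities, all adjacent edges to carry weight $0$, which is again impossible after the previous step. Hence one may assume $0<\beta_e<1$ for all $e\in E$. Because $\bm\beta$ is an extreme point of a polytope in $\mathbb{R}^{|E|}$, it is the unique solution of $|E|$ linearly independent tight inequalities, none of which is now a nonnegativity constraint, so they are degree and odd-set inequalities.

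The argument then splits on whether any odd-set inequality is tight at $\bm\beta$. If none is, a rank/counting argument on the tight degree inequalities forces $\mathcal{G}$ to be a vertex-disjoint union of cycles with every degree inequality tight; a cycle of even length then admits an alternating $\pm\epsilon$ perturbation of $\bm\beta$ that stays in $P$, contradicting extremality, while if every cycle is odd the tight degree constraints force $\beta_e=\tfrac12$ on every edge, and then the odd-set inequality for the vertex set $U$ of any single cycle reads $|U|/2\le(|U|-1)/2$, a contradiction. If instead some odd-set inequality $\sum_{e\in E[U]}\beta_e=(|U|-1)/2$ is tight with $3\le|U|$, I would contract: form $\mathcal{G}_1=\mathcal{G}/U$ by shrinking $U$ to a single vertex (discarding loops) and $\mathcal{G}_2=\mathcal{G}/(V\setminus U)$ likewise, both strictly smaller than $\mathcal{G}$. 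One checks that the vectors induced by $\bm\beta$ lie in $P(\mathcal{G}_1)$ and $P(\mathcal{G}_2)$ — the parity of $|U|$ is exactly what makes the apex vertices behave correctly and keeps all three inequality families satisfied — so by minimality each is a convex combination of incidence vectors of matchings of $\mathcal{G}_1$, respectively $\mathcal{G}_2$. Finally these two decompositions are glued along the cut $E_U$: the fraction of the decomposition in which each cut edge is used agrees between the two pieces, so matchings from $\mathcal{G}_1$ and $\mathcal{G}_2$ can be paired into matchings of $\mathcal{G}$ whose incidence vectors convex-combine to $\bm\beta$, contradicting that $\bm\beta$ is a non-integral extreme point. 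This completes the induction.

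The main obstacle is the contraction step. Two things need care there: (i) proving that shrinking a tight odd set preserves feasibility in both $\mathcal{G}_1$ and $\mathcal{G}_2$, where the odd-set inequalities of the contracted graphs have to be matched up with odd-set or degree inequalities of $\mathcal{G}$ and the parity bookkeeping after identifying $U$ with a single vertex is delicate; and (ii) realigning the two convex decompositions along the cut edges $E_U$ so that the pieces actually glue into genuine matchings of $\mathcal{G}$, which requires a common-refinement (or transportation/flow) argument to make the usage frequencies of each cut edge coincide. The cycle case, by contrast, is short once the preliminary reductions are in place.
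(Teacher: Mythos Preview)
The paper does not prove this statement at all: it is quoted verbatim as a classical result of Edmonds, with a citation to Schrijver's \emph{Combinatorial Optimization}, and is used only as a black box to justify the LP formulation of the broadcast capacity. So there is no ``paper's own proof'' to compare against.

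Your sketch is the standard proof of Edmonds' matching polytope theorem (essentially the one in Schrijver, or in Lov\'asz--Plummer), and the outline is correct. A couple of places where the write-up would need tightening if you actually carried it out: in the ``no tight odd set'' case you should justify why every vertex has degree at least~$2$ (a degree-$1$ vertex with tight degree constraint forces $\beta_e=1$, already excluded; with non-tight constraint it reduces the count of available tight inequalities), which together with $|E|\le|V|$ forces the $2$-regular cycle structure; and in the contraction step, one usually takes $U$ to be an inclusion-maximal tight odd set with $U\ne V$, which is what makes the odd-set inequalities of the contracted graph $\mathcal{G}_1$ line up with odd-set inequalities of $\mathcal{G}$. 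The gluing of the two convex decompositions along the cut edges is indeed the delicate part, and your description of it as a common-refinement argument is accurate. But again, none of this is needed for the paper, which simply invokes the theorem.
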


Hence following Eqn. \eqref{eq:603}, the broadcast capacity of a DAG can be obtained by the following LP :
\begin{framed}
\begin{eqnarray} \label{LP_match}
\max \lambda 
\end{eqnarray}
Subject to, 
\begin{eqnarray}
\lambda &\leq& \sum_{e\in \delta_\text{in}(v)} c_{e}\beta_{e}\hspace{10pt} \forall v \in V\setminus \{r\}\label{b_c_constraint}\\
\bm{\beta} &\in& \mathcal{P}_\text{matching}(\mathcal{G}) \label{matching_constr}
\end{eqnarray}
\end{framed}

From the equivalence of optimization and separation (via the ellipsoid method), it follows that the above LP is poly-time solvable if there exists an efficient separator oracle for the constraints \eqref{b_c_constraint}, \eqref{matching_constr}. Since there are only linearly many constraints ($|V|-1$, to be precise) in \eqref{b_c_constraint}, the above requirement reduces to an efficient separator for the matching polytope \eqref{matching_constr}. We refer to a classic result from the combinatorial-optimization literature which shows the existence of such efficient separator for the matching polytope
\begin{theorem}{\cite{schrijver2003combinatorial}}
There exists a strongly poly-time algorithm, that given $\mathcal{G}=(V,E)$  and $\bm{\beta} : E \to \mathbb{R}^{|E|}$ determines if $\bm{\beta}$ satisfies \eqref{match_poly} or outputs an inequality from \eqref{match_poly} that is violated by $\bm{\beta}$.
\end{theorem}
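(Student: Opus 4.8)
\begin{IEEEproof}[Proof sketch]
The plan is to handle the two polynomial-size families of constraints in \eqref{match_poly} by brute force and to reduce the separation of the exponentially large family of ``blossom'' (odd-set) inequalities to a single minimum odd-cut computation, which is the classical approach of Padberg and Rao. First I would, given $\bm{\beta}$, test the $|E|$ inequalities $\beta_e\ge 0$ and the $|V|$ degree inequalities $\sum_{e\in\delta_{\text{in}}(v)\cup\delta_{\text{out}}(v)}\beta_e\le 1$ directly; if one fails, output it. So assume $\bm{\beta}\ge \bm{0}$ and, abbreviating $\beta(\delta(v))\triangleq\sum_{e\in\delta_{\text{in}}(v)\cup\delta_{\text{out}}(v)}\beta_e$, that $\beta(\delta(v))\le 1$ for all $v$. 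It remains to decide whether some odd set $U\subseteq V$ with $|U|\ge 3$ violates $\sum_{e\in E[U]}\beta_e\le(|U|-1)/2$, where edges are taken in the underlying undirected graph.

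The key identity I would use is $2\sum_{e\in E[U]}\beta_e=\sum_{v\in U}\beta(\delta(v))-\beta(\delta(U))$, where $\delta(U)$ is the set of edges with exactly one endpoint in $U$ and $\beta(\delta(U))\triangleq\sum_{e\in\delta(U)}\beta_e$. For $|U|$ odd this turns the blossom inequality into the equivalent form $\beta(\delta(U))+\sum_{v\in U}\bigl(1-\beta(\delta(v))\bigr)\ge 1$, and the slacks $s_v\triangleq 1-\beta(\delta(v))$ are nonnegative by the previous step. I would then form an auxiliary undirected graph $G'$ on $V\cup\{z\}$ with a new vertex $z$: keep each $e\in E$ with weight $\beta_e$ and add an edge $\{v,z\}$ of weight $s_v$ for every $v\in V$. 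For any $U\subseteq V$, the cut of $G'$ separating $U$ from $(V\setminus U)\cup\{z\}$ has weight exactly $\beta(\delta(U))+\sum_{v\in U}s_v$. Hence $\bm{\beta}$ satisfies all blossom inequalities iff the lightest such cut over odd $U$ has weight $\ge 1$; and since a singleton $U=\{v\}$ always gives cut weight $\beta(\delta(v))+s_v=1$, a value strictly below $1$ is necessarily attained at some $U$ with $|U|\ge 3$, which then furnishes a violated inequality.

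The remaining task is a minimum-weight ``$T$-odd'' cut in $G'$ with $T=V$, i.e.\ a cut whose $z$-free side meets $T$ in an odd number of vertices. Here I would invoke the Padberg--Rao procedure: compute a Gomory--Hu cut tree of $G'$ using $|V|$ max-flow/min-cut computations, and then search the at most $|V|$ fundamental cuts obtained by deleting a single tree edge, keeping the lightest one whose two sides split $T$ into an odd and an even part. Reporting its weight settles the decision, and translating the minimizer back through the identity above either certifies $\bm{\beta}\in\mathcal{P}_{\text{matching}}(\mathcal{G})$ or outputs an explicit violated blossom inequality. Every ingredient is a strongly polynomial number of max-flow computations, each of which is strongly polynomial, so the whole oracle runs in strongly polynomial time.

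The hard part is the Gomory--Hu step: justifying that a minimum $T$-odd cut always occurs among the fundamental cuts of the cut tree. I would prove this by an uncrossing argument --- take an optimal $T$-odd cut $\delta(W)$; using the defining property of the Gomory--Hu tree (each tree edge's fundamental cut is a minimum cut between its endpoints) together with submodularity of the cut function of $G'$, one shows that some fundamental $T$-odd cut has weight at most $w_{G'}(\delta(W))$, so the search above cannot miss the optimum. This lemma, together with the reduction above, is precisely the content of the Padberg--Rao separation theorem for the matching polytope; a complete proof is given in \cite{schrijver2003combinatorial}.
\end{IEEEproof}
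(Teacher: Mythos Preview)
The paper does not prove this theorem; it is quoted verbatim as a classical result from \cite{schrijver2003combinatorial} and used as a black box to conclude that the LP \eqref{LP_match} is solvable in strongly polynomial time. So there is no ``paper's own proof'' to compare against.

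That said, your sketch is the standard Padberg--Rao separation argument and is correct. The degree--cut identity $2\sum_{e\in E[U]}\beta_e=\sum_{v\in U}\beta(\delta(v))-\beta(\delta(U))$ is right, the reformulation of the blossom inequality as $\beta(\delta(U))+\sum_{v\in U}s_v\ge 1$ is right, and the auxiliary-graph reduction to a minimum odd cut is exactly what the cited reference does. Your observation that singletons have cut value exactly $1$, so any violator must have $|U|\ge 3$, is the correct way to rule out trivial odd sets. The one place to be slightly more careful is the choice of $T$: to make ``$T$-odd'' a well-defined cut property (independent of which shore you name), one needs $|T|$ even, so the usual device is to include $z$ in $T$ exactly when $|V|$ is odd. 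This is a cosmetic fix and does not affect the argument; the Gomory--Hu/uncrossing justification you outline is precisely the content of the Padberg--Rao theorem as presented in \cite{schrijver2003combinatorial}.
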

This directly leads to the following theorem.
\begin{theorem}
There exists a strongly poly-time algorithm to compute the broadcast capacity of any wireless DAG under the primary interference constraints.
\end{theorem}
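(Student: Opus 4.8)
The plan is to observe that this theorem is essentially a corollary of Eqn.~\eqref{eq:603} together with the theory of the ellipsoid method, so the work is in assembling the pieces. By Eqn.~\eqref{eq:603} the broadcast capacity of the DAG equals
$\lambda_{\text{DAG}} = \max_{\bm\beta\in\conv{\mathcal S}}\min_{v\neq r}\sum_{e\in\delta_{\text{in}}(v)}c_e\beta_e$,
and under the primary interference constraint $\conv{\mathcal S}=\mathcal P_{\text{matching}}(\mathcal G)$; moreover $E_{U_v}=\delta_{\text{in}}(v)$ for the cut $U_v=V\setminus\{v\}$. Hence the broadcast capacity is exactly the optimal value of the linear program~\eqref{LP_match}, and it suffices to solve~\eqref{LP_match} in strongly polynomial time.

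The next step is to construct a single strongly polynomial separation oracle for the feasible region of~\eqref{LP_match}, which is an intersection of two families of constraints. The first family~\eqref{b_c_constraint} contains only $|V|-1$ inequalities; given a candidate $(\lambda,\bm\beta)$ each is checked by direct evaluation in $O(|E|)$ arithmetic operations, returning a violated one if it exists. The second family is membership in $\mathcal P_{\text{matching}}(\mathcal G)$: the nonnegativity constraints and the $|V|$ degree constraints of~\eqref{match_poly} are again verified directly, while the exponentially many odd-set ``blossom'' inequalities $\sum_{e\in E[U]}\beta_e\leq(|U|-1)/2$ are separated by the strongly polynomial matching-polytope oracle cited above (which reduces the task to a minimum odd cut / weighted min-cut computation on $\mathcal G$). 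Concatenating these subroutines gives one strongly polynomial separation oracle for the whole polyhedron.

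With this oracle, the equivalence of separation and optimization via the ellipsoid method (Gr\"otschel--Lov\'asz--Schrijver) yields a polynomial-time algorithm returning the optimal value of~\eqref{LP_match}, and an optimal $\bm\beta$ as a byproduct (which also gives the capacity-achieving long-run activation frequencies). The main point requiring care --- and the step I expect to be the real obstacle --- is upgrading ``polynomial'' to ``strongly polynomial'': the bit-length of the capacities $\bm c$ must not enter the iteration count. This is handled by noting that $\bm c$ enters only the $|V|-1$ constraints~\eqref{b_c_constraint}, and only linearly, whereas the matching polytope itself is a $0/1$-combinatorial polytope with bounded-integer facet coefficients equipped with a strongly polynomial separation oracle; one then invokes the strongly polynomial variant of the ellipsoid framework for such well-described combinatorial polyhedra (equivalently, Tardos' method for LPs whose constraint matrix has polynomially bounded entries). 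Once that bound is in place, the exact rational optimum --- having polynomially bounded denominator --- is recovered from the ellipsoid output by the standard rounding argument, and everything else is routine bookkeeping.
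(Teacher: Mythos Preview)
Your proposal takes essentially the same route as the paper: reduce the capacity to the LP~\eqref{LP_match}, separate over the $|V|-1$ cut constraints~\eqref{b_c_constraint} by direct evaluation and over the matching polytope~\eqref{matching_constr} via the cited strongly polynomial oracle, then invoke the ellipsoid equivalence of separation and optimization. The paper's own argument is in fact terser than yours --- it simply cites the matching-polytope separation result and says this ``directly leads to'' the theorem, without spelling out the strongly-polynomial step that you (rightly) flag as the crux.

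One caveat on your handling of that upgrade: Tardos' theorem applies when the \emph{constraint matrix} has polynomially bounded integer entries, with the objective and right-hand side allowed to be arbitrary. In~\eqref{LP_match} the capacities $c_e$ appear in the constraint-matrix rows~\eqref{b_c_constraint}, not merely in the objective or right-hand side, so a direct invocation of Tardos is not quite licit as stated. Closing this requires either a Frank--Tardos simultaneous-approximation preprocessing of $\bm c$, or a reformulation that moves $\bm c$ out of the constraint matrix. This is a fixable technical detail rather than a structural gap, and the paper does not address the point at all, so your treatment is already more careful than the original.
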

The following corollary implies that, although there are exponentially many matchings in a DAG, to achieve the broadcast capacity, randomly activating (with appropriate probabilities) only $|E|+1$ matchings suffice. 
\begin{corollary}
The optimal broadcast capacity $\lambda^*$ in a wireless DAG, under the primary interference constraints, can be achieved by randomly activating (with positive probability) at most $|E|+1$ matchings.
\end{corollary}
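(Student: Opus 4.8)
The plan is to derive this Carath\'eodory-type bound directly from the structure of the linear program \eqref{LP_match}--\eqref{matching_constr}. The key observation is that the optimal broadcast rate $\lambda^*$ is attained at an optimal solution $(\lambda^*, \bm{\beta}^*)$ with $\bm{\beta}^* \in \mathcal{P}_\text{matching}(\mathcal{G}) \subseteq \mathbb{R}^{|E|}$, and by Theorem~\ref{main_theorem} this $\bm{\beta}^*$ corresponds to the time-average of the link-activation vectors under an optimal randomized policy. First I would invoke Carath\'eodory's theorem: since $\mathcal{P}_\text{matching}(\mathcal{G})$ is the convex hull of the finite set of matching indicator vectors $\{\chi^M : M \text{ a matching}\}$ living in the $|E|$-dimensional space $\mathbb{R}^{|E|}$, every point of this polytope — in particular $\bm{\beta}^*$ — can be written as a convex combination of at most $|E|+1$ of these extreme points (the vertices of the matching polytope are exactly the $\chi^M$). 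Writing $\bm{\beta}^* = \sum_{k=1}^{|E|+1} p_k \chi^{M_k}$ with $p_k \geq 0$, $\sum_k p_k = 1$, the randomized policy that activates matching $M_k$ in each slot independently with probability $p_k$ has time-average activation vector exactly $\bm{\beta}^*$.

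Next I would check that this randomized policy actually achieves broadcast rate $\lambda^*$. This follows because $\bm{\beta}^*$ is feasible for the LP, so it satisfies the cut constraints \eqref{b_c_constraint}: for every $v \neq r$, $\sum_{e \in \delta_\text{in}(v)} c_e \beta^*_e \geq \lambda^*$, which is precisely the condition \eqref{eq:603} (using the single-node cuts $U_v = V \setminus \{v\}$, whose cut-edge set $E_{U_v}$ equals $\delta_\text{in}(v)$) guaranteeing that $\lambda_\text{DAG} \geq \lambda^*$ under this $\bm{\beta}$. Since $\lambda^* = \lambda_\text{DAG}$ for a DAG by \eqref{eq:603}, applying Theorem~\ref{main_theorem} with the activation set restricted to the support $\{M_1, \dots, M_{|E|+1}\}$ — equivalently, running $\pi^*$ over this reduced activation set, or directly using that the max-flow to each node across the time-averaged graph with capacities $c_e \beta^*_e$ is at least $\lambda^*$ — shows the broadcast capacity is met. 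One should note the positive-probability refinement: after deleting any $M_k$ with $p_k = 0$ we retain at most $|E|+1$ matchings each activated with strictly positive probability.

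The main obstacle — really the only subtle point — is making precise the passage from ``time-average activation vector equals $\bm{\beta}^*$'' to ``broadcast rate $\lambda^*$ is achievable,'' since a randomized stationary activation schedule does not by itself specify the packet-forwarding decisions. The cleanest route is to let the randomized schedule fix the link activations and then run the forwarding rule of $\pi^*$ (Steps 4--5 of Algorithm~\ref{DAG_algo}) on top of it; the Lyapunov drift argument underlying Theorem~\ref{main_theorem} goes through verbatim because the only property of the activation process it uses is that its running average lies in $\conv{\mathcal{S}}$ with the requisite cut values, which holds here by construction. Alternatively, one can argue at the fluid level: the time-averaged edge-capacitated graph $\widehat{\mathcal{G}}$ with capacities $c_e\beta^*_e$ has, by \eqref{b_c_constraint} and the max-flow min-cut theorem on a DAG, a flow of value $\lambda^*$ to every node, and these flows can be realized by in-order forwarding as in Section~\ref{sec:algorithm}. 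Either way the count of matchings is untouched, so the corollary follows. I would also remark that the bound $|E|+1$ is the generic Carath\'eodory count and can sometimes be sharpened to $|E|$ using the fact that $\bm{\beta}^*$ may be taken on a face of the polytope, but $|E|+1$ suffices for the stated claim.
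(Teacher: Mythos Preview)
Your proposal is correct and follows essentially the same approach as the paper: take an optimal solution $(\lambda^*,\bm{\beta}^*)$ of the LP, note that $\bm{\beta}^*\in\mathcal{P}_{\text{matching}}(\mathcal{G})\subset\mathbb{R}^{|E|}$, and apply Carath\'eodory's theorem to write $\bm{\beta}^*$ as a convex combination of at most $|E|+1$ matchings. The paper's proof is in fact terser than yours---it stops immediately after invoking Carath\'eodory---so your additional discussion of how the randomized activation combined with the forwarding rule of $\pi^*$ actually achieves rate $\lambda^*$ is a welcome elaboration rather than a deviation.
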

\begin{proof}
Let $(\lambda^*,\bm{\beta^*})$ be an optimal solution of the LP \eqref{LP_match}. Hence we have $\bm{\beta}^* \in \mathcal{P}_\text{matching}(\mathcal{G})\equiv \textbf{convexhull}(\{\chi^{M}|M \text{ is a matching in } G\})$. Since the polytope $\mathcal{P}_\text{matching}(\mathcal{G})$ is a subset of $\mathbb{R}^{|E|}$, by Carath{\'e}odory's theorem \cite{matouvsek2002lectures}, the vector $\bm{\beta}^*$ can be expressed as a convex combination of at most $|E|+1$ vertices of the polytope $\mathcal{P}_\text{matching}(\mathcal{G})$, which are matchings of the graph $\mathcal{G}$. This concludes the proof. 
\end{proof}
\section{Broadcasting on Networks with Arbitrary Topology}  \label{cyclic_extension}
In this section we extend the broadcast policy for a DAG to networks containing cycles. From the negative result of Lemma \ref{in_order}, we know that any policy ensuring \emph{in-order} packet delivery at every node cannot, in general, achieve the broadcast capacity of a network containing cycles. To get around this difficulty, we introduce the concept of broadcasting using multiple \emph{classes} $\mathcal{K}$ of packets. The idea is as follows: each class $k \in \mathcal{K}$ has a one-to-one correspondence with a specific permutation $\prec_k $ of the nodes; for an edge $(a,b) \in E$ if the node $a$ appears prior to the node $b$ in the permutation $\prec_k$ (we denote this condition by $a\prec_k b$), then the edge $(a,b)$ is included in the class $k$, otherwise the edge $(a,b)$ ignored by the class $k$. The set of all edges included in the class $k$ is denoted by $E^k \subset E$. It is clear that each class $k$ corresponds to a unique embedded DAG topology $\mathcal{G}^k(V,E^{k})$, which is a subgraph of the underlying graph $\mathcal{G}(V,E)$. \\
A new incoming packet arriving at the source node is admitted to some class $k \in \mathcal{K}$, according to some policy. All packets in a given class $k\in \mathcal{K}$ are broadcasted while maintaining in-order delivery property within the class $k$, however packets from different classes do not need to respect this constraint. Hence the resulting policy does not belong to the class $\Pi^*$ in but rather to the general class $\Pi$. This new policy keeps the best of both worlds: (a) its description-complexity is $\Theta(kN)$, where for each class we essentially have the same representations as in the in-order delivery constrained policies and (b) by relaxing the inter-class in-order delivery constraint it has the potential to achieve the full broadcast capacity of the   underlying graph. \\
Hence the broadcast problem reduces to construction of multiple classes (which are permutations of the vertices $V$) out of the given directed graph such that it covers the graph efficiently, from a broadcast-capacity point of view. In Algorithm-\ref{multiclass_algo}, we choose the permutations uniformly at random with the condition that the source always appears at the first position of the permutation.\\
 \begin{algorithm}  
\caption{Multiclass Broadcast Algorithm for General Topology}
\begin{algorithmic}[1] 
 \REQUIRE Graph $\mathcal{G}(V,E)$, total number of classes $K$
 %\STATE Global variable : $\bm{R}(t)$, Local variables : $\bm{Q}(t),\bm{Q}^{G}(t),\mathcal{A}(t),\bm{X}(t), \bm{W}(t)$.
 \STATE Generate $K$ permutations $\{\prec_i\}_{i=1}^{K}$ of the nodes $V$ uniformly at random (with the source $\{r\}$ at the first position) and obtain the induced DAGs $G^{k}(V,E^{k})$, where $e=(a,b) \in E^{k}$ iff $a \prec_k b$.
 \STATE For each permutation $\prec_k$, maintain a class $k$ and the packet-counter variables $\{R_i^{(k)}\}$ at every node $i=1,2,\ldots,|V|$.
 \STATE Each class observes intra-class packet forwarding constraints \textbf{(1), (2)} and \textbf{(3)} described in sections \ref{capacity_section} and \ref{sec:algorithm}. 
\STATE Define the state variables $\{\bm{Q}^{k}(t), \bm{X}^{k}(t)\}$ and compute the weights $\{\bm{W}^{k}(t)\}$, for each class $k=1,2,\ldots, K$ exactly as in Eqn. \eqref{eq:111}, where each class $k$ considers the edges $E^k$ only for Eqns. \eqref{eq:110} and \eqref{eq:111}.
 \STATE An incoming packet to source $r$ at time $t$ joins the class $l$ corresponding to 
 \begin{eqnarray}
 \arg\min_{l\in \mathcal{K}} \sum_{j \in K^{l}_r(t)}X_j^{l}(t)
 \end{eqnarray}
\STATE The overall weight for an edge $e$ (taken across all the classes) is computed as 
\begin{eqnarray} \label{max_act}
W_e(t)=\max_{k: e\in E^{k}}W_e^{k}(t)
\end{eqnarray}
\STATE Activate the edges corresponding to the max-weight activation, i.e., 
\begin{eqnarray}
\bm{s}(t)  \in \arg\max_{ \bm{s}\in \mathcal{S}} \sum_{e\in E} c_{e} s_{e} W_{e}(t).
\end{eqnarray}
\STATE For each activated edge $e \in \bm{s}(t)$, forward packets corresponding to a class achieving the maximum in Eqn. \eqref{max_act}.
%\FOR
%\STATE g
%\ENDFOR
\label{multiclass_algo}
 \end{algorithmic}
 \end{algorithm} 
 
 \begin{theorem} \label{multiclass_th}
 The multiclass broadcast Algorithm-\ref{multiclass_algo} with $K$ classes supports a broadcast rate of 
 \begin{eqnarray}\label{multiclass_rate}
 \lambda^K = \max_{\sum_{k}\bm{\beta}^{k} \in \textrm{conv}({\mathcal{S}})} \sum_{k=1}^{K} \min_{j \neq r}\sum_{i}c_{ij}\beta_{ij}^{k}
 \end{eqnarray}
 where we use the convention that $\beta_{ij}^{k}=0$ if $(i,j) \notin E^{(k)}$. 
 \end{theorem}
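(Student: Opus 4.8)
The plan is to reduce the analysis to $K$ parallel copies of the single-class DAG result (Theorem \ref{main_theorem}), coupled only through the shared link-activation decision. First I would observe that each class $k$ runs, on its own embedded DAG $\mathcal{G}^k(V, E^k)$, exactly the deficit dynamics \eqref{bnd2}: the per-class state variables $\{\bm{Q}^k(t), \bm{X}^k(t)\}$ obey, for every $j \neq r$,
\begin{equation*}
X_j^k(t+1) \leq \Big(X_j^k(t) - \sum_{i} c_{ij} s_{ij}^k(t)\Big)^{+} + \sum_{m} c_{m i_t^{*,k}} s_{m i_t^{*,k}}^k(t),
\end{equation*}
where $s_{ij}^k(t) = s_{ij}(t)\,\mathbbm{1}(e \in E^k)$ and the arrival term for class $k$ is $A^k(t)$, the number of slot-$t$ arrivals routed to class $l = k$ by Step 5. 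Since the embedded graphs are DAGs, the $\Pi^*$-style reasoning of Lemma \ref{Q_positivity_lemma} applies verbatim within each class, so the $X_j^k(t)$ stay nonnegative and the efficient-forwarding bookkeeping of Constraint \ref{con:3} is intact.

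Next I would set up the Lyapunov function $L(\bm{X}(t)) = \frac{1}{2}\sum_{k=1}^{K}\sum_{j \neq r} \big(X_j^k(t)\big)^2$ over the concatenated state and compute its one-slot drift, exactly as in the proof of Theorem \ref{main_theorem} but summed over classes. The cross terms organize, just as in the single-class case, into $-\sum_e c_e s_e(t)\, W_e^{\text{eff}}(t)$ where the effective per-edge weight is $\sum_{k : e \in E^k} W_e^k(t)$; however, Step 6 activates on $W_e(t) = \max_{k : e \in E^k} W_e^k(t)$ and forwards for a single maximizing class. The key inequality to push through is that serving only the max-weight class per edge still extracts at least the max-weight term, and — this is the crux — the max-weight schedule chosen in Step 7 is within a bounded additive constant of the drift obtained by any fixed randomized schedule, in particular one that splits each $\bm{\beta}^k$ according to a Carathéodory decomposition achieving the rate \eqref{multiclass_rate}. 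Concretely, for any feasible $\{\bm{\beta}^k\}$ with $\sum_k \bm{\beta}^k \in \conv{\mathcal{S}}$, a stationary randomized policy exists whose expected per-class service on edge $(i,j)$ is $c_{ij}\beta_{ij}^k$; plugging the target rates $\lambda_j^k = \min_{j' \neq r}\sum_i c_{ij'}\beta_{ij'}^k$ and the admission split realized by Step 5 (which, like the single-class source term, must be shown to match the $\min$ over $K_r^l$) into the drift bound yields a negative drift whenever the total admitted rate is strictly below $\lambda^K$ of \eqref{multiclass_rate}, giving strong stability of $\bm{X}(t)$ by the Foster–Lyapunov / Neely drift theorem \cite{neely2010stochastic}.

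Finally, strong stability of all $X_j^k(t)$ translates back to throughput: summing the per-class deficit bounds along any directed path from $r$ to a node $v$ in $\mathcal{G}^k$ (which exists since $r$ is first in $\prec_k$) shows $R_v^k(T)/T \to \lambda^k$, and summing the per-class reception counts $R_v(T) = \sum_k R_v^k(T)$ gives $\liminf_T R_v(T)/T \geq \lambda^K$ for every $v$, with the matching upper bound coming from the fact that the source admits at total rate $\lambda^K$. I expect the main obstacle to be the combinatorial step relating $\max_{k} W_e^k$ (what Step 6 uses) to $\sum_{k} W_e^k$ (what the clean Lyapunov drift would want): one must argue that forwarding for a single dominant class per edge loses nothing asymptotically, presumably by a telescoping argument over the classes sharing an edge together with the observation that at any time the weights of distinct classes on the same edge cannot all be simultaneously large in a way that the schedule fails to address — alternatively, by comparing against the per-class-dedicated schedule and absorbing the difference into the $O(1)$ slack term of the drift inequality. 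The secondary subtlety is verifying that the randomized source-admission rule in Step 5 realizes the per-class arrival rates implicit in the outer maximization of \eqref{multiclass_rate}; this parallels the handling of the $A(t)$ term in Theorem \ref{main_theorem} and should follow from the same max-weight-matches-any-randomized-policy argument applied to the virtual "admission" links at $r$.
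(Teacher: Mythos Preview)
Your outline is precisely the paper's approach: it uses the Lyapunov function $\hat{L}(\bm{Q}(t))=\sum_{k=1}^{K}\sum_{j\neq r}\bigl(X_j^{k}(t)\bigr)^2$ and compares the drift of Algorithm~\ref{multiclass_algo} against the stationary randomized policy $\pi_{\text{STAT}}$ that serves class $k$ on edge $(i,j)$ at rate $c_{ij}\beta_{ij}^{k}$, then invokes the single-class argument of Theorem~\ref{main_theorem} verbatim.

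The step you flag as the ``main obstacle'' is not an obstacle at all; it is a one-line inequality, and your suggested telescoping or simultaneity arguments are unnecessary. The negative drift term actually produced by Steps~6--8 is $\sum_{e}c_e s_e(t)\max_{k}W_e^{k}(t)$, since each activated edge serves only its weight-maximizing class. The corresponding term for $\pi_{\text{STAT}}$ is $\sum_{k}\sum_{e}c_e\beta_e^{k}W_e^{k}(t)$. Because $W_e^{k}(t)\leq \max_{k'}W_e^{k'}(t)=W_e(t)$ for every $k$, we have
\[
\sum_{k}\sum_{e}c_e\beta_e^{k}W_e^{k}(t)\;\leq\;\sum_{e}c_e\Bigl(\sum_{k}\beta_e^{k}\Bigr)W_e(t)\;=\;\sum_{e}c_e\gamma_e W_e(t)\;\leq\;\max_{\bm{s}\in\mathcal{S}}\sum_{e}c_e s_e W_e(t),
\]
where $\bm{\gamma}=\sum_k\bm{\beta}^{k}\in\conv{\mathcal{S}}$ and the last inequality is the usual max-weight domination. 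So the ``clean'' drift never wants $\sum_k W_e^{k}$; it only needs to beat the stationary policy, and the $\max$ already does that. The admission rule in Step~5 is handled by the identical pointwise comparison on the source side: routing each arrival to $\arg\min_{l}\sum_{j\in K_r^{l}(t)}X_j^{l}(t)$ yields a source-side drift contribution no larger than any fixed randomized split with per-class rates $\lambda^{k}=\min_{j\neq r}\sum_i c_{ij}\beta_{ij}^{k}$. One minor caveat: your parenthetical ``which exists since $r$ is first in $\prec_k$'' is not by itself a proof of $r\to v$ connectivity in $\mathcal{G}^k$; but when connectivity fails for some node in class $k$, that class contributes zero to $\lambda^K$ in~\eqref{multiclass_rate} anyway, so the statement is unaffected.
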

 The right hand side of Eqn. \eqref{multiclass_rate} can be understood as follows. Consider a feasible stationary activation policy $\pi_{\text{STAT}}$ which activates class $l$ on the edge $(i,j)$ $\beta_{ij}^{k}$ fraction of time. Since, by construction, each of the class follows a DAG, lemma \eqref{lem:701} implies that the resulting averaged graph has a broadcast capacity of $\lambda^{k}= \min_{j}\sum_{i}c_{ij}\beta_{ij}^{k}$ for the class $k$. Thus the total broadcast rate achievable by this scheme is simply $\lambda^{K}=\sum_{k=1}^{K}\lambda^k=\sum_{k} \min_{j}\sum_{i}c_{ij}\beta_{ij}^{k}$. Given the $K$ classes, following the same line of argument as in \eqref{LP_match}, we can develop a similar LP to compute the broadcast capacity \eqref{multiclass_rate} of all these $k$-classes taken together in strongly poly-time.

 The proof of Theorem \eqref{multiclass_th} follows along the exact same line of argument as in Theorem \eqref{main_theorem}, where we now work with the following Lyapunov function $\hat{L}(\bm{Q}(t))$, which takes into account all $k$ classes: 
 \begin{eqnarray}
 \hat{L}(\bm{Q}(t))=\sum_{k=1}^{K} \sum_{j\neq r}(X_j^{k} (t) )^2
 \end{eqnarray}
 We then compare the multiclass broadcast algorithm \ref{multiclass_algo} with the stationary activation policy $\pi_{\text{STAT}}$ above to show that the Multiclass broadcast algorithm is stable under all arrival rates below $\lambda$. The details are omitted for brevity.\\
 Since the broadcast-rate $\lambda^K$ achievable by a collection of $K$ embedded DAGs in a graph $\mathcal{G}$ is always upper-bounded by the actual broadcast capacity $\lambda^*$ of $\mathcal{G}$, we have the following interesting combinatorial result from Theorem \eqref{multiclass_th}
 \begin{corollary}
 Consider a wired network, represented by the graph $\mathcal{G}(V,E)$. For a given integer $K \geq 1$, consider $K$ classes as in Theorem \eqref{multiclass_th}, with $\{E^{k}\}_{k=1}^{K}$ being their corresponding edge-sets. Then, for any set of non-negative vectors $\{\bm{\beta}^k\}_{k=1}^{K}$ with $\sum_k \beta_{ij}^{k} \leq 1, \forall (i,j)$, the following lower-bound for the broadcast capacity $\lambda^*$ holds: 
 \begin{eqnarray}
 \lambda^* \geq \sum_{k=1}^{K} \min_{j\neq r}\sum_{i}c_{ij}\beta_{ij}^{k}
 \end{eqnarray}
 where we use the convention that $\beta_{ij}^k=0$ if $(i,j) \notin E^k$.
 \end{corollary}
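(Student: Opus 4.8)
The plan is to obtain this bound as an immediate consequence of Theorem~\ref{multiclass_th} together with the single structural fact that a wired network imposes no interference. First I would recall that for a wired network the activation set $\mathcal{S}$ consists of \emph{all} binary vectors, so $\conv{\mathcal{S}} = [0,1]^{|E|}$; that is, a vector belongs to $\conv{\mathcal{S}}$ precisely when each of its components lies in $[0,1]$.

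Next I would fix the $K$ classes (equivalently, their edge-sets $E^{1},\dots,E^{K}$) exactly as in the hypothesis, and take any non-negative vectors $\{\bm{\beta}^{k}\}_{k=1}^{K}$ satisfying the stated conventions $\beta_{ij}^{k}=0$ for $(i,j)\notin E^{k}$ and $\sum_{k}\beta_{ij}^{k}\leq 1$ for every edge $(i,j)$. The aggregate vector $\bm{\beta}:=\sum_{k=1}^{K}\bm{\beta}^{k}$ then obeys $0\leq\beta_{ij}\leq 1$ for all $(i,j)$, hence $\bm{\beta}\in\conv{\mathcal{S}}$ by the previous observation. Consequently the tuple $\{\bm{\beta}^{k}\}$ is a feasible point of the maximization in Eqn.~\eqref{multiclass_rate}, and since $\lambda^{K}$ is the value of that maximum we get
\[
\lambda^{K} \;=\; \max_{\sum_{k}\bm{\beta}^{k}\in\conv{\mathcal{S}}}\;\sum_{k=1}^{K}\min_{j\neq r}\sum_{i}c_{ij}\beta_{ij}^{k} \;\geq\; \sum_{k=1}^{K}\min_{j\neq r}\sum_{i}c_{ij}\beta_{ij}^{k}.
\]

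Finally I would invoke the fact that the multiclass broadcast Algorithm~\ref{multiclass_algo} is itself a policy in the general class $\Pi$ (as discussed at the start of Section~\ref{cyclic_extension}), so by Theorem~\ref{multiclass_th} it delivers distinct packets to every node at the common rate $\lambda^{K}$; the definition $\lambda^{*}=\sup_{\pi\in\Pi}\lambda^{\pi}$ then gives $\lambda^{*}\geq\lambda^{K}$. Chaining the two inequalities yields $\lambda^{*}\geq\sum_{k=1}^{K}\min_{j\neq r}\sum_{i}c_{ij}\beta_{ij}^{k}$, which is the claim.

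The only step that warrants any care is the first one: one must make sure the hypothesis $\sum_{k}\beta_{ij}^{k}\leq 1$ is \emph{exactly} the description of $\conv{\mathcal{S}}$, which holds only because the network is wired — under genuine interference $\conv{\mathcal{S}}$ is a strict subset of the unit cube and the argument would break down. Since the corollary is explicitly restricted to wired networks, there is no real obstacle, and the proof reduces to the bookkeeping above.
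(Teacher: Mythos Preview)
Your argument is correct and is exactly the reasoning the paper intends: the corollary is stated immediately after the observation that $\lambda^{K}\leq\lambda^{*}$ (since Algorithm~\ref{multiclass_algo} is a policy in $\Pi$), and the wired assumption makes the constraint $\sum_{k}\beta_{ij}^{k}\leq 1$ coincide with $\sum_{k}\bm{\beta}^{k}\in\conv{\mathcal{S}}$, so any such $\{\bm{\beta}^{k}\}$ is feasible in~\eqref{multiclass_rate}. Your write-up simply spells out these two steps more explicitly than the paper does.
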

 The above corollary may be contrasted with Eqn. \eqref{eq:DAGcap}, which provides an upper bound to the broadcast capacity $\lambda^*$.
\section{Simulation Results}\label{sec:simulations}

We present a number of simulation results concerning the delay performance of the optimal broadcast policy $\pi^{*}$ in wireless DAG networks with different topologies. For simplicity, we assume primary interference constraints throughout this section. Delay for a packet is defined as the number of slots required for it to reach \emph{all} nodes in the network, after its arrival to the source \texttt{r}. 

\subsection*{Diamond topology}

  We first consider a $4$-node diamond topology as shown  Fig.~\ref{fig:602}. Link capacities are shown along with the links. The broadcast capacity $\lambda^{*}$ of the network is upper bounded by the maximum throughput of node $c$, which is $1$ because at most one of its incoming links can be activated at any time. To show that the broadcast capacity is indeed $\lambda^{*} = 1$, we consider the three spanning trees $\{\mathcal{T}_1, \mathcal{T}_2, \mathcal{T}_3\}$ rooted at the source node $r$. By finding the optimal time-sharing of all feasible link activations over a subset of spanning trees using linear programming, we can show that the maximum broadcast throughput using only the spanning tree $\mathcal{T}_{1}$ is $3/4$. The maximum broadcast throughput over the two trees $\{\mathcal{T}_{1}, \mathcal{T}_{2}\}$ is $6/7$, and that over all three trees $\{\mathcal{T}_1, \mathcal{T}_2, \mathcal{T}_3\}$ is $1$. Thus, the upper bound is achieved and the broadcast capacity is $\lambda^{*}=1$.

We compare our broadcast policy $\pi^{*}$ with the tree-based policy $\pi_{\text{tree}}$ in~\cite{swati}. While the policy $\pi_{\text{tree}}$ is originally proposed to transmit multicast traffic in a wired network by balancing traffic over multiple trees, we slightly modify the policy $\pi_{\text{tree}}$ for broadcasting packets over spanning trees in the wireless setting; link activations are chosen according to the max-weight procedure. See Fig.~\ref{fig:601} for a comparison of the average delay performance under the policy $\pi^{*}$ and the tree-based policy $\pi_{\text{tree}}$ over different subset of trees. The simulation duration is $10^{5}$ slots. We observe that the policy $\pi^{*}$ achieves the broadcast capacity $\lambda^{*}=1$ and is throughput optimal.
\subsection*{Mesh topology}
The broadcast policy $\pi^{*}$ does not rely on the limited tree structures and therefore has the potential to exploit all degrees of freedom in packet forwarding in the network; such freedom may lead to better delay performance as compared to the tree-based policy. To observe this effect, we consider the $10$-node DAG network subject to the primary interference constraint in Fig.~\ref{fig:605}. For every pair of node $\{i,j\}$, $1\leq i <j \leq 10$, the network has a directed link from $i$ to $j$ with capacity $(10-i)$. By induction, we can calculate the number of spanning trees rooted at the source node $1$ to be $9!\approx 3.6\times 10^5$. We choose five arbitrary spanning trees $\{\mathcal{T}_i, 1\leq i \leq 5\}$, over which the tree-based algorithm $\pi_{\text{tree}}$ is simulated. Table~\ref{delay_table} demonstrates the superior delay performance of the broadcast policy $\pi^{*}$, as compared to that of the tree-based algorithm $\pi_{\text{tree}}$ over different subsets of the spanning trees. It also shows that a tree-based algorithm that does not use enough trees would result in degraded throughput.

\begin{figure}
\centering
\subfigure[The wireless network]{
\begin{overpic}[width=0.2\textwidth]{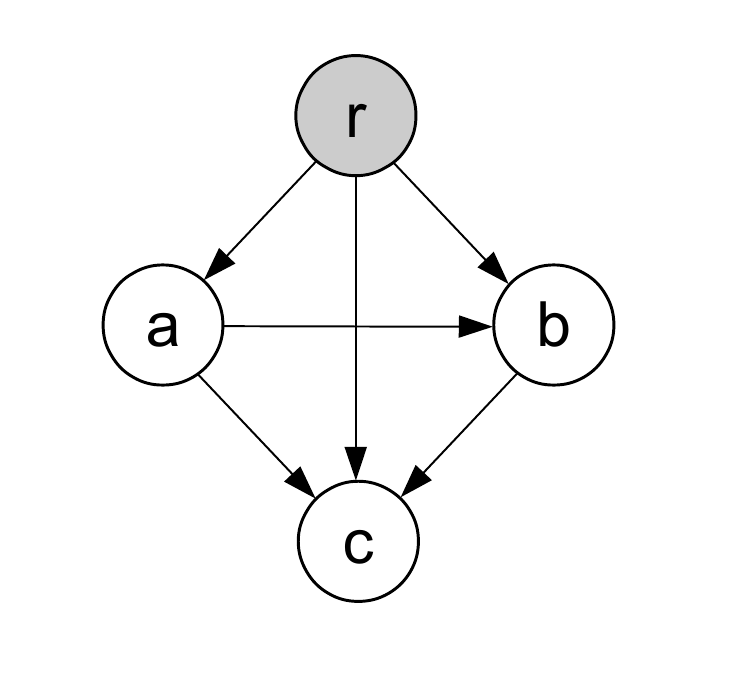}
\put(31,66){\footnotesize $3$}
\put(61,66){\footnotesize $1$}
\put(54,52){\footnotesize $2$}
\put(43,39){\footnotesize $1$}
\put(29,31){\footnotesize $1$}
\put(64,31){\footnotesize $1$}
\end{overpic}
}
\subfigure[Tree $\mathcal{T}_{1}$]{
\includegraphics[width=0.2\textwidth]{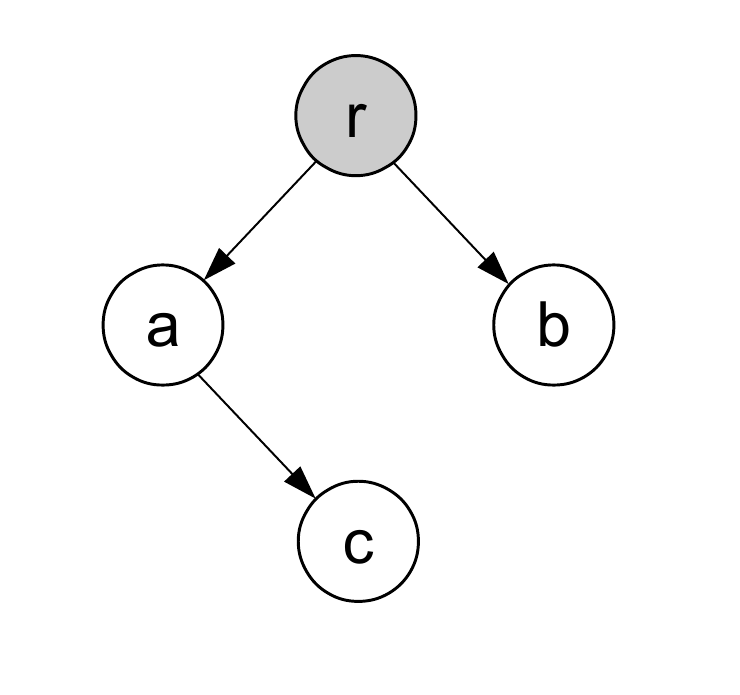}
}
\subfigure[Tree $\mathcal{T}_{2}$]{
\includegraphics[width=0.2\textwidth]{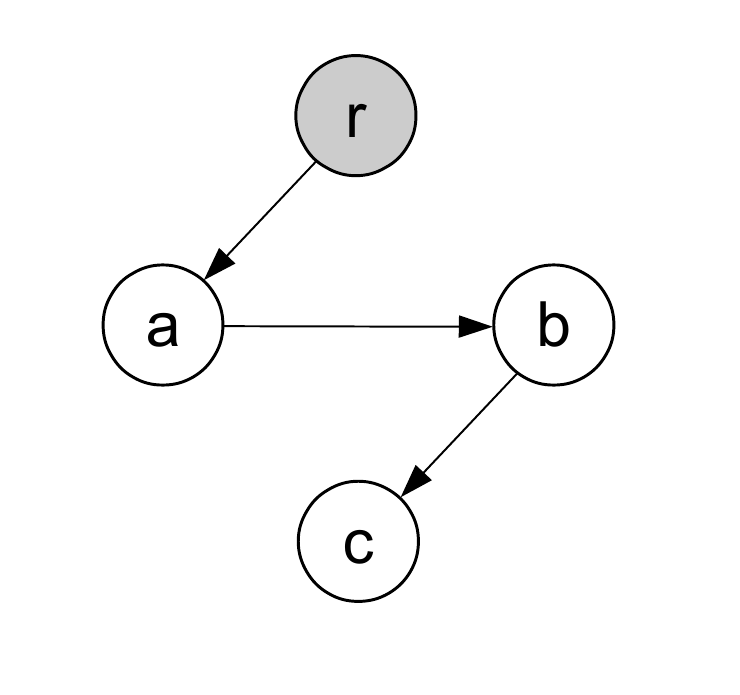}
}
\subfigure[Tree $\mathcal{T}_{3}$]{
\includegraphics[width=0.2\textwidth]{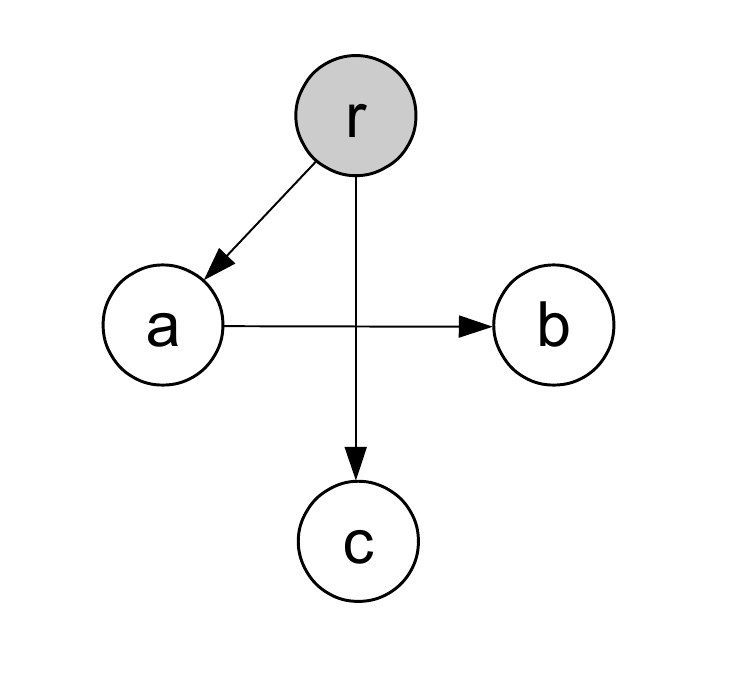}
}
\caption{\small A wireless DAG network and its three embedded spanning trees.}
\label{fig:602}
\end{figure}

%\begin{figure}
%\centering
%\begin{overpic}[width=0.45\textwidth]{Algorithm_Comparisons_ver2}
%\put(30,60){\footnotesize{Optimal Algorithm $(C=1)$}}
%%\put(30,58){\footnotesize{Algorithm $(C=1)$}}
%\put(30,54){\footnotesize{$\mathcal{T}_1,\mathcal{T}_2,\mathcal{T}_3$ $(C=1)$}}
%\put(30,49){\footnotesize{$\mathcal{T}_1,\mathcal{T}_2$ $(C=6/7)$}}
%\put(30,43){\footnotesize{$\mathcal{T}_1$ $(C=3/4)$}}
%\end{overpic}
%%\includegraphics[width=0.5\textwidth]{DAG_Simulations_INFOCOM/Algorithm_Comparisons}
%\caption{\small Average delay performance of the optimal broadcast policy $\pi^{*}$ and the tree-based policy $\pi_{\text{tree}}$ that balances traffic over different subsets of spanning trees.}
%\label{fig:601}
%\end{figure}

\begin{figure}
\centering 
\begin{overpic}[width=0.42\textwidth]{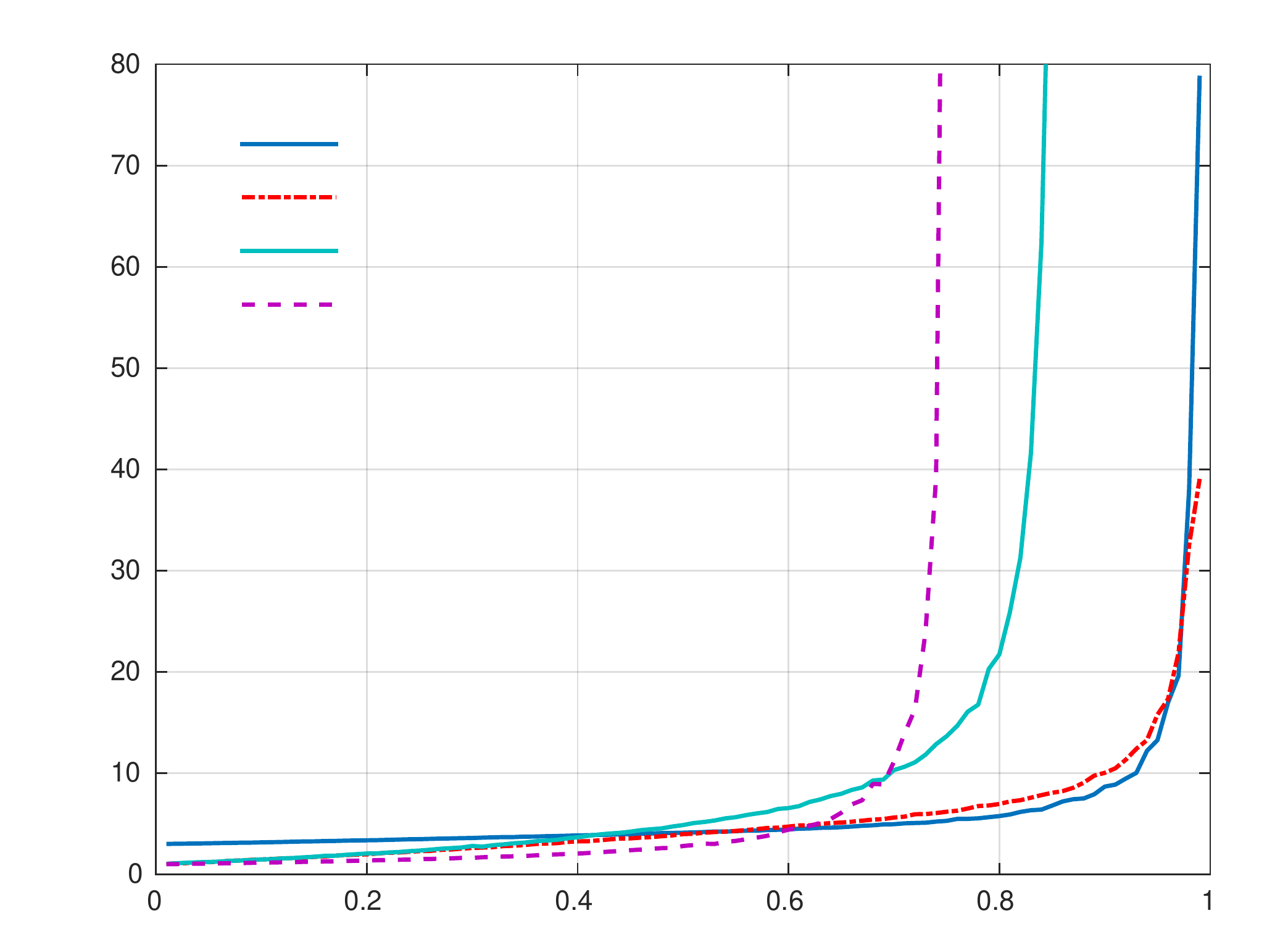}
\put(28,62){\scriptsize{Optimal Algorithm $(C=1)$}}
%\put(30,58){\footnotesize{Algorithm $(C=1)$}}
\put(28,58){\scriptsize{$\mathcal{T}_1,\mathcal{T}_2,\mathcal{T}_3$ $(C=1)$}}
\put(28,54){\scriptsize{$\mathcal{T}_1,\mathcal{T}_2$ $(C=6/7)$}}
\put(28,50){\scriptsize{$\mathcal{T}_1$ $(C=3/4)$}}
\end{overpic}
\caption{\small Average delay performance of the optimal broadcast policy $\pi^{*}$ and the tree-based policy $\pi_{\text{tree}}$ that balances traffic over different subsets of spanning trees.}
\end{figure}

\begin{figure}[ht!]
\centering
\subfigure[The wireless network]{
\includegraphics[scale=0.4]{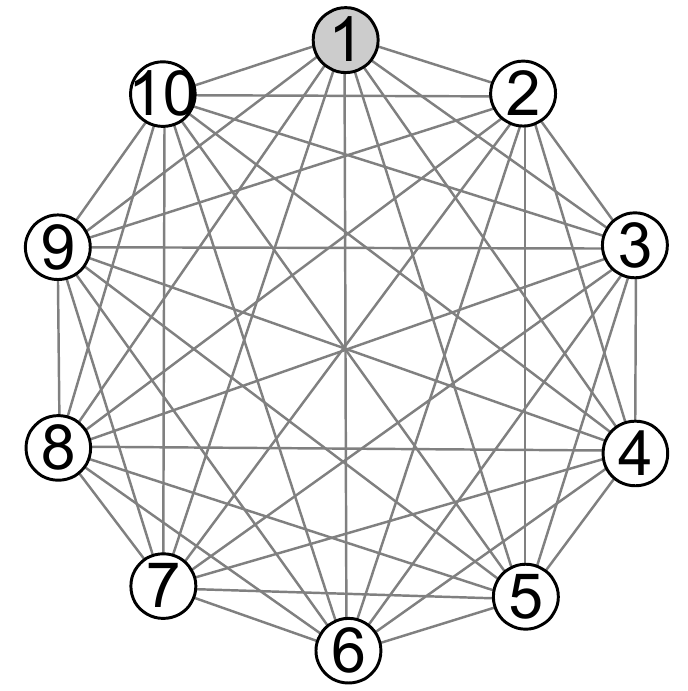}
}
\subfigure[Tree $\mathcal{T}_{1}$]{
\includegraphics[scale=0.4]{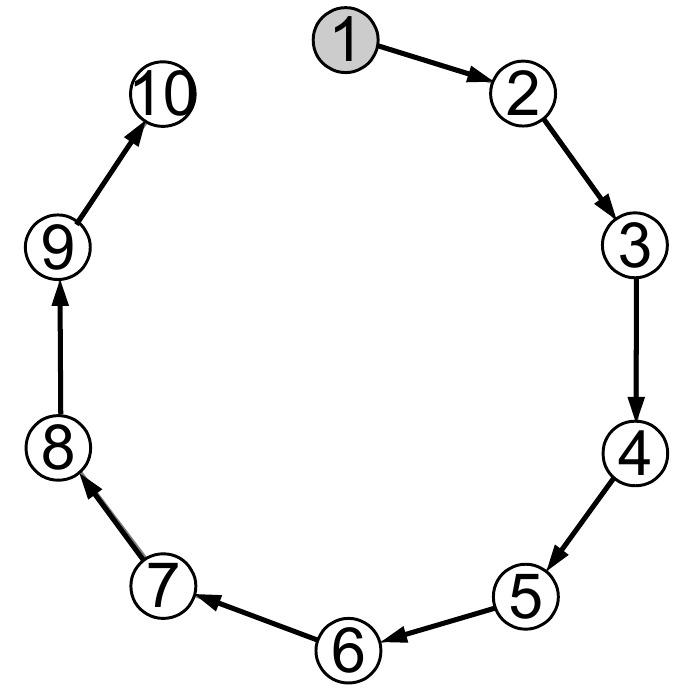}
}
\subfigure[Tree $\mathcal{T}_{2}$]{
\includegraphics[scale=0.4]{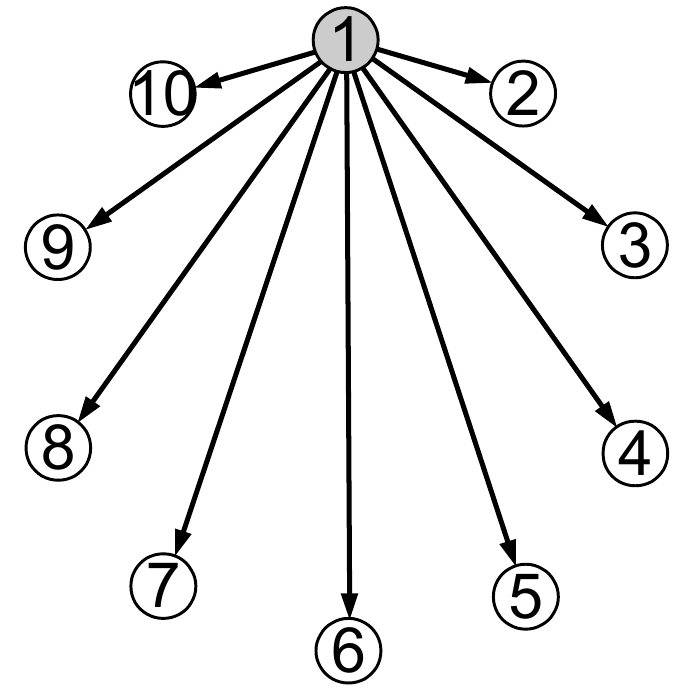}
}
\subfigure[Tree $\mathcal{T}_{3}$]{
\includegraphics[scale=0.4]{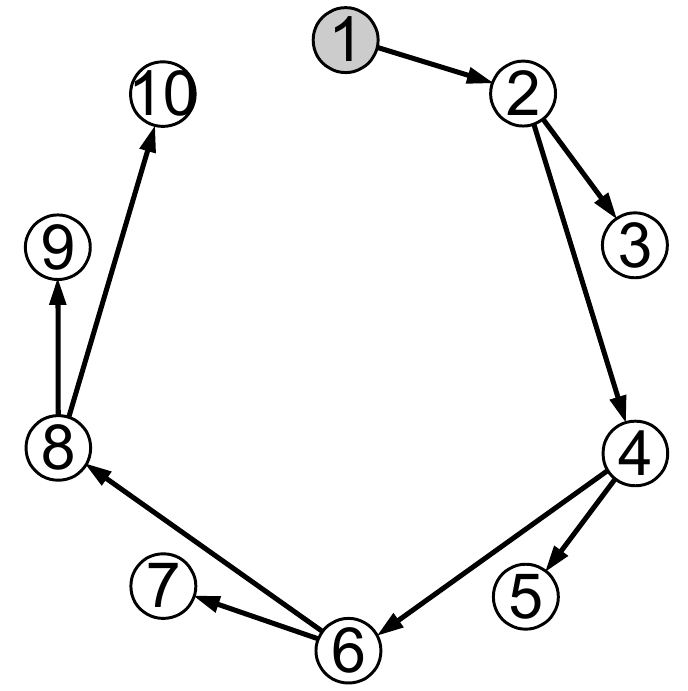}
}
\subfigure[Tree $\mathcal{T}_{4}$]{
\includegraphics[scale=0.4]{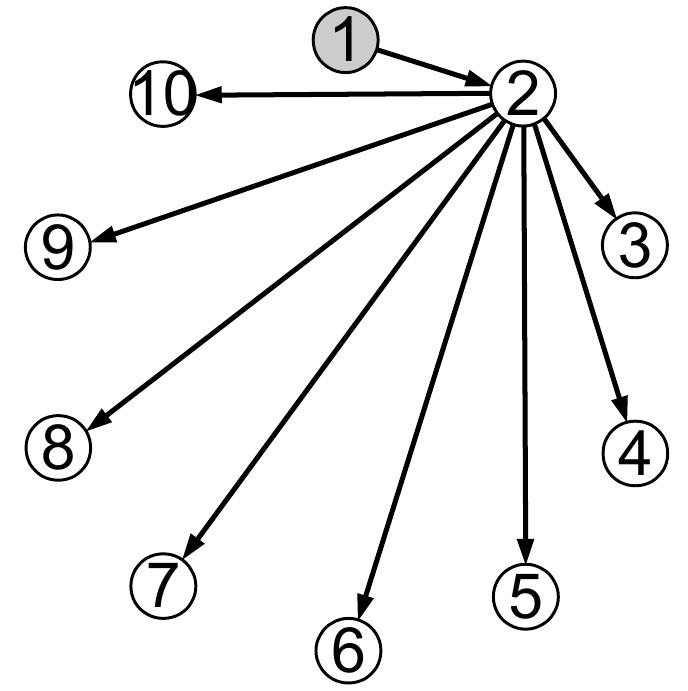}
}
\subfigure[Tree $\mathcal{T}_{5}$]{
\includegraphics[scale=0.4]{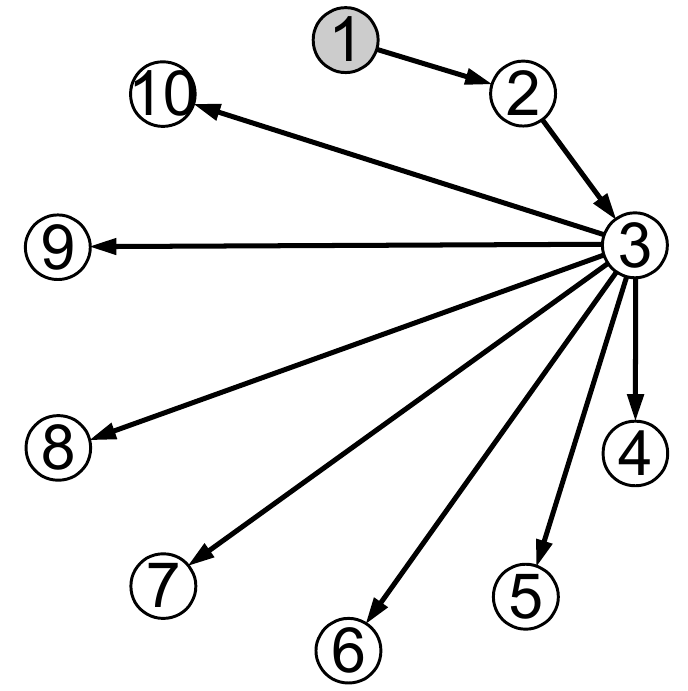}
}
\caption{\small The $10$-node wireless DAG network and a subset of spanning trees.}
\label{fig:605}
\end{figure}

\begin{table*}[ht]
\centering
\begin{tabular}{c | c c c c c | c}
\toprule
& \multicolumn{5}{c|}{tree-based policy $\pi_{\text{tree}}$ over the spanning trees:} & broadcast \\
%\multicolumn{1}{c}{\hspace*{80pt}\textbf{Optimal Algorithm}} \\
%\cmidrule(r){7}
$\lambda$ & $\mathcal{T}_{1}$ & $\mathcal{T}_{1} \sim \mathcal{T}_{2}$ & $\mathcal{T}_{1} \sim \mathcal{T}_{3}$ & $\mathcal{T}_{1} \sim \mathcal{T}_{4}$ & $\mathcal{T}_{1} \sim \mathcal{T}_{5}$ & policy $\pi^{*}$ \\ [0.5 ex]
\hline
$0.5$ & 12.90 & 12.72 & 13.53 & 16.14 & 16.2 & 11.90\\
$0.9$ & $1.3\times 10^4$ & 176.65 & 106.67 & 34.33 & 28.31 & 12.93\\
$1.9$ & $3.31\times 10^4$ & $1.12\times 10^4$ &$4.92\times 10^3$ & 171.56 & 95.76 & 14.67\\
$2.3$ & $3.63\times 10^4$ & $1.89\times 10^4$ & $1.40\times 10^4$& $1.76\times 10^3$ & 143.68 & 17.35\\
$2.7$ & $3.87\times 10^4$& $2.45\times 10^4$ & $2.03\times 10^4$& $1.1\times 10^4$ & 1551.3 & 20.08\\
$3.1$ & $4.03\times 10^4$ & $2.86\times 10^4$& $2.51 \times 10^4$ & $1.78\times 10^4$ & 9788.1 & 50.39\\
\hline
\end{tabular}
\caption{\small Average delay performance of the tree-based policy $\pi_{\text{tree}}$ over different subsets of spanning trees and the optimal broadcast policy $\pi^{*}$.}
\label{delay_table}
\end{table*}
\subsection*{Multiclass Simulation for Arbitrary Topology}
We randomly generate an ensemble of $500$ wired networks (not necessarily DAGs), each consisting of $N=10$ nodes and unit capacity links. By solving the LP corresponding to Eqn. \eqref{multiclass_rate}, we compute the fraction of the total broadcast capacity achievable using $K$ randomly chosen classes by the Multiclass Algorithm \ref{multiclass_algo} of section \ref{cyclic_extension}. The result is presented in Figure \ref{fig:multiclass}. It follows that a sizeable fraction of the optimal capacity may be achieved by using a moderate number of classes. However the number of required classes for achieving a certain fraction of the capacity increases as the broadcast capacity increases. This is because of the fact that increased broadcast capacity would warrant an increased number of DAGs to cover the graph efficiently.
\begin{figure}
\centering
\begin{overpic}[width=0.42\textwidth]{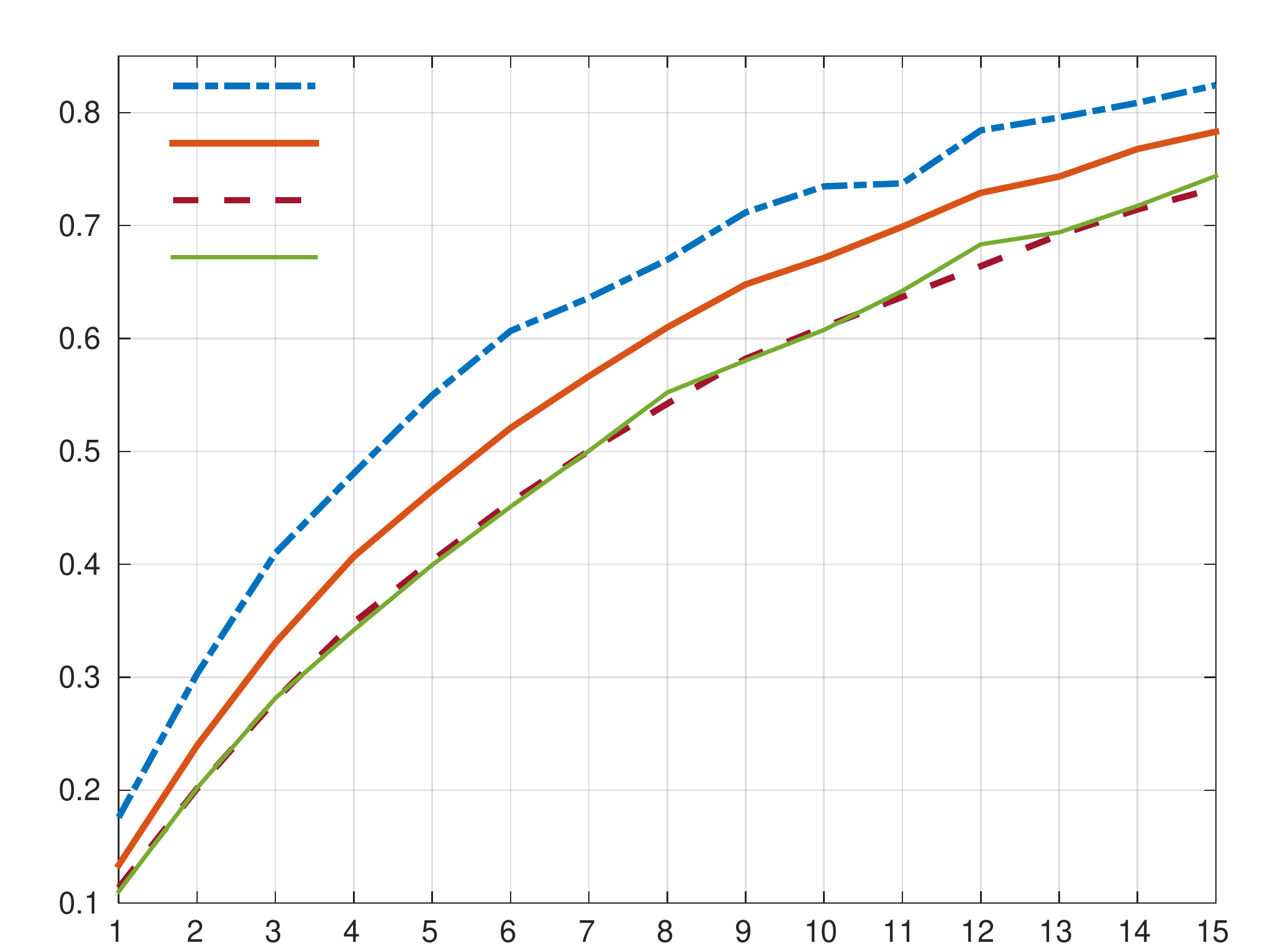}
\put(27,66){\scriptsize{${\lambda^*=1}$}}
\put(27,61){\scriptsize{${\lambda^*=2}$}}
\put(27,57){\scriptsize{${\lambda^*=3}$}}
\put(27,53){\scriptsize{${\lambda^*=4}$}}
\put(-2.5,34){\large{\rotatebox{90}{$\frac{\lambda}{\lambda^*}$}}}
\put(51,-3){$K$}
%\put(30,60){\footnotesize{Optimal Algorithm $(C=1)$}}
%%\put(30,58){\footnotesize{Algorithm $(C=1)$}}
%\put(30,54){\footnotesize{Trees $1,2 \text{ and } 3$ $(C=1)$}}
%\put(30,49){\footnotesize{Trees $1$ and $2$ $(C=6/7)$}}
%\put(30,43){\footnotesize{Tree $1$ $(C=3/4)$}}
\end{overpic}
\caption{\small {Fraction of optimal broadcast rate $\frac{\lambda}{\lambda^*}$ achievable by the multiclass broadcast algorithm with randomly chosen $K$ classes for randomly generated wired networks with $N=10$ nodes.}}
\label{fig:multiclass}
\end{figure}

\section{Conclusion}\label{sec:conclusion}

We characterize the broadcast capacity of a wireless network under general interference constraints. When the underlying network topology is a DAG, we propose a dynamic algorithm that achieves the wireless broadcast capacity. 
Our novel design, based on packet deficits and the in-order packet delivery constraint, is promising for application to other systems with packet replicas, such as multicasting and caching systems. 
Future work involves the study of arbitrary networks, where optimal policies must be sought in the class $\Pi\setminus \Pi^{\text{in-order}}$.

\bibliographystyle{IEEEtran}
\bibliography{MIT_broadcast_bibliography}

\begin{appendix}
%Throughout the proofs in the Appendix, all equalities and inequalities involving random variables hold with probability one unless stated otherwise.

%\subsection{Proof of Lemma \ref{in_expectation_lemma}} \label{in_expectation_lemma_proof}
%Under any policy $\pi \in \Pi$, the number of packets received by a node $i$ up to time $T$ cannot exceed the number of exogenous packets arriving at the source node $r$ up to time $T$. Thus, we have $R_i^{\pi}(T) \leq \sum_{t=1}^{T} A(t)$. Taking a limiting time average at both sides and using the strong law of large numbers on $A(t)$, we get $\limsup_{T\to \infty} \frac{1}{T} R_i^{\pi}(T) \leq \lambda$, which holds for all $i\in V$. It follows that $\max_{i\in V} \limsup_{T\to \infty} \frac{1}{T} R_i^{\pi}(T) \leq \lambda$. If policy $\pi$ is a broadcast policy of rate $\lambda$ (see~\eqref{bcdef}), we have
%\[
%\lambda= \min_{i\in V} \liminf_{T\to \infty}\frac{1}{T} R_i^{\pi}(T)\leq \max_{i\in V} \limsup_{T\to \infty} \frac{1}{T} R_i^{\pi}(T) \leq \lambda.
%\]
%Equivalently, we have for each $i\in V$
%\[
%\lambda= \liminf_{T\to \infty}\frac{1}{T} R_i^{\pi}(T)\leq \limsup_{T\to \infty} \frac{1}{T} R_i^{\pi}(T) \leq \lambda.
%\]
%It follows that the above limits exist and satisfy
%\[
%\lim_{T\to \infty}\frac{1}{T} R_i^{\pi}(T) = \lambda, \ \forall i\in V.
%\]

\subsection{Proof of Theorem \ref{broadcast_ub}} \label{broadcast_ub_proof}

Fix an $\epsilon >0$. Consider a policy $\pi \in \Pi$ that achieves a broadcast rate of at least $\lambda^* -\epsilon$ defined in~\eqref{bcdef}; this policy $\pi$ exists by the definition of the broadcast capacity $\lambda^{*}$ in Definition~\ref{capacity_def}. Consider any proper cut $U$ of the network $\mathcal{G}$. By definition, there exists a node $i \notin U$. Let $\bm{s}^{\pi}(t) = (s_{e}^{\pi}(t), e\in E)$ be the link-activation vector chosen by policy $\pi$ in slot $t$. The maximum number of packets that can be transmitted across the cut $U$ in slot $t$ is at most $\sum_{e\in E_{U}} c_{e} s_{e}^{\pi}(t)$, which is the total capacity of all activated links across $U$, and the link subset $E_{U}$ is given in~\eqref{eq:604}. The number of distinct packets received by a node $i$ by time $T$ is upper bounded by the total available capacity across the cut $U$ up to time $T$, subject to link-activation decisions of policy $\pi$. That is, we have
\begin{equation} \label{bound_packet}
R_i^{\pi}(T) \leq \sum_{t=1}^{T} \sum_{e\in E_{U}} c_{e} s_{e}^{\pi}(t) = \bm{u}\cdot \sum_{t=1}^{T} \bm{s}^{\pi}(t),
\end{equation}
where we define the vector $\bm{u} = (u_{e}, e\in E)$, $u_{e} = c_{e} 1_{[e\in E_{U}]}$, and $\bm{a}\cdot\bm{b}$ is the inner product of two vectors.\footnote{Note that~\eqref{bound_packet} remains valid if network coding operations are allowed.} Dividing both sides by $T$ yields
\[
 \frac{R_i^{\pi}(T)}{T} \leq \bm{u}\cdot\bigg(\frac{1}{T} \sum_{t=1}^{T}\bm{s}^{\pi}(t)\bigg).
\]
It follows that
\begin{align} 
\lambda^*-\epsilon  &\stackrel{(a)}{\leq} \min_{j\in V} \liminf_{T\to \infty} \frac{R_j^{\pi}(T)}{T}\leq  \liminf_{T\to \infty} \frac{R_i^{\pi}(T)}{T} \nonumber \notag  \\
&\leq  \liminf_{T\to \infty}\bm{u}\cdot\bigg(\frac{1}{T} \sum_{t=1}^{T}\bm{s}^{\pi}(t)\bigg) \label{bound2},
\end{align}
where (a) follows  that $\pi$ is a broadcast policy of rate at least $\lambda^*-\epsilon$. Since the above holds for any proper-cut $\bm{u} \in U$, we have 
\begin{eqnarray} \label{bound3}
\lambda^*-\epsilon  \leq \min_{\bm{u} \in U} \liminf_{T\to \infty}\bm{u}\cdot\bigg(\frac{1}{T} \sum_{t=1}^{T}\bm{s}^{\pi}(t)\bigg)
\end{eqnarray}
Now consider the following lemma.

\begin{lemma} \label{conv_hull}
For any policy $\pi \in \Pi$, there exists a vector $\bm{\beta}^{\pi} \in \textrm{conv}(\mathcal{S})$ such that 
\[
\min_{\bm{u} \in U} \liminf_{T\to \infty}  \bm{u}\cdot\bigg(\frac{1}{T} \sum_{t=1}^{T}\bm{s}^{\pi}(t)\bigg) = \min_{\bm{u} \in U}\bm{u}\cdot \bm{\beta}^{\pi} \hspace{10pt} \text{w.p.} 1
\]
\end{lemma}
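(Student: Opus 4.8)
The plan is to exploit the finiteness of the activation set $\mathcal{S}$ together with a compactness/subsequence argument. Since $\mathcal{S}$ is a finite set of binary vectors, the running time-average $\bar{\bm{s}}^\pi(T) = \frac{1}{T}\sum_{t=1}^T \bm{s}^\pi(t)$ lies in the compact polytope $\conv{\mathcal{S}}$ for every $T$. First I would fix a sample path (outside a probability-zero set, if $\pi$ randomizes, although the statement is essentially pathwise) and pass to a subsequence $T_n \to \infty$ along which $\bar{\bm{s}}^\pi(T_n)$ converges; call the limit $\bm{\beta}^\pi \in \conv{\mathcal{S}}$, which exists by the Bolzano–Weierstrass theorem applied in $\mathbb{R}^{|E|}$ and the closedness of $\conv{\mathcal{S}}$.

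The core of the argument is then an interchange-of-limit-and-min step. Because the collection of proper cuts is \emph{finite}, the map $\bm{\beta}\mapsto \min_{\bm{u}\in U}\bm{u}\cdot\bm{\beta}$ is a minimum of finitely many continuous (linear) functions, hence continuous; moreover for a finite index set one has $\liminf_n \min_{\bm{u}} \bm{u}\cdot \bm{x}_n = \min_{\bm{u}} \liminf_n \bm{u}\cdot \bm{x}_n$ whenever the inner $\liminf$'s are attained along a common subsequence. Concretely I would argue: along the subsequence $T_n$, for each fixed $\bm{u}$ we have $\bm{u}\cdot\bar{\bm{s}}^\pi(T_n)\to \bm{u}\cdot\bm{\beta}^\pi$, so $\min_{\bm{u}\in U}\bm{u}\cdot\bar{\bm{s}}^\pi(T_n)\to \min_{\bm{u}\in U}\bm{u}\cdot\bm{\beta}^\pi$ (finite min of finitely many convergent sequences). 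This shows $\liminf_{T\to\infty}\min_{\bm{u}}\bm{u}\cdot\bar{\bm{s}}^\pi(T) \le \min_{\bm{u}}\bm{u}\cdot\bm{\beta}^\pi$. For the reverse inequality (and to identify the full $\liminf$ rather than merely a subsequential one), I would instead pick $\bm{\beta}^\pi$ as a subsequential limit realizing the $\liminf$ of $T\mapsto \min_{\bm{u}}\bm{u}\cdot\bar{\bm{s}}^\pi(T)$; then $\liminf_{T}\min_{\bm{u}}\bm{u}\cdot\bar{\bm{s}}^\pi(T) = \min_{\bm{u}}\bm{u}\cdot\bm{\beta}^\pi$ directly by continuity, and separately note $\min_{\bm{u}}\bm{u}\cdot\bm{\beta}^\pi \ge \min_{\bm{u}}\liminf_T \bm{u}\cdot\bar{\bm{s}}^\pi(T)$ is automatic while the matching upper bound follows because along the chosen subsequence each $\bm{u}\cdot\bar{\bm{s}}^\pi(T_n)$ has $\liminf$ no larger than its limit $\bm{u}\cdot\bm{\beta}^\pi$. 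Combining gives the claimed equality $\min_{\bm{u}\in U}\liminf_{T\to\infty}\bm{u}\cdot\bar{\bm{s}}^\pi(T) = \min_{\bm{u}\in U}\bm{u}\cdot\bm{\beta}^\pi$.

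The main obstacle I anticipate is the subtlety of swapping $\liminf$ with $\min$: in general $\liminf_n \min_i a_n^{(i)}$ and $\min_i \liminf_n a_n^{(i)}$ can differ, and equality here genuinely relies on (i) the index set of cuts being finite and (ii) choosing one good subsequence along which $\bar{\bm{s}}^\pi(T_n)$ converges, so that \emph{every} linear functional $\bm{u}\cdot(\cdot)$ converges simultaneously — this forces all the inner $\liminf$'s to be honest limits along that common subsequence, which is exactly what makes the interchange valid. Once the right subsequence is fixed, the rest is the routine continuity argument sketched above; I would also remark that the "w.p. $1$" qualifier is handled by noting that for a fixed policy the sequence $\{\bm{s}^\pi(t)\}$ and hence $\bm{\beta}^\pi$ are determined by the sample path, so the equality holds on the full-probability event where the chosen subsequential limits exist (guaranteed everywhere by compactness).
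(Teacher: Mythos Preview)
Your approach---compactness of $\conv{\mathcal{S}}$, extraction of a convergent subsequence of the running averages $\bar{\bm{s}}^\pi(T)$, and use of the finiteness of the cut collection to interchange $\min$ and $\liminf$---is exactly the paper's route. The paper first picks a subsequence $T_k$ along which each $\bm{u}\cdot\bar{\bm{s}}^\pi(T_k)$ converges with $\min_{\bm{u}}\lim_k = \min_{\bm{u}}\liminf_T$, then passes to a sub-subsequence on which $\bar{\bm{s}}^\pi(T_{k_i})\to\bm{\beta}^\pi$ by Heine--Borel; your variant of first realizing $\liminf_T\min_{\bm{u}}$ and then refining is an equally valid packaging of the same idea.

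One correction to your two-direction bookkeeping, since you flag this as the delicate step: the inequality you call ``automatic,'' namely $\min_{\bm{u}}\bm{u}\cdot\bm{\beta}^\pi \ge \min_{\bm{u}}\liminf_T \bm{u}\cdot\bar{\bm{s}}^\pi(T)$, and the ``matching upper bound'' you then justify via the subsequence are in fact the \emph{same} direction---both follow from $\bm{u}\cdot\bm{\beta}^\pi = \lim_n \bm{u}\cdot\bar{\bm{s}}^\pi(T_n) \ge \liminf_T \bm{u}\cdot\bar{\bm{s}}^\pi(T)$ and then taking $\min_{\bm{u}}$. The direction you have not yet supplied, $\min_{\bm{u}}\bm{u}\cdot\bm{\beta}^\pi \le \min_{\bm{u}}\liminf_T \bm{u}\cdot\bar{\bm{s}}^\pi(T)$, comes instead by chaining your already-established identity $\min_{\bm{u}}\bm{u}\cdot\bm{\beta}^\pi = \liminf_T\min_{\bm{u}}\bm{u}\cdot\bar{\bm{s}}^\pi(T)$ with the elementary inequality $\liminf_T\min_{\bm{u}} a_T^{(\bm{u})} \le \min_{\bm{u}}\liminf_T a_T^{(\bm{u})}$, which holds for any finite index set (since $\min_{\bm{u}} a_T^{(\bm{u})}\le a_T^{(\bm{v})}$ for every fixed $\bm{v}$). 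With that rewiring the argument is complete and coincides with the paper's.
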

\begin{IEEEproof}
Consider the sequence $\bm{\zeta}_T^{\pi} = \frac{1}{T}\sum_{t=1}^{T}\bm{s}^{\pi}(t)$ indexed by $T\geq 1$. Since $\bm{s}^{\pi}(t) \in \mathcal{S}$ for all $t\geq 1$, we have $\bm{\zeta}_T^{\pi} \in \conv{\mathcal{S}}$ for all $T\geq 1$. Since $|U|$ is finite, by the definition of $\liminf$, there exists a subsequence $\{\bm{u}\cdot \bm{\zeta}_{T_k}^{\pi}\}_{k\geq 1}$ of the sequence $\{\bm{u}\cdot \bm{\zeta}_{T}^{\pi}\}_{T\geq 1}$ such that
\begin{equation} \label{lim}
\min_{\bm{u} \in U}\lim_{k \to \infty} \bm{u}\cdot \bm{\zeta}_{T_k}^{\pi} = \min_{\bm{u} \in U}\liminf_{T\to \infty} \bm{u} \cdot \bm{\zeta}_T^{\pi}.
\end{equation} 
Since the set $\conv{\mathcal{S}} \subset \mathbb{R}^{|E|}$ is closed and bounded, by the Heine-Borel theorem, it is compact. Hence any sequence in $\text{conv}(\mathcal{S})$ has a converging sub-sequence. Thus, there exists a sub-sub-sequence $\{\bm{\zeta}_{T_{k_i}}^{\pi}\}_{i\geq 1}$ and  $\bm{\beta}^{\pi} \in \text{conv}(\mathcal{S})$ such that 
\[
\bm{\zeta}_{T_{k_i}}^{\pi} \to \bm{\beta}^{\pi}, \quad \text{as $i\to\infty$.}
\]
It follows that
\begin{eqnarray*}
\min_{\bm{u} \in U} \bm{u} \cdot \bm{\beta}^{\pi} &\overset{(a)}{=}& \min_{\bm{u} \in U}\lim_{i \to \infty} \bm{u} \cdot \bm{\zeta}_{T_{k_i}}^{\pi}\\
 &\overset{(b)}{=}& \min_{\bm{u} \in U}\lim_{k \to \infty} \bm{u}\cdot \bm{\zeta}_{T_k}^{\pi} \\
 &\overset{(c)}{=}& \min_{\bm{u} \in U} \liminf_{T\to \infty} \bm{u} \cdot \bm{\zeta}_T^{\pi} \\
&= &\min_{\bm{u} \in U}\liminf_{T\to \infty}  \bm{u}\cdot\bigg(\frac{1}{T} \sum_{t=1}^{T}\bm{s}^{\pi}(t)\bigg),
\end{eqnarray*}
where (a) uses the fact that 
if $\bm{x}_n \to \bm{x}$ then $\bm{c}\cdot \bm{x}_n \to \bm{c}\cdot \bm{x}$ for $\bm{c}$, $\bm{x}_n$, and $\bm{x} \in \mathbb{R}^l$, $l\geq 1$; (b) follows that if the limit of a sequence $\{z_{n}\}$ exists then all subsequences $\{z_{n_k}\}$ converge and $\lim_{k} z_{n_k}= \lim_n z_n$; (c) follows from Equation~\eqref{lim}. This completes the proof of the lemma.
\end{IEEEproof}

Combining Lemma~\ref{conv_hull} with Eqn. ~\eqref{bound3}, we have that there exists a vector $\bm{\beta}^{\pi} \in \text{conv}(\mathcal{S})$ such that
\begin{equation} \label{bdcutC}
\lambda^*-\epsilon \leq \min_{\bm{u} \in U} \bm{u} \cdot \bm{\beta}^{\pi}.
\end{equation} 
Maximizing the right hand side of Eqn. \ref{bdcutC} over all $\bm{\beta}^\pi \in \text{conv}(\mathcal{S})$, we have 
\begin{eqnarray}
\lambda^*-\epsilon \leq \max_{\bm{\beta} \in \text{conv}(\mathcal{S})} \bigg(\min_{\bm{u} \in U} \bm{u} \cdot \bm{\beta} \bigg)
\end{eqnarray}
Since the above inequality holds for any $\epsilon >0$, by taking $\epsilon \searrow 0$ and expanding the dot product, we have 
\begin{eqnarray}
\lambda^* \leq \max_{\bm{\beta} \in \text{conv}(\mathcal{S})} \bigg(\min_{\bm{u} \in \text{$U$: a proper cut}} \sum_{e\in E_{U}} c_{e} \beta_{e} \bigg).
\end{eqnarray}

%Since~\eqref{bdcutC} holds for all proper cuts $U$, we have
%\begin{equation} \label{ub}
% \lambda^* -\epsilon \leq \min_{\text{$U$: a proper cut}} \bm{u} \cdot \bm{\beta}_U^{\pi}.
%\end{equation}
%Lemma~\ref{conv_hull} shows that $ \beta_{U}^{\pi} \in \conv{\mathcal{S}}$ for any proper cut $U$ and policy $\pi \in \Pi$. Thus, from~\eqref{ub} we get
%\begin{equation} \label{conv}
% \lambda^*-\epsilon \leq \sup_{\bm{\beta} \in \conv{\mathcal{S}}} \bigg(\min_{\text{$U$: a proper cut}} \bm{u} \cdot \bm{\beta} \bigg).
%\end{equation}
%The minimum in~\eqref{conv} is the pointwise minimum of a finite number of linear functions, and is concave and continuous \cite{boyd}. Thus, there exists a vector $\bm{\beta}^* \in\conv{\mathcal{S}}$ that achieves the supremum in the compact set $\conv{\mathcal{S}}$~\cite{bertsekas_convex}. Also,~\eqref{conv} holds for all $\epsilon >0$. We conclude that
%\begin{equation} \label{capacity}
%\begin{split}
% \lambda^* &\leq \max_{\bm{\beta} \in \text{conv}(\mathcal{S})} \bigg(\min_{\text{$U$: a proper cut}} \bm{u} \cdot \bm{\beta} \bigg) \\
% &= \max_{\bm{\beta} \in \text{conv}(\mathcal{S})} \min_{\text{$U$: a proper cut}} \sum_{e\in E_{U}} c_{e} \beta_{e}.
% \end{split}
%\end{equation}

\subsection{Proof of Lemma \ref{in_order}} \label{in_order_proof}
Consider the wired network in Fig.~\ref{fig:701}, where all edges have unit capacity and there is no interference constraint. Node $a$ has total incoming capacity equal to two; thus, the broadcast capacity  of the network is upper bounded by $\lambda^{*}\leq 2$. In fact, the network has two edge-disjoint spanning trees as shown in Figures~\ref{fig:702} and~\ref{fig:703}. We can achieve the broadcast capacity $\lambda^{*}=2$ by routing odd-numbered and even-numbered packets along the trees $\mathcal{T}_{1}$ and $\mathcal{T}_{2}$, respectively.
\begin{figure}
\centering
\subfigure[A wired network with a directed cycle $a\to b\to c\to a$.]{
\includegraphics[width=0.17\textwidth]{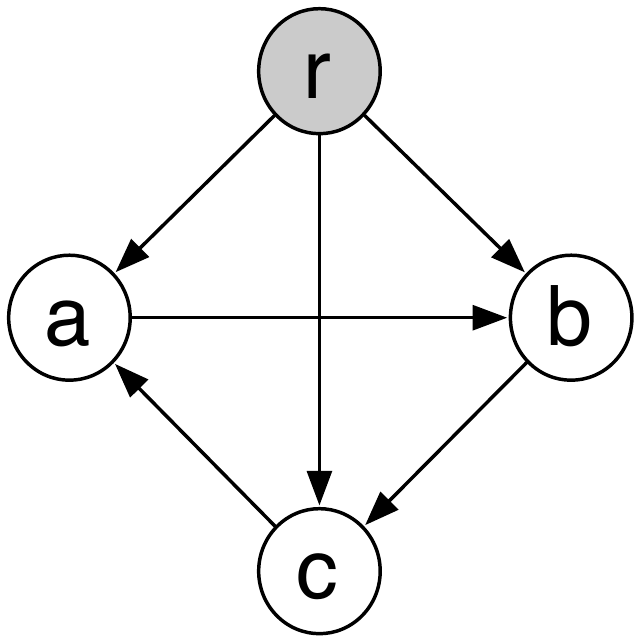}
\label{fig:701}
}
\subfigure[Tree $\mathcal{T}_{1}$]{
\includegraphics[width=0.17\textwidth]{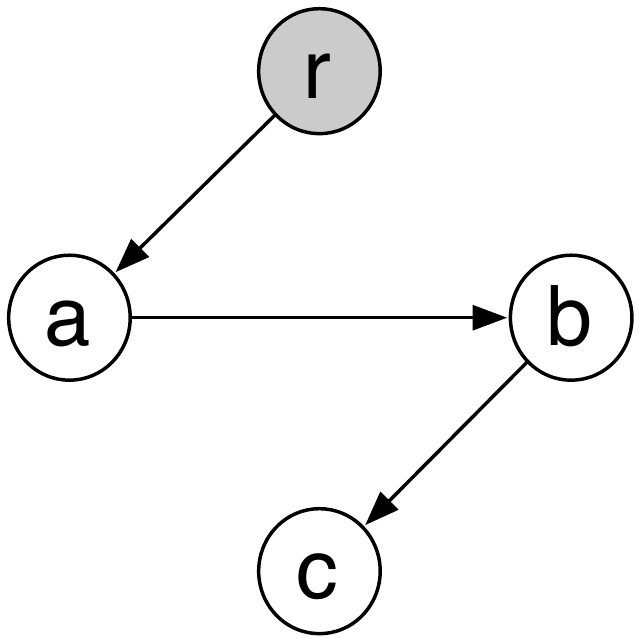}
\label{fig:702}}
\subfigure[Tree $\mathcal{T}_{2}$]{
\includegraphics[width=0.17\textwidth]{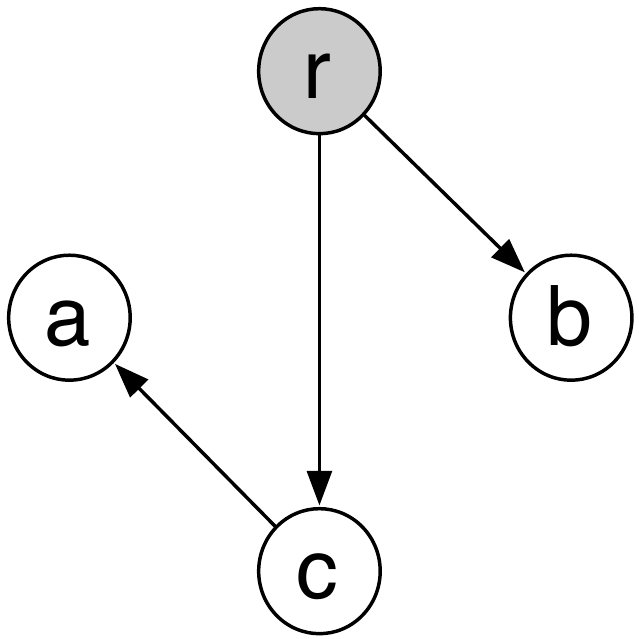}
\label{fig:703}
}
\caption{A wired network and its two edge-disjoint spanning trees that yield the broadcast capacity $\lambda^{*}=2$.}
\label{trees}
% \includegraphics[width=0.32\textwidth]{../Figures/Drawings/Activation2}
  %\captionof{figure}{\textbf{Broadcast Spanning Tree $T_1$}}
 %\includegraphics[width=0.32\textwidth]{../Figures/Drawings/Activation3}
  %\captionof{figure}{\textbf{Broadcast Spanning Tree $T_1$}}
\end{figure}

Consider a policy $\pi \in \Pi_{\text{in-order}}$ that ensures in-order delivery of packets to all nodes. Let $R_i(t)$ be the number of distinct packets received by node $i$ up to time $t$. Hence, node $i$ receives the set of packets $\{1, 2, \ldots, R_{i}(t)\}$ by time $t$ due to in-order packet delivery. Consider the directed cycle $a\to b \to c \to a$ in Fig.~\ref{fig:701}. The necessary condition for all links in the cycle to forward (non-duplicate) packets in slot $t$ is $R_a(t) > R_b(t) > R_c(t) > R_a(t)$, which is infeasible. Thus, there must exist an idle link in the cycle at every slot. Define the indicator variable $x_{e}(t)=1$ if link $e$ is idle in slot $t$ under policy $\pi$, and $x_{e}(t)=0$ otherwise. Since at least one link in the cycle is idle in every slot, we have
\[
 x_{(a, b)}(t) +  x_{(b, c)}(t) +  x_{(c, a)}(t) \geq 1.
\]
Taking a time average of the above inequality yields
\[
\frac{1}{T} \sum_{t=1}^{T} \big(  x_{(a, b)}(t) +  x_{(b, c)}(t) +  x_{(c, a)}(t) \big) \geq 1.
\]
Taking a $\limsup$ at both sides, we obtain
\begin{multline*}
\sum_{e\in\{(a, b), (b, c), (c, a)\}} \limsup_{T\to\infty} \frac{1}{T}\sum_{t=1}^{T} x_{e}(t) \\
\quad \geq \limsup_{T\to\infty} \sum_{e\in\{(a, b), (b, c), (c, a)\}} \frac{1}{T} \sum_{t=1}^{T} x_{e}(t) \geq 1.
\end{multline*}
The above inequality implies that
\begin{equation} \label{inorder_rate}
\max_{e\in\{(a, b), (b, c), (c, a)\}} \limsup_{T\to\infty} \frac{1}{T}\sum_{t=1}^{T} x_{e}(t) \geq \frac{1}{3}.
\end{equation}
Since the nodes $\{a,b,c\}$ are symmetrically located (i.e., the graph obtained by permuting the nodes $\{a, b, c\}$ is isomorphic to the original graph), without any loss of generality, we may assume that the link $e = (a, b)$ attains the maximum in~\eqref{inorder_rate}, i.e.,
\begin{equation} \label{eq:107}
 \limsup_{T\to\infty} \frac{1}{T}\sum_{t=1}^{T} x_{(a, b)}(t) \geq \frac{1}{3}.
\end{equation}
Noting that $x_{e}(t)=1$ if link $e$ is idle in slot $t$ and that node~$b$ receives packets from nodes $r$ and $a$, we can upper bound $R_{b}(T)$ by
\begin{align*}
R_b(T) &\leq \sum_{t=1}^{T} \big(1- x_{(r, b)}(t) +  1-x_{(a, b)}(t)\big) \\
&\leq \sum_{t=1}^{T} \big(2 - x_{(a, b)}(t)\big).
\end{align*}
It follows that
\[
\liminf_{T\to\infty} \frac{R_b(T)}{T} \leq 2 - \limsup_{T\to\infty} \frac{1}{T} \sum_{t=1}^{T} x_{(a, b)}(t) \leq \frac{5}{3},
\]
where the last inequality uses~\eqref{eq:107}. Thus, we have
\[
\min_{i \in V} \liminf_{T\to\infty} \frac{R_{i}(T)}{T} \leq \liminf_{T\to\infty} \frac{R_{b}(T)}{T} \leq \frac{5}{3},
\]
which holds for all policies $\pi\in\Pi_{\text{in-order}}$. Taking the supremum over the policy class $\Pi_{\text{in-order}}$ shows that the broadcast capacity $\lambda^{*}_{\text{in-order}}$ subject to the in-order packet delivery constraint satisfies
\[
\lambda^*_{\text{in-order}} = \sup_{\pi \in \Pi_{\text{in-order}}} \min_{i \in V} \liminf_{T\to\infty} \frac{R_{i}(T)}{T} \leq \frac{5}{3} < 2 = \lambda^*.
\]
i.e., the network broadcast capacity is strictly reduced by in-order packet delivery in the cyclic network in Fig.~\ref{fig:701}.

\subsection{Proof of Theorem~\ref{main_theorem}} \label{main_theorem_proof}

We present the proof in three steps. First, using the dynamics of $X_{j}(t)$ in~\eqref{bnd2}, we derive an expression of the one-slot Lyapunov drift using quadratic Lyapunov functions. Second, we design a useful stationary randomized policy that yields $\epsilon$-optimal broadcast throughput; this policy is useful to show that the system $\bm{X}(t)$, under the optimal broadcast policy $\pi^*$, is strongly stable for all arrival rates $\lambda < \lambda^*$.  Third, based on the above analysis, we show that the policy $\pi^*$ is a throughput-optimal broadcast policy for any underlying network graph which is a DAG.

\begin{lemma} \label{algebra}
If we have
\begin{equation} \label{eq:108}
Q(t+1)\leq  (Q(t)-\mu(t))^+ + A(t) 
\end{equation}
where all the variables are non-negative and $(x)^+ = \max\{x,0\}$, then
\[
Q^2(t+1) - Q^2(t) \leq \mu^2(t) + A^2(t) + 2Q(t)(A(t)-\mu(t)).
\]
\end{lemma}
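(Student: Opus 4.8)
### Proof Proposal for Lemma~\ref{algebra}

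\textbf{Overall strategy.} This is a standard "squaring trick" inequality from Lyapunov drift analysis. The plan is to start from the hypothesis \eqref{eq:108}, square both sides, and then bound the right-hand side by dropping the $(\cdot)^+$ and reorganizing terms. The only mild subtlety is handling the $(\cdot)^+$ operator correctly; I would use the elementary fact that $\big((x)^+\big)^2 \leq x^2$ for any real $x$, which follows because $(x)^+$ equals either $x$ or $0$, and in both cases its square is at most $x^2$.

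\textbf{Key steps in order.} First, since all quantities are non-negative, both sides of \eqref{eq:108} are non-negative, so squaring preserves the inequality:
\[
Q^2(t+1) \leq \Big(\big(Q(t)-\mu(t)\big)^+ + A(t)\Big)^2.
\]
Second, expand the square on the right:
\[
Q^2(t+1) \leq \Big(\big(Q(t)-\mu(t)\big)^+\Big)^2 + 2A(t)\big(Q(t)-\mu(t)\big)^+ + A^2(t).
\]
Third, apply $\big((Q(t)-\mu(t))^+\big)^2 \leq (Q(t)-\mu(t))^2$ to the first term, and apply $(Q(t)-\mu(t))^+ \leq Q(t)$ (valid because $\mu(t) \geq 0$ and, if the quantity is negative, $0 \leq Q(t)$ as well) to bound the cross term by $2A(t)Q(t)$. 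This gives
\[
Q^2(t+1) \leq (Q(t)-\mu(t))^2 + 2A(t)Q(t) + A^2(t).
\]
Fourth, expand $(Q(t)-\mu(t))^2 = Q^2(t) - 2Q(t)\mu(t) + \mu^2(t)$, substitute, and subtract $Q^2(t)$ from both sides to obtain
\[
Q^2(t+1) - Q^2(t) \leq \mu^2(t) + A^2(t) + 2Q(t)\big(A(t) - \mu(t)\big),
\]
which is exactly the claimed inequality.

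\textbf{Main obstacle.} There is no real obstacle here; this is a routine computation. The one place to be slightly careful is the bound on the cross term $2A(t)(Q(t)-\mu(t))^+ \leq 2A(t)Q(t)$: this needs $A(t) \geq 0$ together with $(Q(t)-\mu(t))^+ \leq (Q(t))^+ = Q(t)$, both of which hold under the stated non-negativity assumptions. I would state these two elementary facts explicitly and then the chain of inequalities follows mechanically.
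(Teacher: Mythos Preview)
Your proposal is correct and follows essentially the same route as the paper's proof: square both sides, expand, bound $\big((Q-\mu)^+\big)^2 \le (Q-\mu)^2$ and $(Q-\mu)^+ \le Q$, then rearrange. The paper is slightly terser but uses exactly the same two elementary bounds you identify.
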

\begin{IEEEproof}
Squaring both sides of~\eqref{eq:108} yields
\begin{align*}
&Q^2(t+1) \\
&\leq  \big((Q(t)-\mu(t))^+\big)^2 + A^2(t) + 2 A(t)(Q(t)-\mu(t))^+\\
&\leq  (Q(t)-\mu(t))^2 + A^2(t) + 2 A(t)Q(t),
\end{align*}
where we use the fact that $x^2 \geq {(x^+)}^2$, $Q(t) \geq 0$, and $\mu(t) \geq 0$. Rearranging the above inequality finishes the proof.
\end{IEEEproof}
Applying Lemma~\ref{algebra} to the dynamics~\eqref{bnd2} of $X_{j}(t)$ yields, for each node $j\neq r$,
\begin{multline} \label{eq:109}
X_j^2(t+1) - X_j^2(t) \\
\leq  B(t) + 2 X_j(t) \big(\sum_{m\in V}\mu_{mi_{t}^*}(t)-\sum_{k\in V} \mu_{kj}(t)\big),
\end{multline}
where $B(t)\leq \mu^2_{\max}+ \max\{a^2(t),\mu^2_{\max}\} \leq  (a^2(t) + 2\mu^2_{\max})$, $a(t)$ is the number of exogenous packet arrivals in a slot, and $\mu_{\max} \triangleq \max_{e\in E} c_e$ is the maximum capacity of the links.
We assume the arrival process $a(t)$ has bounded second moments; thus, there exists a finite constant $B>0$ such that $\mathbb{E}[B(t)] \leq \mathbb{E}\big(a^2(t)\big) + 2\mu^2_{\max} < B$.

We define the quadratic Lyapunov function $L(\bm{X}(t)) = \sum_{j\neq r} X_j^2(t)$. From~\eqref{eq:109}, the one-slot Lyapunov drift $\Delta(\bm{X}(t))$ satisfies
\begin{align} \label{drift2}
&\Delta(\bm{X}(t)) \triangleq  \mathbb{E}[L(\bm{X}(t+1) - L(\bm{X}(t)) \mid \bm{X}(t)] \notag \\
&= \mathbb{E}\big[\sum_{j\neq r} \big(X_j^2(t+1) - X_j^2(t) \big) \mid \bm{X}(t)\big] \notag \\
&\leq B|V| +2  \sum_{j\neq r} X_{j}(t) \mathbb{E}\big[\sum_{m\in V}\mu_{mi_{t}^*}(t)-\sum_{k\in V} \mu_{kj}(t) \mid \bm{X}(t)\big] \notag \\
&= B|V| - 2 \sum_{(i,j)\in E} \mathbb{E}[\mu_{ij}(t)\mid\bm{X}(t)] \big( X_j(t) - \sum_{k\in K_{j}(t)} X_k(t) \big) \notag \\
&= B|V|- 2 \sum_{(i,j)\in E} \mathbb{E}[\mu_{ij}(t)\mid \bm{X}(t)] \, W_{ij}(t),
\end{align}
where $K_{j}(t)$ and $W_{ij}(t)$ are defined in ~\eqref{eq:110} and~\eqref{eq:111}, respectively. To emphasize that the evaluation of the inequality~\eqref{drift2} depends on a control policy $\pi\in\Pi^{*}$, we rewrite~\eqref{drift2} as
\begin{equation} \label{eq:112}
\Delta^{\pi}(\bm{X}(t)) \leq B|V|- 2 \sum_{(i,j)\in E} \mathbb{E}[\mu^{\pi}_{ij}(t)\mid \bm{X}(t)] \, W_{ij}(t).
\end{equation}
Our optimal broadcast policy $\pi^{*}$ is chosen to minimize the drift on the right-hand side of~\eqref{eq:112} among all policies in $\Pi^{*}$.

Next, we construct a randomized scheduling policy $\pi^{\text{RAND}} \in\Pi^{*}$. Let $\bm{\beta}^*\in\conv{\mathcal{S}}$ be the vector that attains the outer bound on the broadcast capacity $\lambda^{*}$ in Theorem~\ref{broadcast_ub}, i.e.,
\[
\bm{\beta}^* \in \arg \max_{\bm{\beta} \in \conv{\mathcal{S}}} \min_{\text{$U$: a proper cut}} \sum_{e\in E_{U}} c_{e} \beta_{e}.
\]
From Caratheodory's theorem~\cite{matouvsek2002lectures}, there exist at most $(|E|+1)$ link-activation vectors $\bm{s}_l\in \mathcal{S}$ and the associated non-negative scalars $\{p_l\}$ with $\sum_{l=1}^{|E|+1}p_l=1$, such that 
\begin{equation} \label{beta_star}
\bm{\beta}^*= \sum_{l=1}^{|E|+1} p_l \bm{s}_l.
\end{equation}
Hence, from Theorem~\ref{broadcast_ub} we have,
\begin{equation} \label{bc_bound}
\lambda^* \leq \min_{\text{$U$: a proper cut}} \sum_{e\in E_{U}} c_{e} \beta_{e}^{*}.
\end{equation}
Suppose that the exogenous packet arrival rate $\lambda$ is strictly less than the broadcast capacity $\lambda^*$. There exists an $\epsilon >0$ such that $\lambda +\epsilon \leq \lambda^{*}$. From~\eqref{bc_bound}, we have
\begin{equation} \label{eq:113}
\lambda+\epsilon \leq \min_{\text{$U$: a proper cut}} \sum_{e\in E_{U}} c_{e} \beta_{e}^{*}.
\end{equation}
For any network node $v\neq r$, consider the proper cuts $U_{v} = V\setminus \{v\}$. We have, from~\eqref{eq:113}, that
\begin{equation} \label{capacity_exceeding}
\lambda + \epsilon \leq  \sum_{e\in E_{U_{v}}} c_{e} \beta_{e}^{*}, \ \forall v\neq r.
\end{equation}
Since the underlying network topology $\mathcal{G}=(V, E)$ is a DAG, there exists a topological ordering  of the network nodes so that: $(i)$ the nodes can be labelled serially as $\{v_{1}, \ldots, v_{|V|}\}$, where $v_{1}=r$ is the source node with no in-neighbours and $v_{|V|}$ has no outgoing neighbours and $(ii)$ all edges in $E$ are directed from $v_i \to v_j$, $i<j$ ~\cite{algorithms};  From~\eqref{capacity_exceeding}, we define $q_{l}\in[0, 1]$ for each node $v_{l}$ such that 
\begin{equation} \label{q_prob}
q_{l}\, \sum_{e\in E_{U_{v_{l}}}} c_{e} \beta_{e}^{*} = \lambda + \epsilon \frac{l}{|V|},\  l=2, \ldots , |V|.
\end{equation}
Consider the randomized broadcast policy $\pi^{\text{RAND}} \in \Pi^{*}$ working as follows: (i) it selects the feasible link-activation vector $\bm{s}(t) = \bm{s}_{l}$ with probability $p_{l}$ in~\eqref{beta_star}, $l=1,2, \ldots, |E|+1$, in every slot $t$; (ii) for each selected link $e = (\cdot, v_{l})$ of node $v_{l}$ such that $s_{e}(t)=1$, the link $e$ is activated independently with probability $q_{l}$; (iii) activated links are used to forward packets, subject to the constraints that define the policy class $\Pi^{*}$ (i.e., in-order packet delivery and that a network node is only allowed to receive packets that have been received by all of its in-neighbors). Note that this randomized policy is independent of the state $\bm{X}(t)$.  Since each network node $j$ is relabeled as $v_{l}$ for some $l$, from~\eqref{q_prob} we have, for each node $j\neq r$, the total expected incoming transmission rate satisfies
\begin{align} 
\sum_{i: (i, j)\in E}\mathbb{E}[\mu^{\pi^{\text{RAND}}}_{ij}(t)\mid\bm{X}(t)] &=\sum_{i: (i,j)\in E} \mathbb{E}[\mu^{\pi^{\text{RAND}}}_{ij}(t)]  \notag \\
&= q_{l}\, \sum_{e\in E_{U_{v_{l}}}} c_{e} \beta_{e}^{*} \notag \\
&=\lambda + \epsilon \frac{l}{|V|}. \label{rate_comp1}
\end{align}
Equation~\eqref{rate_comp1} shows that the randomized policy $\pi^{\text{RAND}}$ provides each network node $j\neq r$ with the total expected incoming capacity strictly larger than the packet arrival rate $\lambda$ via proper random link activations. According to the abuse of notation in~\eqref{bnd2}, at the source node $r$ we have
\begin{equation} \label{rate_comp2}
\sum_{i:(i,r)\in E} \mathbb{E}[\mu^{\pi^{\text{RAND}}}_{ir}(t)\mid\bm{X}(t)] = \mathbb{E}[\sum_{i:(i,r)\in E} \mu^{\pi^{\text{RAND}}}_{ir}(t)] = \lambda.
\end{equation}
From~\eqref{rate_comp1} and~\eqref{rate_comp2}, if node $i$ appears before node $j$ in the aforementioned topological ordering, i.e., $i = v_{l_{i}} < v_{l_{j}} = j$ for some $l_{i} < l_{j}$, then
\begin{align} 
&\sum_{k:(k,i)\in E}\mathbb{E}[\mu^{\pi^{\text{RAND}}}_{ki}(t)\mid\bm{X}(t)]- \sum_{k:(k,j)\in E}\mathbb{E}[\mu^{\pi^{\text{RAND}}}_{kj}(t)\mid\bm{X}(t)]  \notag \\
&\leq -\frac{\epsilon}{|V|}. \label{rate_comparison_final}
\end{align}
%In this final section of the proof, we will show that the process  $\sum_{(i,j)\in E}Q_{ij}(t)$ is rate stable when driven by the policy $\pi^*$ given in Eqn. \ref{pi_star_def}. \\
The drift inequality~\eqref{drift2} holds for any policy $\pi \in \Pi^*$. Our broadcast policy $\pi^{*}$ observes the system states $\bm{X}(t)$ and seek to minimize the drift at every slot. Comparing the actions taken by the policy $\pi^{*}$ with those by the randomized policy $\pi^{\text{RAND}}$ in slot $t$ in~\eqref{drift2}, we have
\begin{align}
&\Delta^{\pi^*}(\bm{X}(t)) \leq B|V|- 2 \sum_{(i,j)\in E}\mathbb{E}\big[\mu^{\pi^{*}}_{ij}(t) \mid\bm{X}(t)] W_{ij}(t) \notag \\
&\leq B|V|- 2 \sum_{(i,j)\in E}\mathbb{E}\big[\mu^{\pi^{\text{RAND}}}_{ij}(t) \mid\bm{X}(t)] W_{ij}(t) \notag \\
&=  B|V| +2 \sum_{j\neq r} X_j(t) \bigg(\sum_{m\in V}\mathbb{E}\big[\mu^{\pi^{\text{RAND}}}_{mi_{t}^*}(t)\mid\bm{X}(t)\big] \notag \\
&\quad -\sum_{k\in V} \mathbb{E}\big[\mu^{\pi^{\text{RAND}}}_{kj}(t)|\bm{X}(t)\big]\bigg) \notag \\
&\leq B|V| - \frac{2\epsilon}{|V|} \sum_{j\neq r} X_{j}(t). \label{eq:114}
\end{align}
Note that $i_{t}^*= \arg \min_{i\in \text{In}(j)} Q_{ij}(t)$ for a given node $j$. Since node $i_{t}^{*}$ is an in-neighbour of node $j$, $i_{t}^{*}$ must lie before $j$ in any topological ordering of the DAG. Hence, the last inequality of~\eqref{eq:114} follows directly from ~\eqref{rate_comparison_final}. Taking expectation in~\eqref{eq:114} with respect to $\bm{X}(t)$, we have
\[
\mathbb{E}\big[L(\bm{X}(t+1))\big]-\mathbb{E}\big[L(\bm{X}(t))\big] \leq B|V| -\frac{2\epsilon}{|V|}\mathbb{E}||\bm{X}(t)||_1,
\]
where $||\cdot ||_1$ is the $\ell_1$-norm of a vector. Summing the above over $t=0, 1,2,\ldots T-1$ yields
\[
\mathbb{E}\big[L(\bm{X}(T))\big]-\mathbb{E}\big[L(\bm{X}(0))\big] \leq B|V|T -\frac{2\epsilon}{|V|}\sum_{t=0}^{T-1}\mathbb{E}||\bm{X}(t)||_1.
\]
Dividing the above by $2T\epsilon/|V|$ and using $L(\bm{X}(t))\geq 0$, we have 
\begin{eqnarray*}
\frac{1}{T}\sum_{t=0}^{T-1}\mathbb{E}||\bm{X}(t)||_1 \leq \frac{B|V|^2}{2\epsilon} + \frac{|V|\,\mathbb{E}[L(\bm{X}(0))]}{2T\epsilon}
\end{eqnarray*}
Taking a $\limsup$ of both sides yields
\begin{eqnarray} \label{strong_stability}
\limsup_{T \to \infty}\frac{1}{T}\sum_{t=0}^{T-1} \sum_{j\neq r} \mathbb{E}[X_{j}(t)]  \leq \frac{B|V|^2}{2\epsilon}
\end{eqnarray} 
which implies that all virtual-queues $X_{j}(t)$ are strongly stable.

Next, we show that the strong stability of the virtual queues $X_{j}(t)$ implies that the policy $\pi^*$ achieves the broadcast capacity $\lambda^{*}$, i.e., for all arrival rates $\lambda < \lambda^*$, we have 
\[
 \lim_{T\to \infty} \frac{R_j(T)}{T} = \lambda,  \ \forall j.
\]
Equation~\eqref{bnd2} shows that the virtual queues $X_{j}(t)$ have bounded departures (due to the finite link capacities). Thus, strong stability of $X_{j}(t)$ implies that all virtual queues $X_{j}(t)$ are rate stable~\cite[Theorem~$2.8$]{neely2010stochastic}, i.e., $\lim_{T\to \infty} X_j(T)/T = 0, a.s.$ for all $j$. It follows that,
\begin{equation} \label{rate_stability_of_X}
\lim_{T\to \infty} \frac{\sum_{j\neq r}X_j(T)}{T} = 0, \hspace{15pt} \text{w.p.} \hspace{3pt}1
\end{equation}
Now consider any node $j\neq r$ in the network. We can construct a simple path $\sigma (r = u_n \to u_{n-1} \ldots \to u_1 = j)$ from the source node $r$ to the node $j$ by running the following algorithm on the underlying graph $\mathcal{G}(V,E)$. 
 \begin{algorithm} 
\caption{$r\to j$ Path Construction Algorithm}
\begin{algorithmic}[1] 
 \REQUIRE Graph $\mathcal{G}(V,E)$, node $j\in V$
 %\STATE Global variable : $\bm{R}(t)$, Local variables : $\bm{Q}(t),\bm{Q}^{G}(t),\mathcal{A}(t),\bm{X}(t), \bm{W}(t)$.
 \STATE $i \gets 1$
 \STATE $u_i\gets j$
 %\STATE Set $t \gets 1$, $\bm{R}(0)=\bm{0}$.
 \WHILE{$u_i \neq r$} 
 \STATE $u_{i+1} \gets \arg \min_{k\in\text{In}(u_{i})} Q_{ku_i}(t)$; ties are broken arbitrarily.
 \STATE $i \gets i+1$
 \ENDWHILE
 \end{algorithmic}
 \end{algorithm} 
 
This algorithm chooses the parent of a node $u$ in the path $\sigma$ as the one that has the least relative packet deficit as compared to $u$. Since the underlying graph $\mathcal{G}(V,E)$ is a connected DAG (i.e., there is a path from the source to every other node in the network), the above path construction algorithm always terminates with a path $\sigma(r\to j)$. The number of distinct packets received by node $j$ up to time $T$ can be written as a telescoping sum of relative packet deficits along the path $\sigma$, i.e.,
\begin{align}
R_j(T) &= R_{u_1}(T) \notag \\
&= \sum_{i=1}^{n-1}\big(R_{u_i}(T)-R_{u_{i+1}}(T)\big) +R_{u_n}(T)  \notag \\
&= -\sum_{i=1}^{n-1} X_{u_i}(T) + R_r(T) \notag \\
&= -\sum_{i=1}^{n-1} X_{u_i}(T) + \sum_{t=0}^{T-1} A(t), \label{eq:116}
\end{align}
where the third equality follows the observation that (see~\eqref{eq:115})
\[
X_{u_{i}}(T) = Q_{u_{i+1}u_{i}}(T) = R_{u_{i+1}}(T) - R_{u_{i}}(T).
\]
Using $\sum_{i=1}^{n-1} X_{u_{i}}(t) \leq \sum_{j\neq r} X_{j}(t)$,~\eqref{eq:116} and that $X_{j}(t)$ are non-negative, we have, for each node $j$,
\[
\frac{1}{T}\sum_{t=0}^{T-1} A(t) -\frac{1}{T}\sum_{j\neq r} X_{j}(T) \leq \frac{1}{T} R_j(T) \leq \frac{1}{T}\sum_{t=0}^{T-1} A(t).
\]
Taking a limiting time average and the strong law of large numbers for the arrival process, we have
\[
 \lim_{T\to \infty} \frac{R_j(T)}{T} = \lambda, \, \forall j. \hspace{10pt} \text{w.p.}\hspace*{5pt}1
\]
This concludes the proof.

\subsection{Proof of Lemma~\ref{lem:701}} \label{pf:701}
We regard the DAG $G$ as a wired network in which all links can be activated simultaneously. Theorem~\ref{main_theorem} and~\eqref{eq:603} show that the broadcast capacity of the wired network $G$ is
\begin{align}
\lambda^{*} = \lambda_{\text{DAG}} = \min_{\text{$U$: a proper cut}}\, \sum_{e\in E_{U}} c_{e}  &= \min_{\{U_v, v\neq r\} }\, \sum_{e\in E_{U_{v}}} c_{e} \notag \\
&= \min_{v\in V\setminus \{r\}} d_{\text{in}}(v), \label{eq:605}
\end{align}
where $U_{v} = V \setminus \{v\}$ is the proper cut that separates node $v$ from the network, $E_{U_{v}}$ is the set of incoming links of node $v$, and the last equality follows that the maximizer in~\eqref{eq:603} is the all-one vector $\bm{\beta} = \bm{1}$ and that all links have unity capacity. Edmond's Theorem~\cite{edmonds} states that the maximum number of disjoint spanning trees in the directed graph $G$ is
\begin{equation} \label{ed1}
k^* =\min_{\text{$U$: a proper cut}} \sum_{e\in E_{U}} c_e.
\end{equation}
Combining~\eqref{eq:605} and~\eqref{ed1} completes the proof.

\end{appendix}

\end{document}